\documentclass[12pt]{article}

\usepackage{amsmath}
\usepackage{amssymb}
\usepackage{mathtools}
\usepackage{ifthen}
\usepackage{amsthm}
\usepackage{graphicx}
\usepackage{epstopdf}
\usepackage{float}
\usepackage{placeins}

\newtheorem{lemma}{Lemma}
\newtheorem{theorem}{Theorem}

\newtheorem{proposition}{Proposition}

\newcommand{\ds}{\displaystyle}
\newcommand{\J}{\ensuremath{\mathcal{J}}}

\newcommand\T{\ensuremath{\mathcal{T}}}
\newcommand{\Z}{\ensuremath{\mathbb Z}}
\newcommand\D{\ensuremath{\mathcal{D}}}
\newcommand\F{\ensuremath{\mathcal{F}}}

\newcommand\R{\ensuremath{\mathbb{R}}}

\newcommand{\muinf}{\ensuremath{\mu_{\inf}}}

\newcommand{\by}{\boldsymbol{y}}
\newcommand{\bp}{\boldsymbol{p}}
\newcommand{\bpi}{\boldsymbol{\pi}}
\newcommand{\bn}{\boldsymbol{n}}
\newcommand{\be}{\boldsymbol{e}}
\newcommand{\bh}{\boldsymbol{h}}

\newcommand{\bxi}{\boldsymbol{\xi}}
\newcommand{\bz}{\boldsymbol{z}}
\newcommand{\bX}{\boldsymbol{X}}
\newcommand{\bmu}{\boldsymbol{\mu}}

\newcommand{\bphi}{\boldsymbol{\phi}}

\newcommand{\myprime}{\,\prime}
\newcommand{\qit}{\text{qit}}
\newcommand{\Id}{\operatorname{Id}}

\newcommand{\norm}[1]{\left\lVert #1 \right\rVert}
\newcommand{\BS}{\ensuremath{\ell_1(\Z)}}

\newcommand{\bzero}{\boldsymbol{0}}
\newcommand{\ts}{\textstyle}
\newcommand{\bx}{\boldsymbol{x}}
\newcommand{\mw}{m_{\text{w}}}
\newcommand{\vw}{m_{2,\text{w}}}
\newcommand{\sigmasquaredw}{\sigma^2_{\text{w}}}

\title{Mutation--selection balance on an infinite trait space: 
confinement, drift and equilibrium}

\author{Phil Pollett}

\begin{document}
\maketitle

\begin{center}
\begin{minipage}{14cm}
{\bf Abstract} \ \ 
\setlength{\parindent}{1.5em}
We study a trait-structured population model incorporating mutation,
selection and density-dependent regulation on a countably infinite trait
space. The underlying stochastic process is a continuous-time
Markov chain in which individuals reproduce at a
trait-independent rate, offspring traits are determined by a mutation
kernel on $\Z$, mortality depends on trait, and births are
progressively suppressed as the population approaches a fixed population
ceiling. Using results for density-dependent Markov population processes
with countably many types, we derive a deterministic approximation in
the form of an infinite system of nonlinear differential equations.

We establish existence and positive invariance of solutions, and
investigate the equilibrium structure of the deterministic system. A
fundamental distinction emerges between bounded and confining mortality
profiles. When mortality remains bounded, mutation may continually
transport mass through the trait space and a stationary trait distribution
need not exist. In contrast, when mortality increases without bound as the
absolute value of the trait index becomes large, the operator
$L=D^{-1}P$, where $P$ and $D$ govern mutation and mortality, is
compact. By combining compactness with Kre\u{\i}n-Rutman theory for
compact positive operators on Banach lattices, we show that $L$ has an
algebraically simple principal eigenvalue with a strictly positive
eigenvector, and we derive a threshold condition for the existence of a
non-zero equilibrium. In this regime the equilibrium is unique, and its
trait distribution is determined by the principal eigenvector of $L$.
Numerical experiments support the theoretical results and illustrate the
contrasting behaviours associated with bounded and confining mortality
profiles.
\end{minipage}
\end{center}

\smallskip
\noindent {\bf MSC 2020:} Primary 92D25; 
Secondary 47B65, 60J27.

\medskip
\noindent {\bf Keywords:} 
Mutation--selection balance; 
trait-structured populations; 
density dependence; 
infinite trait space; 
population persistence; 
evolutionary drift; 
Kre\u{\i}n--Rutman theorem

\section{Introduction}
\label{sec:intro}

Mutation generates heritable variation, while natural selection filters
that variation through differential survival and reproduction.
Understanding the balance between these forces is a central problem in
population genetics, evolutionary biology, and mathematical ecology
\cite{Burger2003,CrowKimura1970,Ewens2004,TravisEtAl2023}. Classical
mutation--selection theory seeks to determine how mutation and selection
shape the long-term distribution of traits within a population, and
whether stable equilibria emerge. The subject has
its origins in the work of Fisher, Haldane and Wright; modern
mathematical accounts include Crow and Kimura~\cite{CrowKimura1970},
Ewens~\cite{Ewens2004}, and B\"urger~\cite{Burger2003}.

Most mutation--selection models are formulated on finite trait spaces,
where matrix analysis, especially the Perron--Frobenius theorem,
provides a natural framework for studying equilibrium trait
distributions and identifying persistence thresholds. On an infinite
trait space, however, mutation may continually transport mass towards
increasingly distant traits, so that a stationary trait distribution
might not exist. Whether selection can counteract this dispersal
depends not only on its overall strength, but also on its behaviour in
the tails of the trait space.

We investigate this question in a population model combining mutation,
selection and demographic regulation. Individuals possess traits
in~$\Z$, although the operator-theoretic arguments in particular regimes
extend to more general countable trait spaces. Individuals reproduce at
a trait-independent rate~$\lambda$, offspring traits are determined by a
mutation kernel $(p_i,\,i\in\Z)$, and selection acts through
trait-dependent per-capita death rates $(\mu_i,\, i\in \Z)$. Population
growth is regulated by there being a global population ceiling~$N$, with
births progressively suppressed as the total population approaches~$N$.
The resulting stochastic process is a continuous-time Markov chain with
an absorbing state corresponding to population extinction.

The present work also connects with a broader literature on the
interaction between ecological and evolutionary processes. While
classical mutation--selection theory has focused primarily on the
maintenance of variation, mutation load, and equilibrium trait
frequencies \cite{Burger2003,CrowKimura1970,Ewens2004}, increasing
attention has been devoted to systems in which evolutionary and
demographic processes operate on comparable time scales. Examples
include density-dependent selection
\cite{Christiansen,TravisEtAl2023}, stochastic population-genetic models
with variable population size~\cite{Parsons2010}, adaptive-dynamics
approaches linking individual-level stochastic models to macroscopic
evolutionary behaviour~\cite{ChampagnatFerriereMeleard}, and studies of
density-dependent evolution in fluctuating
environments~\cite{Lande2009}. Related questions arise in
structured-population theory, where demographic, physiological, spatial,
or phenotypic heterogeneity plays a central role in population
dynamics~\cite{Diekmann2021,Webb1985}.

Within the context of this literature, the present model places
particular emphasis on the combination of two features. First, mutation,
selection and density regulation are represented within a single
stochastic population model, and are retained explicitly in an
approximating deterministic model. Second, the trait space is countably
infinite, allowing mutation to generate continual movement through the trait
space. This separates two questions that are closely connected in
finite-dimensional models: whether the population persists
demographically, and whether its trait distribution remains localized.

From a stochastic-process perspective, the model belongs to the class
of density-dependent population processes studied by
Kurtz~\cite{Kur70,Kur71}. Those seminal papers established laws of large
numbers under which suitably rescaled Markov population processes
converge, over finite time intervals, to solutions of deterministic
differential equations. Because the trait space in our model is
countably infinite, the original framework is not directly applicable.
We instead appeal to the results of Barbour and
Luczak~\cite{BL12b}, who developed an approximation theory for
population processes with countably many types and obtained explicit
error bounds.

This leads to a principal theme of the paper: demographic regulation
alone need not confine the population in the trait space. When mortality
is bounded, mutation may continually disperse mass towards distant
traits, preventing the formation of a stationary trait distribution even
when the aggregate population persists. In contrast, under confining
mortality, extreme traits become increasingly disadvantageous. This
distinction between unconfined and confined evolutionary dynamics is
expressed mathematically through the positive operator $L=D^{-1}P$,
where $P$ governs mutation and $D=\operatorname{diag}(\mu_i)$ governs
mortality. In the confining case $L$ is compact. If, additionally,
mortality is bounded away from $0$ and the mutation kernel is
irreducible, a version of the Kre\u{\i}n--Rutman theorem yields an
algebraically simple principal eigenvalue, and a strictly positive
eigenvector that is unique up to scalar multiples. We show that the
spectral radius $r(L)$ determines a threshold analogous to the basic
reproduction number of epidemic theory: a positive equilibrium exists if
and only if $\lambda r(L)>1$. Whenever this condition holds, the
equilibrium is unique and its trait distribution is determined by the
principal eigenvector of~$L$. Thus, persistence of a localized trait
distribution depends not merely on population regulation, but on the
ability of selection to confine mutation-driven dispersal.

The paper is organised as follows. Section~\ref{sec:model} introduces
the stochastic model, derives its deterministic approximation, and
establishes existence, uniqueness and positive invariance of its
solutions. Sections~\ref{sec:noselection} and~\ref{sec:nomutation}
analyse two special cases that isolate the effects of mutation and
selection. Section~\ref{sec:generalcase} develops the spectral theory of
the operator~$L$ and establishes the equilibrium threshold result.
Numerical examples illustrate directional, stabilizing and disruptive
selection. Section~\ref{sec:discussion} discusses the biological
implications of the results and several directions for future work.
The longer proofs are collected in the Appendix, in the order in which 
the corresponding results appear in the text.

\section{The model}
\label{sec:model}

Each individual in the population has one of a set $\Z$ of traits. Let
$n_i(t)$ be the number of individuals with trait $i\in \Z$ at time
$t\geq 0$. Individuals with
trait~$k$ give birth at per-capita rate~$\lambda$, the offspring having
trait $k+l$ (genetic distance~$l$) with probability $p_l$, $l\in \Z$
($\sum_l p_l=1$, $p_0\leq 1$), and die at per-capita rate $\mu_k$. We
model selection only through the death rates. For example, $\mu_i$ might
be an increasing function of $|i|$, so that traits near $0$ are more
advantageous for survival. We call $\bp=(p_i,\, i\in \Z)$ 
the {\em mutation kernel\/}, and $\bmu=(\mu_i,\, i\in \Z)$ the
{\em mortality profile\/}. 
Suppose also that the total number $\sum_{i\in \Z} n_i$ of
individuals is bounded above by a fixed population ceiling~$N$. Births
are progressively suppressed as the population approaches this ceiling,
vanishing when the ceiling is reached. Mutation continually replenishes
traits that would otherwise be lost through selection.

Let $(\bn(t),\, t\geq 0)$, with $\bn=(n_i,\,
i\in \Z)$, be a Markov chain in continuous time taking values in
the countable state space
$S=\{\bn\in \Z_+^{\Z}: \sum_i n_i\leq N\}$.
Write $|\bn|:=\sum_{j\in\Z}n_j$ for the total population size.
Then, the non-zero transition rates are 
\begin{equation}
q(\bn,\bn+\be_i)
=
\lambda\left(1-\frac{|\bn|}{N}\right)
\sum_{k\in\Z}n_kp_{i-k},
\qquad
q(\bn,\bn-\be_i)=\mu_i n_i,
\qquad i\in\Z,
\label{TheModel}
\end{equation}
where $\be_i$ is the unit vector with $1$ in its $i$-th position.
For each $\bn\in S$, only finitely many coordinates $n_i$ are non-zero.
Consequently, for every $\bn\in S$, both
$$
\sum_{i\in\Z}q(\bn,\bn+\be_i)
=
\lambda\left(1-\frac{|\bn|}{N}\right)|\bn|
\quad\text{and}\quad
\sum_{i\in\Z}q(\bn,\bn-\be_i)
=
\sum_{i\in\Z}\mu_i n_i
$$
are finite, even when the mortality profile is unbounded. Thus, the
displayed transition rates define a stable $Q$-matrix on~$S$. However,
when the mortality profile is unbounded, the total transition rate need
not be uniformly bounded over~$S$, so stability alone does not establish
non-explosivity of the associated minimal process. To verify
non-explosivity, observe that the aggregate birth rate out of state
$\bn$ is bounded above by $\lambda N$. Consequently, the number $B(t)$
of births by any finite time $t$ is stochastically dominated by a
Poisson random variable with mean $\lambda Nt$, and hence is almost
surely finite. If $D(t)$ denotes the number of deaths by time $t$, then
$|\bn(t)|=|\bn(0)|+B(t)-D(t)\geq 0$, and therefore
$D(t)\leq|\bn(0)|+B(t)$. It follows that the total number of transitions
by time $t$ satisfies $B(t)+D(t)\leq|\bn(0)|+2B(t)<\infty$, almost
surely. Thus, the minimal process is non-explosive.

Observe that $\bzero$ is an absorbing state, corresponding to population
extinction. We assume that $\bzero$ is accessible from every state in
$S\setminus\{\bzero\}$, and that the latter states form an irreducible
communicating class. In particular, we assume that $\mu_i>0$ for every
$i\in\Z$, and that the mutation kernel $\bp$ is irreducible (in the
sense that the random walk which jumps according to~$\bp$ is irreducible
on~$\Z$). Although there are at most $N$ individuals, they may have
arbitrarily large trait indices, and thus extinction might not occur.

For comparison, if there were only a single trait $0$, and hence no
mutation or trait variation, the population size $n(t)$ would be a
birth--death process taking values in $\{0,1,\ldots,N\}$ with birth
rates $\lambda n(1-n/N)$ and death rates $\mu n$, where $\mu=\mu_0$.
This is the continuous-time SIS epidemic model (Weiss and
Dishon~\cite{WD71}). It provides a useful comparison: although
extinction occurs with probability~$1$, when $\lambda>\mu$ and $N$ is
large, the system may remain in a quasi-equilibrium (endemic) state over
a long time period. This raises the question of whether similar
behaviour arises in the present model, and what determines the resulting
trait composition. One means of addressing these questions is via a
deterministic approximation that is expected to be faithful in the limit
as~$N$ becomes large. An approximating model is easily identified
because the transition rates~(\ref{TheModel}) are {\em density
dependent\/} (refer to Kurtz~\cite{Kur70}), suggesting the following
system of differential equations:
\begin{equation}
\dot x_i = F_i(\bx):=
\lambda\left(1-\sum_jx_j\right)\sum_kx_kp_{i-k} -\mu_i x_i, 
\qquad i\in\Z,
\label{PKP1} 
\end{equation}
with $x_i(t)$ to be interpreted as the (large-$N$) density of
individuals with trait~$i$ at time~$t$ measured {\em relative to the
population ceiling\/} (in (\ref{PKP1}), and henceforth, sums shall be
over~$\Z$).
The natural state space is
$E=
\left\{\bx\in\BS: x_i\geq0\text{ for all }i,\ \sum_i
x_i\leq1\right\}$.
However, when mortality is unbounded, the right-hand side of
\eqref{PKP1} need not belong to $\BS$ for every $\bx\in E$, so we seek
solutions to~\eqref{PKP1} in the mild sense; see
Theorem~\ref{Thm:mildsolution} below.

In particular, we approximate $\bX(t):=\bn(t)/N$ by $\bx(t)$
when $N$ is large. That the trait space is countably infinite prevents a
direct application of the original results of Kurtz~\cite{Kur70,Kur71}.
Instead, we appeal to the results of Barbour and Luczak~\cite{BL12b},
who developed an approximation theory for population processes with
countably many types and obtained explicit error bounds.

The behaviour of~\eqref{PKP1} turns out to depend fundamentally on
whether mortality remains bounded or is confining. We call the mortality
profile \emph{confining} if $\mu_i\to\infty$ as $|i|\to\infty$. When
$\sup_i\mu_i<\infty$, the selective pressure remains uniformly bounded
across the trait space, whereas under confining mortality individuals
with extreme traits experience increasingly strong mortality. 
These two regimes are not exhaustive, since an unbounded mortality profile 
need not be confining, but they capture the principal contrast studied here.
As we shall see, the two regimes lead to markedly different long-term
behaviour. In particular, confining mortality provides a mechanism for
the existence of non-trivial deterministic equilibria. 

However, when the death rates are unbounded, $D\bx$ need not belong to
$\BS$ for every $\bx\in E$, so the vector field need not define an
$\BS)$-valued map on all of $E$. Classical Banach-space ODE theory (see
for example~\cite{Fat99}) does not apply directly. We therefore first
establish the existence of mild solutions to (\ref{PKP1}), which will
provide the basis for the deterministic approximation.

\begin{theorem}\label{Thm:mildsolution}
For every $\bx_0\in\BS$, there exists
$t_{\max}\in(0,\infty]$ such that the initial value problem~\eqref{PKP1},
with $\bx(0)=\bx_0$, has a unique maximal mild solution $\bx\in
C([0,t_{\max});\BS)$. 
It is given componentwise by 
\begin{equation}
x_i(t)=e^{-\mu_i t}x_i(0)+\int_0^t
e^{-\mu_i(t-s)}\,\lambda\bigl(1-m(\bx(s))\bigr)\, (P\bx(s))_i\,ds,
\qquad i\in \Z,
\label{mildsolution}
\end{equation}
where $m(\bx)=\sum_i x_i$ and $(P\bx)_i = \sum_{k} x_k p_{i-k}$.
\end{theorem}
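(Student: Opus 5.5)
We recast \eqref{PKP1} as an abstract semilinear problem $\dot\bx=-D\bx+G(\bx)$ on $\BS$, where $D=\operatorname{diag}(\mu_i)$ and $G(\bx)=\lambda(1-m(\bx))P\bx$. The proof then proceeds by a standard fixed-point argument for mild solutions. The first step is to record the linear part. Since $\mu_i>0$ for all $i$, the diagonal semigroup $(T(t)\bx)_i=e^{-\mu_i t}x_i$ is a positive contraction semigroup on $\BS$: $\norm{T(t)\bx}_1\le\norm{\bx}_1$. It is strongly continuous, because for fixed $\bx\in\BS$ we have $\norm{T(t)\bx-\bx}_1=\sum_i(1-e^{-\mu_i t})|x_i|\to0$ by dominated convergence. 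This holds even when $\bmu$ is unbounded, and it is exactly why the mild formulation sidesteps the unboundedness of $D$. Thus \eqref{mildsolution} is the componentwise form of the variation-of-constants equation
\[
\bx(t)=T(t)\bx_0+\int_0^t T(t-s)\,G(\bx(s))\,ds .
\]

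The second step is to check that $G$ is locally Lipschitz on $\BS$. Convolution with the probability kernel $\bp$ satisfies $\norm{P\bx}_1\le\norm{\bx}_1$, and $m$ is a bounded linear functional with $|m(\bx)|\le\norm{\bx}_1$. Writing
\[
G(\bx)-G(\by)=\lambda(1-m(\bx))P(\bx-\by)-\lambda\bigl(m(\bx)-m(\by)\bigr)P\by
\]
gives $\norm{G(\bx)-G(\by)}_1\le\lambda(1+2R)\norm{\bx-\by}_1$ whenever $\norm{\bx}_1,\norm{\by}_1\le R$. Since $G$ is quadratic, it is only locally Lipschitz, which is why the statement is local with a maximal time $t_{\max}$.

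The third step is local existence and uniqueness via Banach's fixed point theorem. Fix $R=2\norm{\bx_0}_1+1$ and work on the closed ball of radius $R$ in $C([0,\tau];\BS)$ with the sup norm. The map
\[
\Phi(\bx)(t)=T(t)\bx_0+\int_0^tT(t-s)G(\bx(s))\,ds
\]
maps this ball into itself and is a contraction once $\tau\lambda(1+2R)R$ and $\tau\lambda(1+2R)$ are suitably small. Two points need checking here. Continuity of $\Phi(\bx)$ in $t$ follows from strong continuity of $T$ and the uniform bound on $G$ along the path. The $\BS$-valued Bochner integral agrees with the componentwise integral in \eqref{mildsolution} because the coordinate functionals are continuous. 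Uniqueness on any common interval follows from Gronwall applied to the difference of two mild solutions, using the local Lipschitz bound on a ball containing both. Since $\tau$ depends only on an upper bound for $\norm{\bx_0}_1$, solutions can be restarted and patched together, which yields a unique maximal solution on $[0,t_{\max})$. The usual blow-up alternative also holds: if $t_{\max}<\infty$ then $\norm{\bx(t)}_1\to\infty$.

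I expect the main subtlety to be the first step rather than the fixed point itself. One must make sure that the semigroup is strongly continuous and contractive without any boundedness of $\bmu$, and that the notion of solution is correctly the mild one, since $D\bx(s)$ need not lie in $\BS$. Everything else is routine once these points are in place. Global existence for initial data in $E$, via positive invariance, is not claimed here and would be handled separately.
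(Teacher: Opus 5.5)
Your proposal is correct and follows essentially the same route as the paper: the same diagonal contraction semigroup $(T(t)\bx)_i=e^{-\mu_i t}x_i$, with strong continuity by dominated convergence, and the same nonlinearity $G(\bx)=\lambda(1-m(\bx))P\bx$ with local Lipschitz constant $\lambda(1+2R)$. The only difference is that the paper gets existence, uniqueness and maximality by citing Theorem~1.4 of Chapter~6 of Pazy, whereas you write out the contraction-mapping, Gronwall and continuation argument that underlies that theorem.
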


We are only interested in biologically meaningful deterministic
trajectories: those in $E$. Our next theorem shows that mild solutions
that start in $E$ remain in $E$.

\begin{theorem}
\label{staysinE}
Let $\bx\in C([0,t_{\max});\BS)$ be the maximal mild solution
to~\eqref{PKP1}, with initial value~$\bx_0$, specified in
Theorem~\ref{Thm:mildsolution}. If $\bx_0\in E$, then $t_{\max}=\infty$
and $\bx(t)\in E$ for all $t\geq0$.
\end{theorem}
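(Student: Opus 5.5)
The plan is to prove invariance of $E$ first on the maximal interval $[0,t_{\max})$, and then use invariance to rule out blow-up. Theorem~\ref{Thm:mildsolution} gives uniqueness of mild solutions, so it is enough to build a solution that stays in $E$ and check that it coincides with $\bx$. I would do this with a modified problem. Replace the nonlinearity $\lambda(1-m(\bx))P\bx$ by
\[
G(\bx)=\lambda\bigl(1-m(\bx^+)\bigr)^+P\bx^+,
\]
where $\bx^+=(\max(x_i,0))$. The map $G$ is locally Lipschitz on $\BS$, since $\|P\bx\|_1\le\|\bx\|_1$ and $m$ is a bounded linear functional. Moreover $G(\bx)\ge0$ componentwise for every $\bx$. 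The Picard iteration that presumably underlies Theorem~\ref{Thm:mildsolution}, namely the variation-of-constants map with the contraction semigroup $e^{-Dt}$ on $\BS$, therefore applies equally to the modified equation. In that equation positivity is immediate: if $\bx_0\ge0$ then every Picard iterate is $\ge0$, because $e^{-\mu_i t}x_i(0)\ge0$ and the integrand is nonnegative. The limit is therefore nonnegative. Alternatively one can argue directly from \eqref{mildsolution}: with $G\ge0$, each $x_i(t)\ge e^{-\mu_i t}x_i(0)\ge0$.

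Next I need the mass bound $m(\bx(t))\le1$ for the modified solution. Summing \eqref{mildsolution} over $i$ is delicate when $\mu$ is unbounded, because $\sum_i\mu_ix_i$ may diverge. Instead I would use the crude bound $e^{-\mu_i(t-s)}\le1$ together with $\sum_i(P\bx^+)_i=m(\bx^+)$. This gives
\[
m(t)\le m(0)+\int_0^t\lambda\,(1-m(s))^+\,m(s)\,ds.
\]
That is a comparison inequality, but it is not yet sharp enough to force $m\le1$. The sharper route is to set $u(t)=1-m(t)$ and show $u$ cannot become negative. Let $\tau=\inf\{t: m(t)>1\}$. On any interval $(\tau,\tau+\delta)$ where $m>1$, the factor $(1-m)^+$ vanishes, so the integrand is zero and $x_i(t)=e^{-\mu_i(t-\tau)}x_i(\tau)$ there. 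This gives $m(t)\le m(\tau)=1$ by continuity, which is a contradiction. Hence $m\le1$ throughout, and $\bx(t)\in E$ for the modified problem.

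On $E$ we have $\bx^+=\bx$ and $(1-m)^+=1-m$, so $G$ agrees with the original nonlinearity. The modified solution is therefore a mild solution of \eqref{PKP1}, and by the uniqueness in Theorem~\ref{Thm:mildsolution} it coincides with $\bx$ on $[0,t_{\max})$. Finally, $\|\bx(t)\|_1\le1$ on $[0,t_{\max})$. The standard blow-up alternative for locally Lipschitz mild problems says that if $t_{\max}<\infty$ then $\|\bx(t)\|_1\to\infty$, so it follows that $t_{\max}=\infty$. This alternative should come out of the construction in Theorem~\ref{Thm:mildsolution}: the local existence time depends only on a bound for $\|\bx_0\|_1$.

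I expect the main obstacle to be the mass bound. The difficulty is making the stopping-time argument rigorous using only the mild formulation. It requires continuity of $m(\bx(t))$, which holds because $m$ is continuous on $\BS$, and it requires restarting the mild equation at time $\tau$ (the semigroup property of \eqref{mildsolution}). I would verify this restart property explicitly; it is routine but is the step where unbounded $\mu$ could cause trouble.
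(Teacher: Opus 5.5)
Your proposal is correct in substance and follows the paper's proof. Both use an auxiliary equation with a positive-part truncation, positivity of its mild solution, a mass bound from restarting the mild formula at a time where the mass equals $1$, identification with $\bx$ by uniqueness, and globality from the blow-up alternative. Putting $\bx^+$ inside $G$ is a small improvement. It makes $G\geq0$ on all of $\BS$, so positivity follows at once from the mild formula. The paper's $G_+(\bx)=\lambda(1-m(\bx))^+P\bx$ only preserves the positive cone, so the paper gets positivity through the Picard iterates.

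One step needs repair. With $\tau=\inf\{t:m(t)>1\}$ there need not be any interval $(\tau,\tau+\delta)$ on which $m>1$. The open set $\{m>1\}$ can accumulate at $\tau$ from the right without containing such an interval, and then restarting at $\tau$ gives no contradiction. Instead, fix $t$ with $m(t)>1$ and put $\sigma=\sup\{s\in[0,t]:m(s)=1\}$, as the paper does. Write $\by$ for the auxiliary solution. Then $m>1$ on $(\sigma,t]$, so the integrand vanishes there, and the restarted formula gives $m(t)=\sum_i e^{-\mu_i(t-\sigma)}y_i(\sigma)\leq m(\sigma)=1$.

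Also fix the order of the last step. The auxiliary solution has its own maximal interval, so you cannot claim it coincides with $\bx$ on $[0,t_{\max})$ until you know it exists that long. First show it is global: its norm stays at most $1$, or alternatively $\|G(\bx)\|_1\leq\lambda\|\bx\|_1$ rules out blow-up. Only then use uniqueness and maximality to conclude $t_{\max}=\infty$ and $\bx=\by$.
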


\medskip
\noindent
We next present a quantitative stochastic approximation result based on
Theorem~4.7 of Barbour and Luczak~\cite{BL12b}. Their general theorem
requires moment bounds that control the movement of population mass
through the type space, as well as the growth of the transition rates
associated with distant types. We therefore impose, for this result
only, polynomial growth of the mortality profile, a sufficiently high
moment condition on the mutation kernel, and corresponding moment
conditions on the initial states. These additional assumptions are
satisfied by all the mutation kernels and mortality profiles used in the
numerical experiments reported below. They are not standing assumptions
for other results of the paper.

\begin{theorem}
\label{BL}
Suppose that there exist an integer $q\geq1$ and a constant
$C_{\!\bmu}<\infty$ such that
\[
 \mu_i\leq C_{\!\bmu}(1+|i|)^q,\qquad i\in\mathbb Z,
\]
and suppose that the mutation kernel satisfies the sufficient
moment condition
\[
 \sum_{\ell\in\mathbb Z}(1+|\ell|)^{11q}p_\ell<\infty.
\]
Let $(\bx(t),t\geq0)$ be the mild solution to~\eqref{PKP1}, with
initial value $\bx_0\in E$, specified in
Theorem~\ref{Thm:mildsolution}, and let
$\bX(t):=N^{-1}\bn(t)$. Suppose also that
\[
 \sum_i(1+|i|)^{11q}x_i(0)<\infty
\]
and that, for some constant $C_*<\infty$ independent of $N$,
\[
 \sum_i(1+|i|)^{11q}X_i(0)\leq C_*.
\]
Then, for each $t<\infty$ and each $K_1<\infty$, there exist constants
$K_2$ and $K_3$, depending on $t$ and $K_1$ but not on $N$, such that, 
for all sufficiently large $N$, if
\[
 \norm{\bX(0)-\bx_0}_1
 \leq
 K_1\sqrt{\frac{\log N}{N}},
\]
then
\[
 \Pr\left(
 \sup_{0\leq s\leq t}
 \norm{\bX(s)-\bx(s)}_1
 >
 K_2\sqrt{\frac{\log N}{N}}
 \right)
 \leq
 K_3\frac{\log N}{N}.
\]
\end{theorem}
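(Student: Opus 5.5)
The plan is to verify the hypotheses of Theorem~4.7 of Barbour and Luczak~\cite{BL12b}, which concerns Markov population processes with countably many types. I first write the chain~\eqref{TheModel} in their form. The transitions are indexed by $J=\{\pm\be_i: i\in\Z\}$, and the rates are $N g^J(\bX)$, where $\bX=\bn/N$. For births into trait~$i$ the rate is $g^{+\be_i}(\bx)=\lambda(1-m(\bx))(P\bx)_i$, and for deaths it is $g^{-\be_i}(\bx)=\mu_i x_i$. In their framework the drift splits as $A\bx+F(\bx)$, with a linear part $A$ that may be unbounded and generates a semigroup, and a perturbation $F$ that is locally Lipschitz on $\BS$.

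I take $A=-D$, with $D=\operatorname{diag}(\mu_i)$. The semigroup is then the explicit diagonal one $e^{-Dt}$, which is positive and contractive in every weighted $\ell_1$ norm. I take $F(\bx)=\lambda(1-m(\bx))P\bx$. Since $\|P\|_{\ell_1\to\ell_1}=1$, this $F$ is globally Lipschitz on $E$, with constant at most $3\lambda$. With this splitting, their mild-solution formulation is exactly~\eqref{mildsolution}. Theorems~\ref{Thm:mildsolution} and~\ref{staysinE} then supply their deterministic trajectory on $[0,t]$ and keep it inside $E$, so it is bounded in $\ell_1$.

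The substantive work is checking their moment and weight conditions, using the weight $\nu(i)=(1+|i|)$. There are three parts.

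\textbf{(i) Jump-rate growth.} The total jump rate out of a state must be bounded by a power of $\nu$ integrated against $\bX$. This holds because $\sum_i\mu_iX_i\le C_{\!\bmu}\sum_i\nu(i)^qX_i$ and the birth total is at most $\lambda$.

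\textbf{(ii) Moment propagation.} I need $\sum_i\nu(i)^{r}X_i(s)$ to stay bounded in expectation on $[0,t]$ for the powers $r\le 11q$ that their theorem demands; I expect the exponent $11q$ to come from their chain of moment requirements. To show this, I apply the generator to $V_r(\bn)=\sum_i\nu(i)^r n_i$. Deaths only decrease $V_r$. For births, the inequality $\nu(k+\ell)^r\le 2^{r}\nu(k)^r\nu(\ell)^r$ together with $\sum_\ell\nu(\ell)^{r}p_\ell<\infty$ gives
\[
\mathcal{L}V_r\le \lambda 2^r\Bigl(\sum_\ell\nu(\ell)^rp_\ell\Bigr)V_r .
\]
Gronwall's inequality, combined with a Doob martingale bound, then controls $\sup_{s\le t}V_r(\bn(s))/N$ with high probability, uniformly in $N$, starting from the assumed bound $C_*$. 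The same computation applied to the mild equation gives the corresponding bound for $\bx$, using the assumed moment of $\bx(0)$.

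\textbf{(iii) Variance terms.} I also need the second-moment quantities, namely $\sum_J|J|^2g^J$ weighted by $\nu$. These are controlled by the same moments.

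Once these conditions are checked, their Theorem~4.7 gives the $\sqrt{\log N/N}$ bound on the sup-$\ell_1$ deviation, with probability $1-K_3\log N/N$, given an initial discrepancy of order $K_1\sqrt{\log N/N}$.

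I expect the main obstacle to be matching their precise hypotheses, which are stated through a weight function and growth exponents. The unbounded rates $\mu_i$ enter both the linear part and the jump intensities. That is why high moments are needed to control the error term $\sum_i\mu_i|X_i-x_i|$ produced by the unbounded generator. The way through is to treat the death part as the linear semigroup, so that $\mu_i$ never has to be Lipschitz-controlled in $\ell_1$, and to confine all uses of unboundedness to the moment estimates in (ii). I would first read off the exponents their theorem requires and confirm that they are all at most $11q$.
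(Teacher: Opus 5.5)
Your route is the paper's. You cast the chain in Barbour--Luczak form with jumps $\pm\be_i$, split the drift as $A=-D$ plus $\F(\bxi)=\lambda(1-m(\bxi))P\bxi$, and use a polynomial weight, the mutation moments and $1+|k+\ell|\le(1+|k|)(1+|\ell|)$ for the drift and variance bounds of Assumption~2.1. You then let Theorems~\ref{Thm:mildsolution} and~\ref{staysinE} identify the limit.

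\textbf{The missing verification.} The proposal stops where the proof's content lies. You do not show that $11q$ moments suffice: you ``expect'' it and would ``read off the exponents''. You also never engage the hypothesis of Theorem~4.7 of~\cite{BL12b} that is specific to an unbounded linear part, namely Assumption~4.2. It requires an auxiliary weight $\zeta$ with
\begin{itemize}
\item $\sum_i\zeta(i)^{-1/2}<\infty$;
\item $\sup_i\omega(i)(\mu_i+1)/\sqrt{\zeta(i)}<\infty$, which is condition~(4.12);
\item a bound $\sum_J\alpha_J(\bxi)\,d(J,\zeta)\le k\,S_{r(\zeta)}(\bxi)$, which is condition~(2.25).
\end{itemize}
The required initial moment is then $S_\rho$ with $\rho=\max\{r(\zeta),p(r(\zeta)),\widetilde r_\omega\}$. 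The paper takes $\nu(i)=(1+|i|)^q$ and $\zeta=\nu^4$, which gives $r(\zeta)=5$. Since $\mu_i\le C_{\!\bmu}\nu(i)$, the variance bound holds with $p(r)=2r+1$, so $\rho=11$. That is exactly where $(1+|i|)^{11q}$ comes from.

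\textbf{Your heuristic about the unbounded rates.} The rates $\mu_i$ do not enter through an error term $\sum_i\mu_i|X_i-x_i|$; no such term arises in the mild formulation. The unbounded $A$ is felt in the martingale term. It is controlled by requiring $\zeta$ to dominate $\mu_i^2$ while $\zeta^{-1/2}$ stays summable.

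\textbf{Your normalization.} With $\nu(i)=1+|i|$ the same bookkeeping goes through. You get $p(r)=2r+q$, and $\zeta=\nu^{\max\{2q,3\}}$ satisfies both parts of Assumption~4.2 with an exponent below $11q$. But this has to be carried out. You must also check that $\omega\equiv1$ is admissible: $A^\top\omega\le0$, and $\omega/\nu$ bounded and tending to $0$. That is what makes the conclusion an $\ell_1$ statement.

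\textbf{Two smaller points.}
\begin{enumerate}
\item Your birth rates $\lambda(1-m(\bx))(P\bx)_i$ are negative when $m(\bx)>1$. The framework needs non-negative rates on the whole positive cone, and $\F$ locally Lipschitz on balls of $\BS$, not just on $E$. The paper replaces $1-m$ by $(1-m)^+$. The limit equation~(4.2) is then the auxiliary equation from the proof of Theorem~\ref{staysinE}. That theorem shows the truncation is inactive and identifies the limit with~\eqref{mildsolution}.
\item Your step~(ii) re-proves moment propagation by Gronwall and Doob. This is unnecessary, since Barbour and Luczak derive those bounds from Assumption~2.1. You need only $U_r\le\lambda M_rS_r$ and $V_r\le(\lambda M_{2r}+C_{\!\bmu})S_{p(r)}$ with $p(r)\le r^{(1)}_{\max}$. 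The factor $2^r$ is also superfluous.
\end{enumerate}
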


\medskip
\noindent
\emph{Remark}.
The curious exponent $11q$ is a sufficient, rather than an intrinsic
or optimal, moment order. It arises from the particular choice of
weights used to verify the hypotheses of Barbour and
Luczak~\cite{BL12b}; the details are given in the Appendix. A
model-specific martingale argument might permit a smaller exponent.

\medskip
\noindent
Thus, subject to the additional assumptions in Theorem~\ref{BL}, the
probability that $\bX$ deviates from $\bx$ by more than order
$\sqrt{\log N/N}$ in the $\ell_1$ norm over a fixed finite time interval
is of order at most $\log N/N$. The restriction to finite time intervals
is important. Theorem~\ref{BL} does not describe the eventual behaviour
of the stochastic process, which may be affected by extinction and, in
non-confining regimes, by the migration of population mass through the
trait space.

The moment and growth conditions in Theorem~\ref{BL} arise from our
direct application of the general theorem of Barbour and
Luczak~\cite{BL12b} and are not claimed to be sharp for the present
model. Since the aggregate birth rate is uniformly bounded and death
transitions are dissipative, some of these conditions might be weakened
by estimating the relevant martingales directly. We do not pursue that
refinement here.

\medskip
A formal argument based on summing $\dot x_i = F_i(\bx)$ over $i\in \Z$
(refer to \eqref{PKP1}) shows that aggregate density
$m(t):=\sum_i x_i(t)$ satisfies
\begin{equation}
\dot{m}=\lambda m(1-m) - \sum_i \mu_i x_i.
\label{dmdt}
\end{equation}
We establish this rigorously for the mild solution under the
condition that mortality is bounded.

\begin{theorem}
\label{thm:dmdt}
Let $(\bx(t),t\geq 0)$ be the mild solution to \eqref{PKP1},
with initial value $\bx_0\in E$, specified in
Theorem~\ref{Thm:mildsolution}, and let
$m(t)=\sum_i x_i(t)$. If $\sup_i\mu_i<\infty$, then
$(m(t),t\geq 0)$ satisfies~\eqref{dmdt}
with initial value $m(0)=m_0:=\sum_i x_i(0)$.
\end{theorem}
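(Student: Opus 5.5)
We will show that when $\sup_i\mu_i=:\bar\mu<\infty$ the mild solution of Theorem~\ref{Thm:mildsolution} is in fact a classical $C^1$ solution in $\BS$. Equation~\eqref{dmdt} then follows by applying the bounded linear functional $\bx\mapsto\sum_i x_i$ to it. By Theorem~\ref{staysinE}, $t_{\max}=\infty$ and $\bx(t)\in E$ for all $t\ge0$, so we may work on $E$ throughout.

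\emph{Step 1.} With $D=\operatorname{diag}(\mu_i)$ bounded, the map
\[
F(\bx)=\lambda(1-m(\bx))P\bx-D\bx
\]
is a well-defined map $\BS\to\BS$. We have $\norm{P\bx}_1\le\norm{\bx}_1$, since $P$ is convolution with a probability kernel, and $\norm{D\bx}_1\le\bar\mu\norm{\bx}_1$. Moreover $F$ is locally Lipschitz, because $m$ is a bounded linear functional and the nonlinearity is the product of $1-m(\bx)$ with $P\bx$. Standard Banach-space ODE theory (e.g.~\cite{Fat99}) therefore gives a unique classical solution $\by\in C^1([0,T);\BS)$ of $\dot\by=F(\by)$, $\by(0)=\bx_0$.

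\emph{Step 2.} Show that $\by$ coincides with $\bx$. Each coordinate $y_i$ satisfies the scalar linear ODE $\dot y_i+\mu_iy_i=g_i(t)$, where $g_i(t):=\lambda(1-m(\by(t)))(P\by(t))_i$. Coordinate evaluation is continuous on $\BS$, so $g_i$ is continuous and the variation-of-constants formula applies componentwise. Hence $\by$ satisfies~\eqref{mildsolution}, i.e.\ $\by$ is a mild solution, and uniqueness in Theorem~\ref{Thm:mildsolution} gives $\by=\bx$ on $[0,T)$. Maximality of the classical solution together with the a priori bound $\norm{\bx(t)}_1\le1$ (from Theorem~\ref{staysinE}) gives $T=\infty$.

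\emph{Step 3.} The functional $m$ is bounded and linear on $\BS$, so $t\mapsto m(\bx(t))$ is $C^1$ with
\[
\dot m=m(F(\bx))=\lambda(1-m)\,m(P\bx)-\sum_i\mu_ix_i .
\]
For $\bx\in E$, Tonelli's theorem gives
\[
m(P\bx)=\sum_i\sum_k x_kp_{i-k}=\sum_k x_k\sum_l p_l=m(\bx).
\]
The sum $\sum_i\mu_ix_i$ converges absolutely because $\mu$ is bounded. This is exactly~\eqref{dmdt}, with $m(0)=m_0$.

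The main point needing care is Step~2, identifying the classical solution with the mild one. If uniqueness in Theorem~\ref{Thm:mildsolution} is only asserted within some subclass, this step must check that $\by$ lies in that class; $\by$ is continuous in $\BS$, which should suffice. Alternatively, we can bypass classical theory entirely. Sum~\eqref{mildsolution} over $i$, interchanging sum and integral by Tonelli since all terms are nonnegative on $E$. Then rewrite $e^{-\mu_it}=1-\int_0^t\mu_ie^{-\mu_is}\,ds$ and differentiate under the integral, using bounded $\mu$ for dominated convergence. This direct route is messier, so we prefer Steps~1 to~3.
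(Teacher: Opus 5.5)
Your proof is correct and takes essentially the same route as the paper: boundedness of $D$ makes the mild solution a classical $C^1$ solution in $\BS$, and applying the bounded linear functional $m$, together with the Tonelli identity $\sum_i (P\bx)_i = m(\bx)$ on $E$, yields \eqref{dmdt}. The only minor difference is the direction of identification: the paper differentiates the variation-of-constants formula~\eqref{vofc} directly, whereas you construct a classical solution first and identify it with the mild one via uniqueness in Theorem~\ref{Thm:mildsolution}.
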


\noindent
\emph{Remarks}.
(1) The ODE (\ref{dmdt}) seems very natural; for example, if
$\mu_i\equiv\mu$ then (\ref{dmdt}) reduces to the Verhulst
model~\cite{Ver1838}, $\dot{m}=\lambda m(\rho-m)$, where
$\rho=1-\mu/\lambda$ (the {\em carrying capacity\/}). It is known to be
a faithful approximation to SIS model, referred to earlier, when~$N$ is
large (Theorem~1 of \cite{Pol24}).
(2) Direct differentiation of the componentwise mild-solution formula
\eqref{mildsolution} shows that, for each fixed $i$, $x_i$ is
continuously differentiable and satisfies $\dot x_i=F_i(\bx)$, even when
mortality is unbounded. This does not imply that $\bx$ is differentiable
as a $\BS$-valued map, because $D\bx$, and hence $F(\bx)$, need not
belong to $\BS$. 
When mortality is bounded, $D$ is a bounded operator on
$\ell_1(\mathbb Z)$, and the mild solution is therefore classical; see
the proof of Theorem~\ref{thm:dmdt} in the Appendix.

\medskip
\noindent
One would expect that, whenever $\sum_j n_j(t)>0$, the \emph{proportion}
$n_i(t)/\sum_j n_j(t)$ of individuals having trait $i$ would be
approximated by the corresponding deterministic trait proportion
$\pi_i(t):=x_i(t)/\sum_j x_j(t)$ when $N$ is large.
Theorem~\ref{thm:pidot_continuum} below identifies a differential
equation governing $\bpi(t)=(\pi_i(t),\, i\in\Z)$ under the condition
that mortality is bounded. We first need to establish that deterministic
trait proportions are well defined.

By Theorem~\ref{thm:dmdt}, $m$ is continuously differentiable and 
satisfies~\eqref{dmdt}. Since $0\leq m(t)\leq 1$ and $\sum_i\mu_i x_i(t)\leq
Mm(t)$, it follows that $\dot m(t)\geq -Mm(t)$. Therefore,
$$
\frac{d}{dt}\bigl(e^{Mt}m(t)\bigr)
= e^{Mt}\bigl(\dot m(t) +M m(t) \bigr) \geq 0,
$$
and hence $e^{Mt}m(t)\geq m_0$. So, if $m_0>0$, then
$m(t)\geq m_0e^{-Mt}>0$ for all $t\geq 0$.
In particular, the ratio $\pi_i(t)={x_i(t)}/{m(t)}$ is well defined for all
$t\geq 0$.

\begin{theorem}
\label{thm:pidot_continuum}
In the setting of Theorem~\ref{thm:dmdt}, suppose additionally that
$\bx_0\neq\bzero$, and define $\bpi(t)=(\pi_i(t),i\in\mathbb Z)$ by
$\pi_i(t):=x_i(t)/m(t)$. Then, for all $i\in\mathbb Z$,
\begin{equation}
\dot\pi_i=
\lambda (1-m)\left( \sum_{k} \pi_k p_{i-k} - \pi_i \right) 
+ \pi_i (\bar\mu - \mu_i),
\label{pidot}
\end{equation}
where $\bar\mu(t)= \sum_i \mu_i \pi_i(t)$.
\end{theorem}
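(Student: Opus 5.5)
The plan is a direct quotient-rule computation. It is justified by the regularity already available in the bounded-mortality setting of Theorem~\ref{thm:dmdt}.

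First, fix $i\in\Z$ and record the two ingredients.
\begin{enumerate}
\item By Remark~(2) following Theorem~\ref{thm:dmdt}, each coordinate $x_i$ is continuously differentiable and satisfies $\dot x_i=F_i(\bx)$. Alternatively, since $M:=\sup_i\mu_i<\infty$ makes $D$ bounded on $\BS$, the mild solution is classical.
\item By Theorem~\ref{thm:dmdt}, $m$ is continuously differentiable and satisfies \eqref{dmdt}.
\end{enumerate}
By the argument preceding the statement, $m(t)\geq m_0e^{-Mt}>0$ for all $t$, because $\bx_0\neq\bzero$ and $\bx_0\in E$ give $m_0>0$. Hence $\pi_i=x_i/m$ is a quotient of $C^1$ functions with nonvanishing denominator, and the quotient rule gives
\[
\dot\pi_i=\frac{\dot x_i}{m}-\frac{x_i}{m}\cdot\frac{\dot m}{m}.
\]

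Second, substitute the two equations.
\begin{itemize}
\item From \eqref{PKP1}, $\dot x_i/m=\lambda(1-m)\sum_k \pi_kp_{i-k}-\mu_i\pi_i$, using linearity of $P$, so that $(P\bx)_i=m\,(P\bpi)_i$.
\item From \eqref{dmdt}, $\dot m/m=\lambda(1-m)-\sum_j\mu_j\pi_j=\lambda(1-m)-\bar\mu$.
\end{itemize}
Combining,
\[
\dot\pi_i=\lambda(1-m)\sum_k\pi_kp_{i-k}-\mu_i\pi_i-\pi_i\lambda(1-m)+\pi_i\bar\mu,
\]
which regroups to \eqref{pidot}.

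Third, check that $\bar\mu(t)$ is finite and well defined. This holds because $\sum_j\mu_j\pi_j\leq M\sum_j\pi_j=M$. The same bound justifies writing $\sum_j\mu_jx_j=m\bar\mu$.

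There is no serious obstacle. The only point needing care is the legitimacy of differentiating $m=\sum_ix_i$, which requires interchanging the sum and the derivative. That interchange is exactly what Theorem~\ref{thm:dmdt} provides under bounded mortality, so I would simply cite it rather than reprove it. Together with the positivity of $m$, this is all the analytic input needed.
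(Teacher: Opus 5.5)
Your proposal is correct and follows the paper's proof essentially verbatim. Both arguments use positivity of $m$ (via $m(t)\geq m_0e^{-Mt}$), the quotient rule $\dot\pi_i=\dot x_i/m-\pi_i\dot m/m$, and substitution of \eqref{PKP1} and \eqref{dmdt}. Your extra checks, that $m_0>0$ follows from $\bx_0\in E\setminus\{\bzero\}$ and that $\bar\mu\leq M$ is finite, are correct details the paper leaves implicit.
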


\begin{proof}
Since $m(t)>0$, we may differentiate $\pi_i=x_i/m$ to obtain
$\dot\pi_i=\dot x_i/m-\pi_i\dot m/m$. From \eqref{PKP1} and
\eqref{dmdt}, $\dot x_i/m=\lambda(1-m)\sum_k\pi_kp_{i-k}-\mu_i\pi_i$ and
$\dot m/m=\lambda(1-m)-\bar\mu$. Substitution and rearrangement give
\eqref{pidot}.
\end{proof}

\noindent
\emph{Remark}.
It is useful to interpret $\pi_i(t)$ as the probability that an
individual sampled from the population at time~$t$ has trait~$i$. Thus,
$\bar\mu(t)$ is the mean per-capita death rate at time~$t$. The first
term in~\eqref{pidot} represents mutation, or dispersal through the trait
space, while the second represents selection relative to the mean
mortality. Equation~\eqref{pidot} is similar in structure to the
finite-dimensional \emph{replicator-mutator equation} described
in~\cite{PN2002}, but here mutation is modulated by the demographic
factor $\lambda(1-m)$.

\medskip
\noindent
{\bf Numerical experiments}.
The numerical calculations were performed on finite-dimensional
truncations of the model. For fixed $I\geq 1$ and $K\leq I$, traits were
restricted to $\T_I=\{-I,\ldots,I\}$, while mutation displacements were
restricted to $\D_K=\{-K,\ldots,K\}$. Throughout, we used \mbox{$K=20$}.
The trait-space truncation parameter $I$ was chosen so that truncation
effects were negligible over the time interval of interest. The roles of
the two truncation parameters are different. The parameter $K$
determines the range of mutation jumps retained in the approximation,
whereas $I$ determines the size of the trait space. For a given pair
$(I,K)$, the traits $\{-I+K,\ldots,I-K\}$ form a ``safe region'', in the
sense that individuals with traits in this set experience the same
mutation mechanism as in the infinite model. By contrast, traits within
distance $K$ of the boundaries $\pm I$ are affected by the truncation
because some mutation jumps would otherwise leave $\T_I$. 
For each experiment, $I$ was chosen sufficiently large that, over the
time interval displayed, the occupied trait range in the stochastic
simulation remained within the safe region. No truncation artefact was
apparent in the corresponding deterministic trajectories. In
directional-selection examples, where the trait distribution continues
to move through the trait space, a fixed truncation can only be expected to
provide an accurate approximation over a finite time interval.

We took $\bp$ to be the discrete Laplace distribution~\cite{IK06} with
parameter $r\in(0,1)$, namely
\begin{equation}
p_i=\frac{1-r}{1+r}\,r^{|i|},
\qquad i\in\Z,
\label{laplace}
\end{equation}
and approximated it by the renormalised 
truncated kernel $\bp_K=(p_i,\,i\in\D_K)$,
given by
\begin{equation}
p_i=
\frac{1-r}{1+r-2r^{K+1}}\,r^{|i|},
\qquad i\in\D_K.
\label{laplacet}
\end{equation}
This kernel places greatest weight on trait~0, with probabilities
decreasing geometrically as~$|i|$ increases. Consequently, most
mutations produce small phenotypic changes, while larger mutational
steps occur only rarely. For the value $r=0.4$ used in the simulations,
the truncation at $K=20$ omits less than $10^{-8}$ of the total mutation
probability mass, so the error introduced by truncating the mutation
kernel is negligible on the scale of the numerical results reported
here.
We considered two mortality profiles corresponding to
``directional selection'' with $c=\sup_i \mu_i<\infty$. The first was
$$
  \mu_i=\mu_\infty + (c-\mu_\infty)\left( 1/2 - \tan^{-1}(i/e)/\pi\right),
$$
where $\mu_\infty$ is the limiting per-capita mortality rate, and $e>0$
controls the slope near $i=0$, so that $\mu_i$ decreases from $c$ to
$\mu_\infty$, and larger values of $e$ elongate the mortality profile.
Thus, higher trait values confer a survival advantage, with improvements
becoming less significant at high trait values. Examples might include
running speed, toxin resistance, thermal tolerance, visual acuity,
metabolic efficiency, and camouflage effectiveness. We call this the
``arctan'' profile. The other profile we considered was
$$
\mu_i=\mu_\infty + \frac{c-\mu_\infty}{1+\exp(i/e)},
$$
where the parameters have similar interpretations. We call this the
``logistic'' profile. Observe that the former has 
$\mu_i-\mu_\infty\sim (c-\mu_\infty) e/(\pi i)$, 
while the latter has $\mu_i-\mu_\infty\sim
(c-\mu_\infty)e^{-i/e}$, representing a sharper decrease. Together,
these choices capture a realistic mutation-selection balance in which
variation is continuously generated by mutation, shaped by selection,
and regulated by density-dependent competition.

All experiments were conducted using MATLAB\footnote{MATLAB Version:
26.1.0.3276743 (R2026a) Update 3, Natick, Massachusetts: The MathWorks
Inc.; 2026.}. 
Stochastic trajectories were generated using the standard
holding-time/jump-chain construction for continuous-time Markov chains.
The deterministic system was integrated using {\tt ode45}, with relative 
and absolute tolerances both set to $10^{-4}$.
In our first experiment we set $N=1000$ and $\lambda=0.06$. The mutation
kernel was the truncated discrete Laplace distribution~\eqref{laplacet}
with $K=20$ and $r=0.4$, and the mortality profile was logistic with 
$\mu_\infty=0.1$,
$c=1$, and $e=10$. The initial numbers were $n_i(0)=10$ for
$i=-40,\ldots,40$, and $0$ otherwise, so that $\sum_{i\in\T_I}
n_i(0)=810$. The results are shown in Figure~\ref{fig:FiguresA}.
Plot~(a) displays the expected trait value $\sum_i i n_i(t)/\sum_j
n_j(t)$, together with the minimum and
maximum traits present in the population. The expected trait increases
steadily, reflecting selection in favour of larger trait values, while
the minimum trait also drifts upward. By time $t_{\rm fix}=50.2678$ only
trait~32 remained, although this final lineage disappeared shortly
afterwards at time $t_{\rm ext}=54.9199$. Plot~(b) shows the aggregate
population density relative to $N$, together with its deterministic
approximation. The simulated trajectory closely follows the
deterministic trajectory, consistently with Theorem~\ref{BL}.
Nevertheless, the population displays evanescent behaviour: all traits
initially represented in the population disappear within a relatively
short period, and the deterministic densities $x_i(t)$ also tend to $0$
as the total density declines. Plot~(c) compares the simulated expected
trait value with its deterministic approximation over the initial phase
of the evolution, where the agreement is particularly close.

\begin{figure}[htbp]
\centering
\includegraphics[height=0.25\textheight,width=0.7\textwidth]{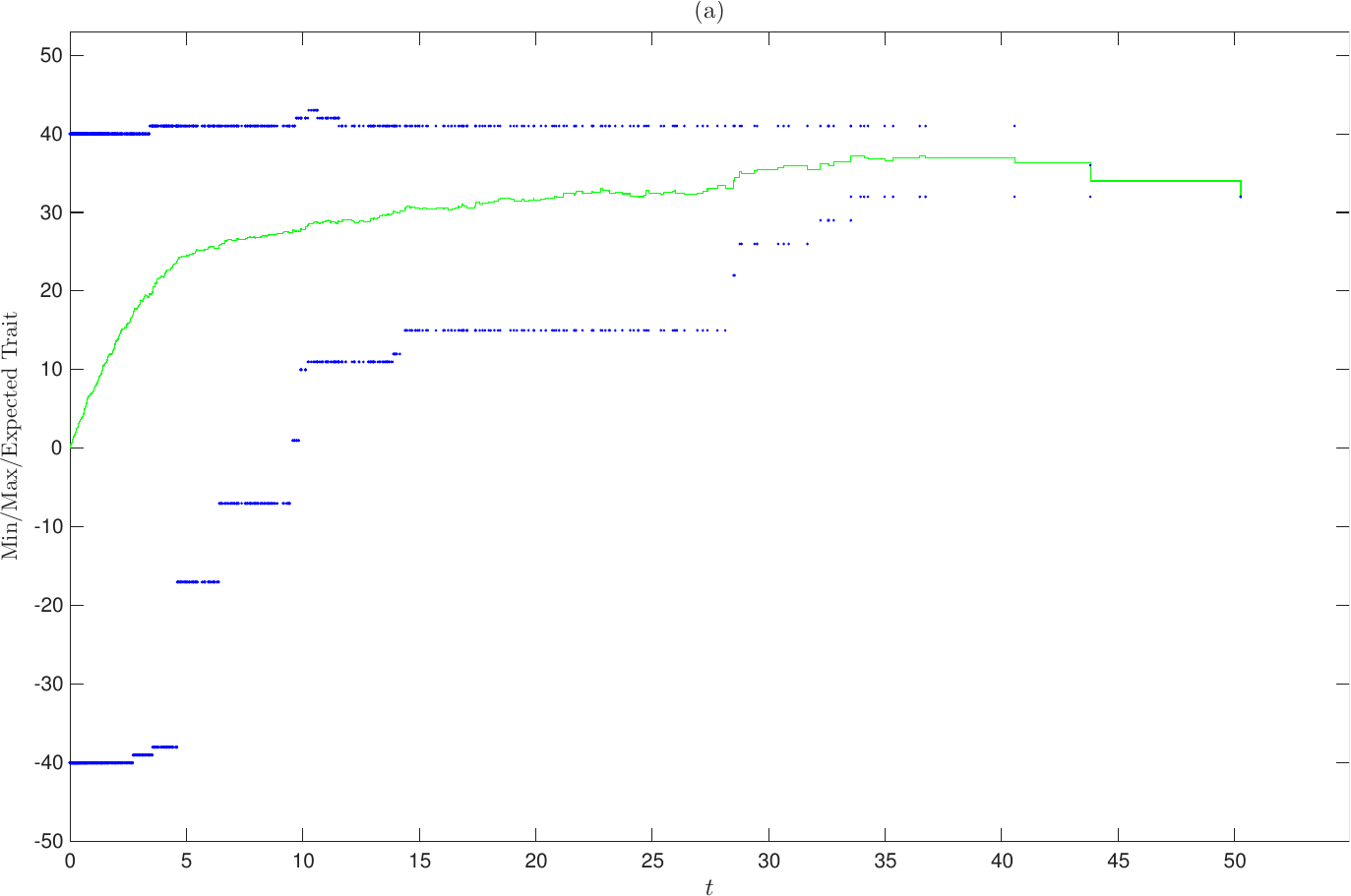}
\includegraphics[height=0.25\textheight,width=0.7\textwidth]{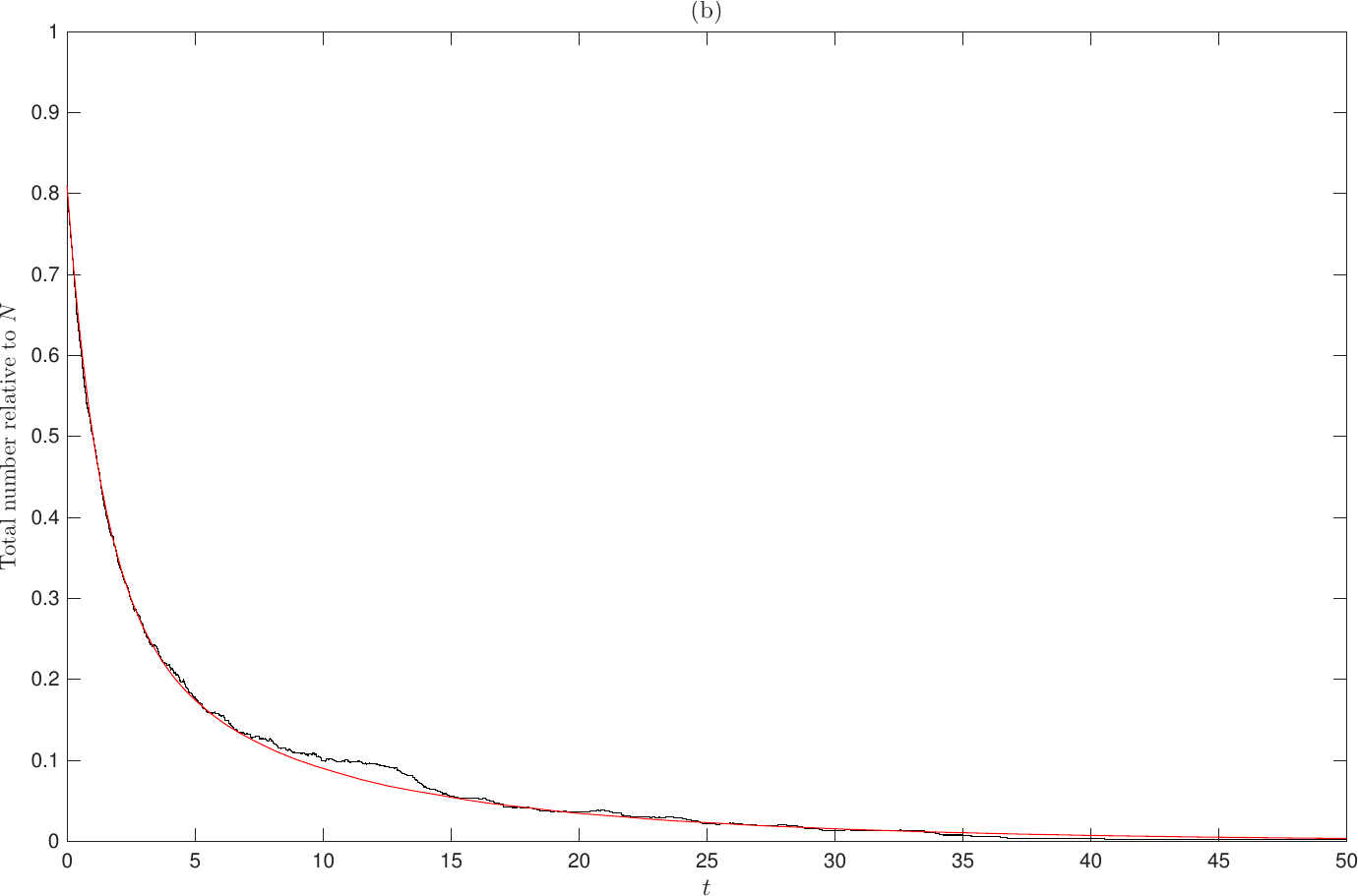}
\includegraphics[height=0.25\textheight,width=0.7\textwidth]{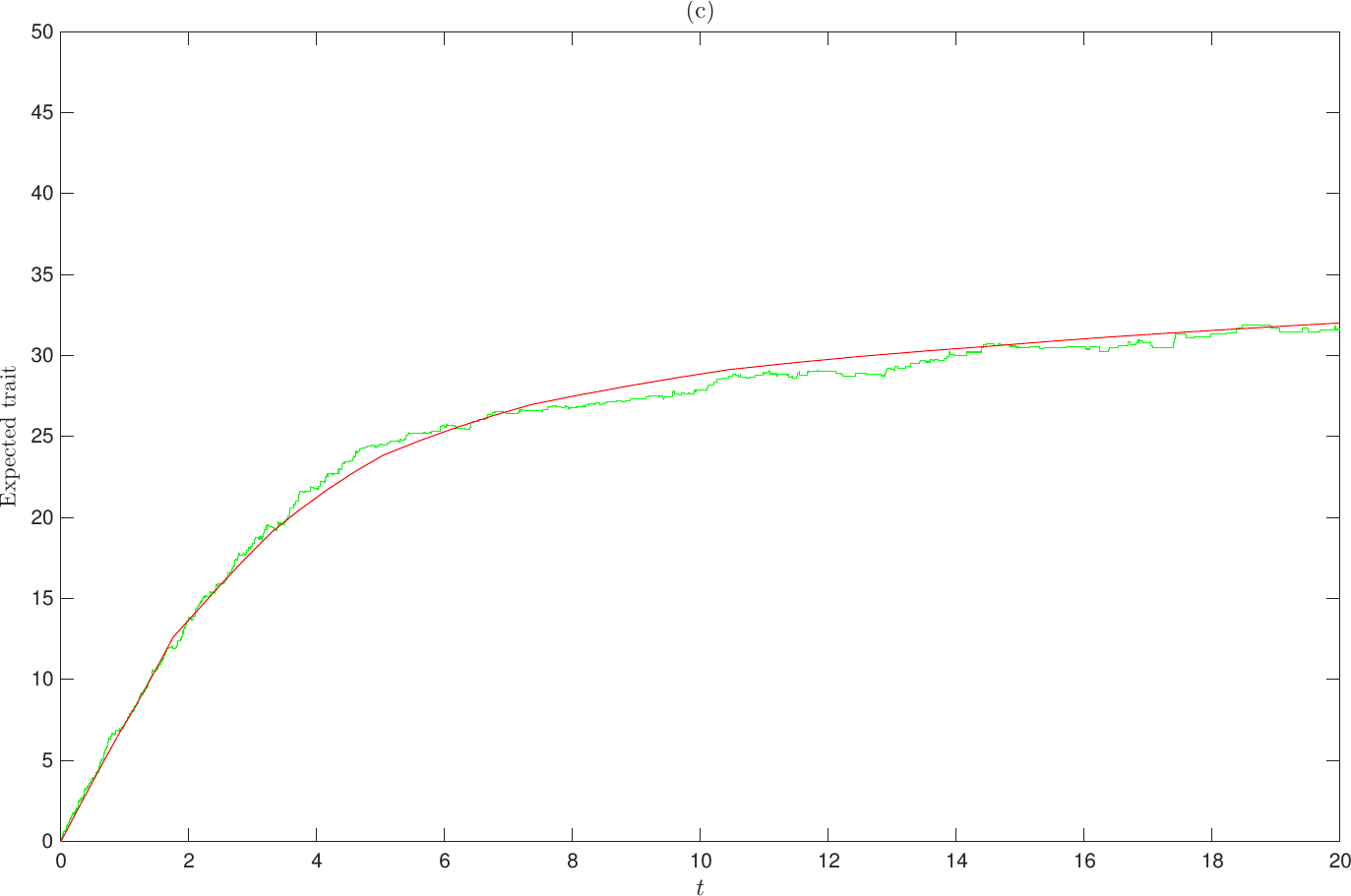}
\caption{Simulation and deterministic approximation for the logistic mortality
profile: $I=200$, $N=1000$, $\lambda=0.06$, $r=0.4$, $\mu_\infty=0.1$,
$c=1$, and $e=10$. (a)~Expected trait value (green) together with the
minimum and maximum traits present (blue). Selection drives the
population towards larger trait values.
By time $t_{\rm fix}=50.2678$, potentially temporary fixation of trait~32 
had occurred, 
although this final lineage disappeared shortly afterwards at time 
$t_{\rm ext}=54.9199$. (b)~Aggregate population density relative to
$N$ (black) and its deterministic approximation (red). The stochastic
trajectory closely tracks the deterministic approximation. (c)~Expected
trait value (green) and its deterministic approximation (red) over the
initial phase of the evolution. The figure illustrates evanescent
behaviour: all trait densities eventually decline to $0$.}
\label{fig:FiguresA}
\end{figure}

\begin{figure}[htbp]
\centering
\includegraphics[height=0.27\textheight,width=0.85\textwidth]{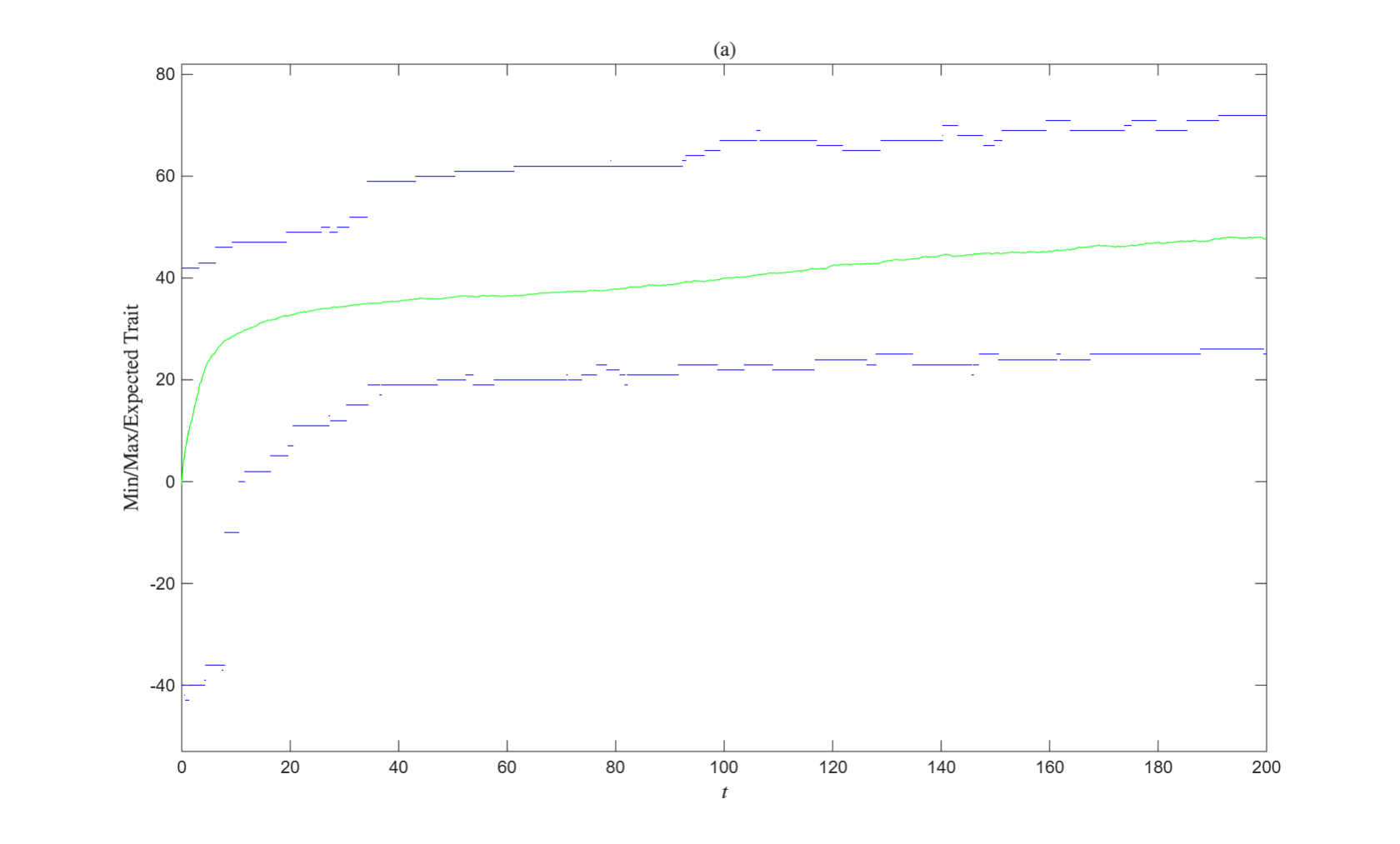}
\includegraphics[height=0.27\textheight,width=0.85\textwidth]{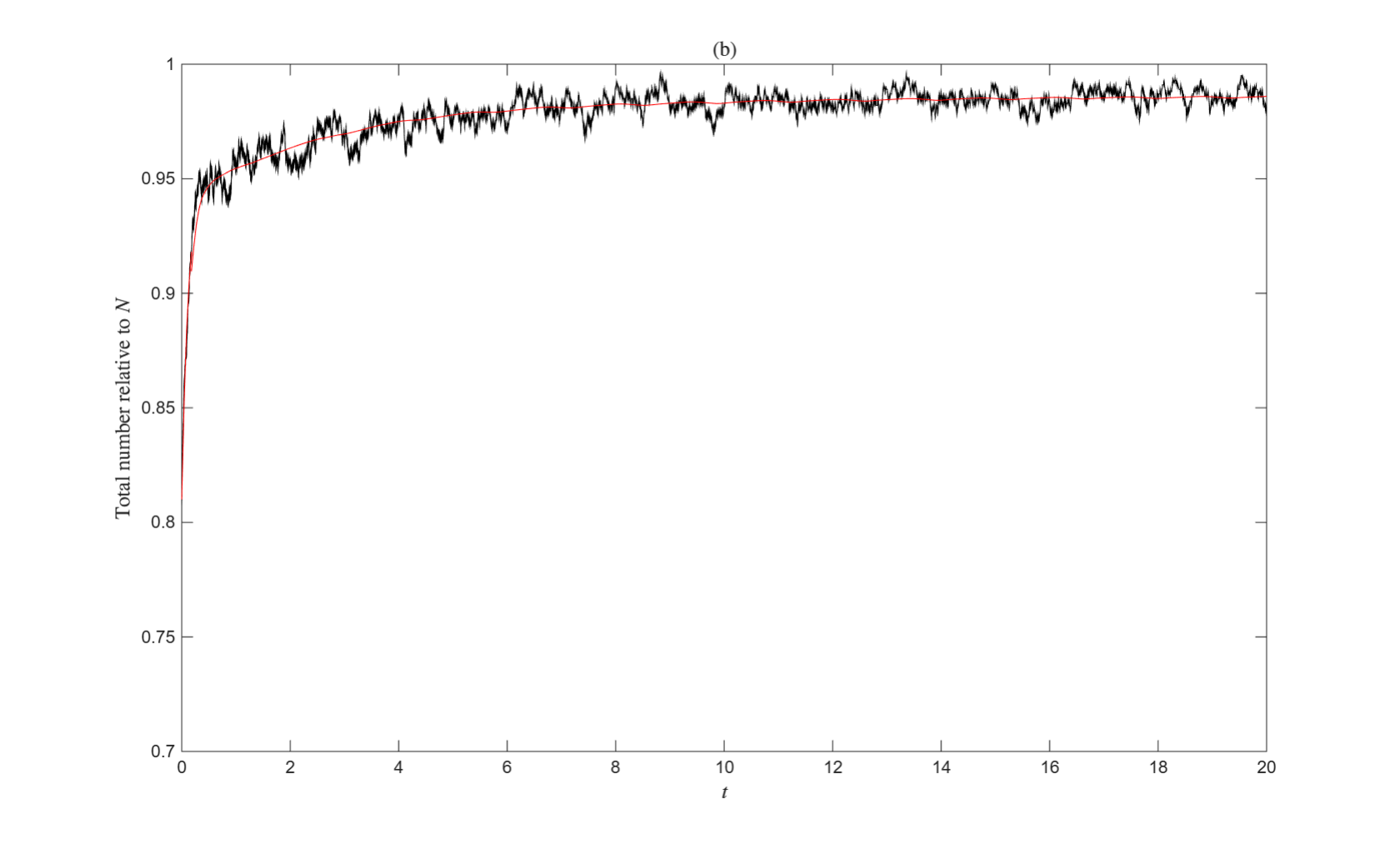}
\includegraphics[height=0.27\textheight,width=0.85\textwidth]{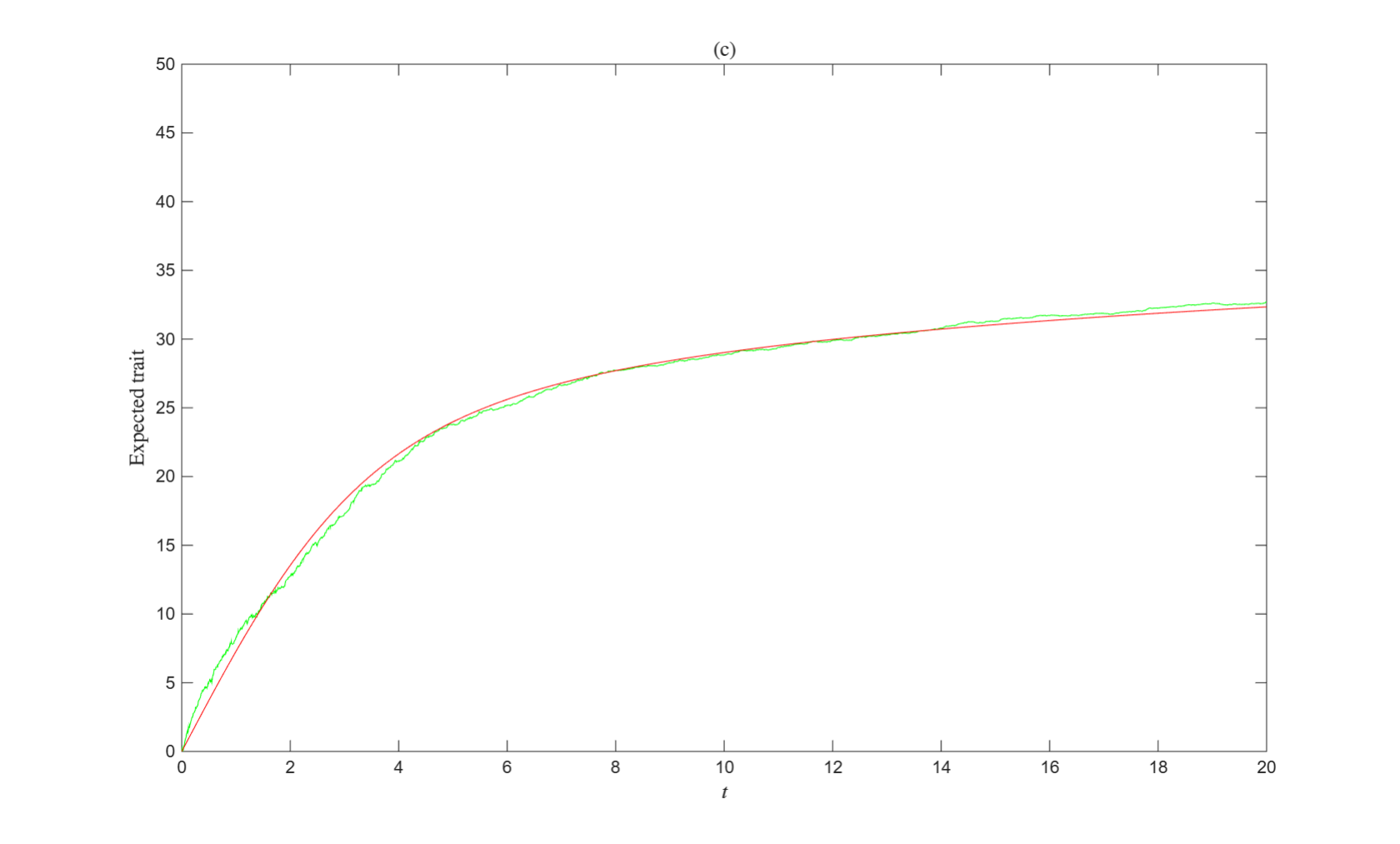}
\caption{Simulation and deterministic approximation for the logistic mortality
profile: $I=2000$, $N=1000$, $\lambda=10$, $r=0.4$, $\mu_\infty=0.1$, $c=1$,
and $e=10$. (a)~Expected trait value (green) together with the minimum
and maximum traits present (blue). (b)~Aggregate population density
relative to $N$ (black) and its deterministic approximation (red). 
The aggregate density rapidly approaches a quasi-equilibrium level and 
fluctuates about its deterministic counterpart.
(c)~Expected trait value (green) and its
deterministic approximation (red) over the initial phase of the
evolution. Although the population persists at positive density, the
mean trait continues to increase, suggesting ongoing evolutionary drift
through the trait space.}
\label{fig:FiguresB}
\end{figure}

\begin{figure}[htbp]
\centering
\includegraphics[height=0.27\textheight,width=0.85\textwidth]{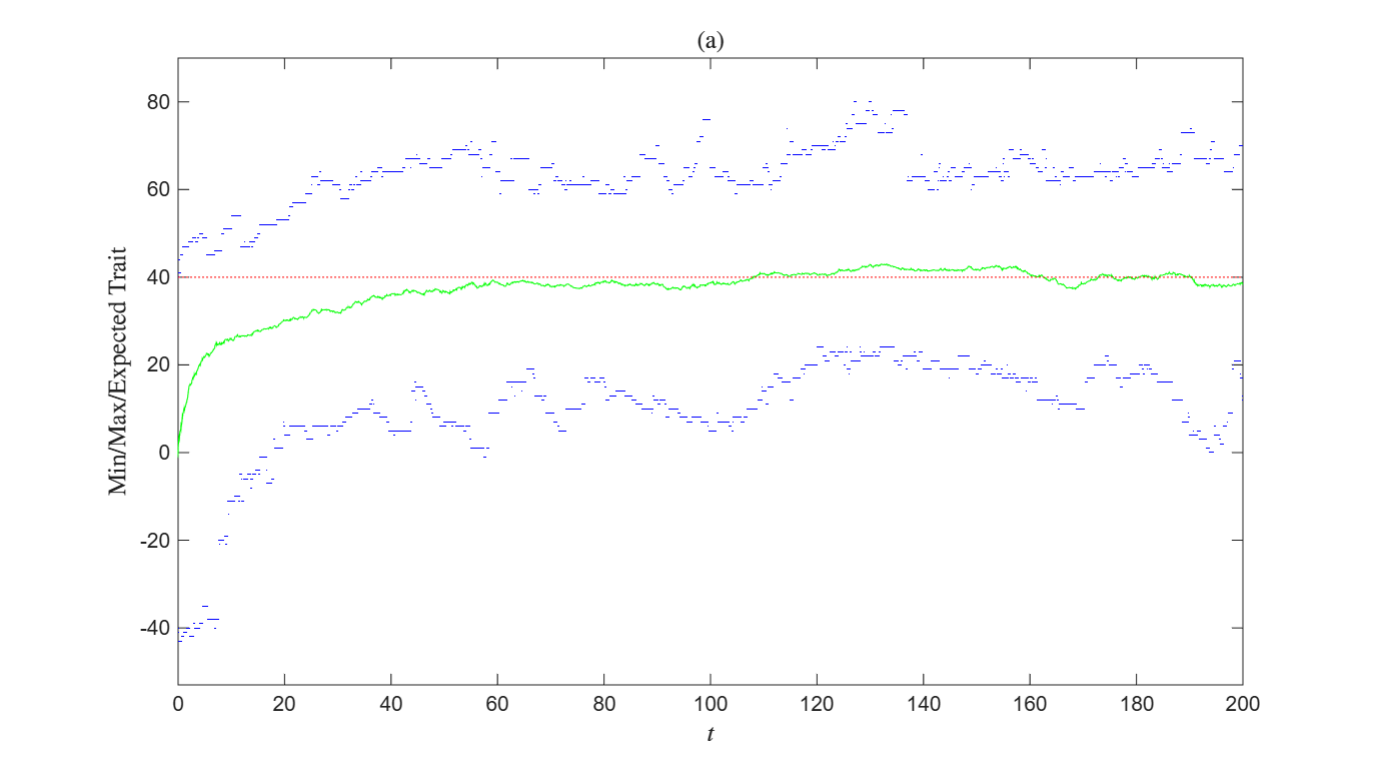}
\includegraphics[height=0.27\textheight,width=0.85\textwidth]{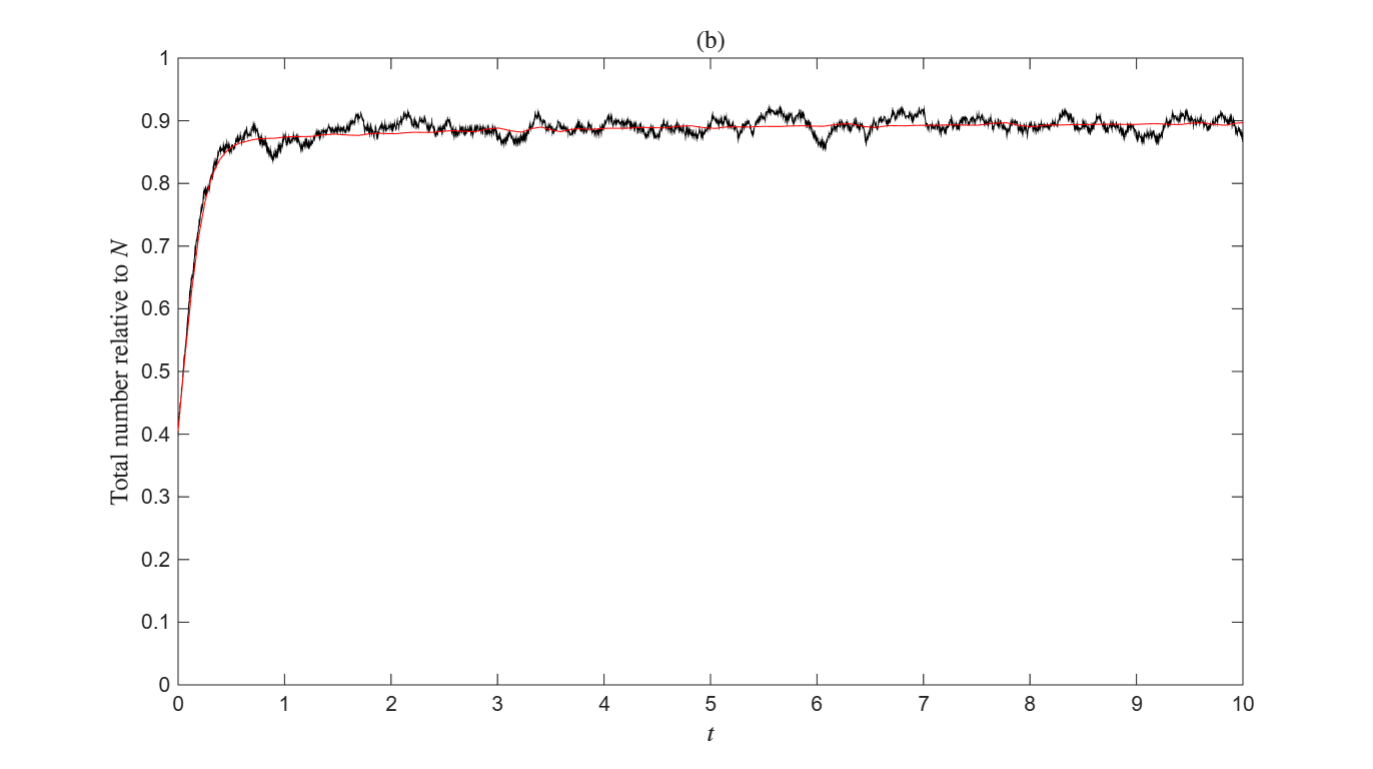}
\includegraphics[height=0.27\textheight,width=0.85\textwidth]{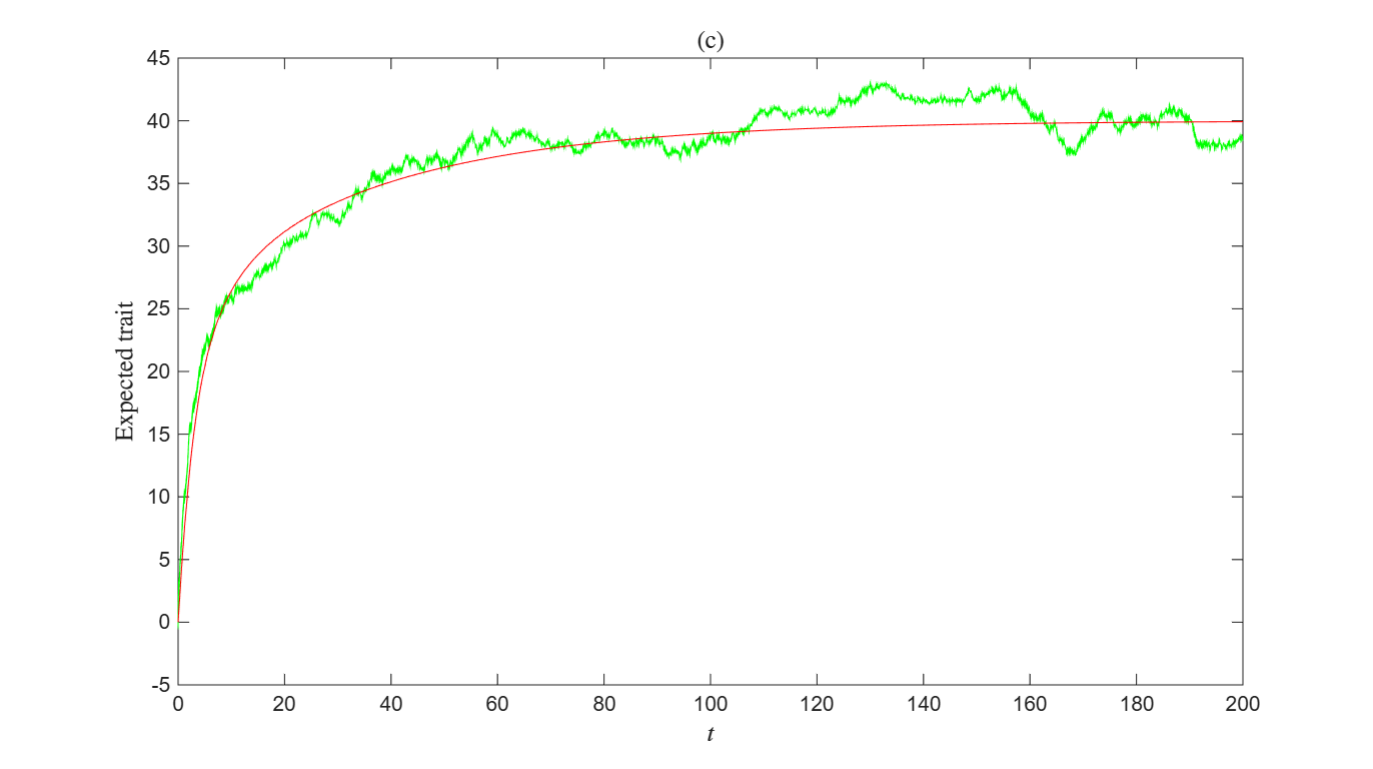}
\caption{Simulation and deterministic approximation for 
stabilizing selection: $I=200$, $N=1000$, $\lambda=10$, $r=0.4$, 
$\mu_{\text{min}}=1$, $f=10$. 
(a)~Expected trait value (green) together with the minimum
and maximum traits present (blue). 
The red dotted line indicates the optimal trait $i^*=40$.
(b)~Aggregate population density
relative to $N$ (black) and its deterministic approximation (red). The
density rapidly approaches a quasi equilibrium fluctuating about the
deterministic equilibrium. (c)~Expected trait value (green) and its
deterministic approximation (red).
The population remains localized near the optimal trait,
consistent with the existence of a confining mutation--selection
equilibrium.}
\label{fig:FiguresC}
\end{figure}

Our next experiment differed only in that the birth rate was increased
to $\lambda=10$. The results are shown in Figure~\ref{fig:FiguresB}. The
behaviour is markedly different. The aggregate density now rapidly
approaches a persistent quasi equilibrium, fluctuating about its
deterministic counterpart. Once again, the simulated trajectory closely
tracks the deterministic trajectory. At the same time, however, the
mean trait value continues to increase. This raises a natural question:
does the rate of increase eventually slow sufficiently for the trait
distribution to enter a quasi-stationary regime? 
The numerical evidence suggests otherwise. Although the logistic
mortality profile is not constant, the special case $\mu_i\equiv\mu$,
analysed in the next section, exhibits a closely related mechanism: the
population may persist at positive aggregate density while
$\pi_i(t)\to 0$ for every fixed $i\in\Z$.
These experiments motivate the special-case analyses of
Sections~\ref{sec:noselection} and~\ref{sec:nomutation}, and the general
analysis of Section~\ref{sec:generalcase}.

We next considered two confining mortality profiles.
The first corresponds to stabilizing selection:
$$
\mu_i=\mu_{\min}+a(i-i^*)^2,
$$
where the minimum mortality $\mu_{\min}$ is attained uniquely at the
optimal trait $i^*$. For a particular value of $I$, we set
$a=(f-1)\mu_{\text{min}}/\max((I-i^*)^2,(-I-i^*)^2)$, where $f>1$,
so that $\max_{i\in \T_I} \mu_i = f\mu_{\text{min}}$. When experimenting with
varying $I$ (see Proposition~\ref{truncation} below) $a$ had a fixed
value. The second corresponds to disruptive selection:
$$
\mu_i=\mu_{\min}+a(i^2-(i^*)^2)^2.
$$
For the disruptive-selection profile, we chose
$$
a=
\frac{(f-1)\mu_{\min}}
{\max_{i\in\T_I}(i^2-(i^*)^2)^2},
$$
so that $\max_{i\in\T_I}\mu_i=f\mu_{\min}$.
In this case the minimum mortality $\mu_{\min}$ is attained at the two
equally favourable traits $\pm i^*$. 

For the first of these profiles we set $N=1000$, $\lambda=10$,
$i^*=40$, and $f=10$. (The disruptive-selection example is deferred
until Section~\ref{sec:generalcase}.) The results are shown in
Figure~\ref{fig:FiguresC}.
Plot~(a) displays the expected trait value together with the minimum and
maximum traits present in the population. Selection initially drives the
population towards larger trait values, but, in contrast to
Figure~\ref{fig:FiguresB}, the increase slows as the population
approaches the optimal trait $i^*=40$ (indicated). 
The occupied trait range likewise
becomes concentrated around this optimum.
This behaviour contrasts sharply with that observed in
Figure~\ref{fig:FiguresB}, where trait values continue to drift through
the trait space.
Plot~(b) shows the aggregate population density relative to $N$,
together with its deterministic approximation. 
Consistent with Theorem~\ref{BL}, the simulated trajectory closely
tracks the deterministic solution over the displayed time interval. Both
trajectories settle near a positive equilibrium.
Finally, Plot~(c) compares the simulated expected trait value with its
deterministic approximation. The agreement is again seen to be good.
Taken together, the three plots suggest convergence towards the
equilibrium predicted by the deterministic model, with the population
remaining localized near the optimal trait.
Not shown are simulations with the much smaller value of $\lambda=0.06$. 
The observed evanescent behaviour was similar to that shown in
Figure~\ref{fig:FiguresA}.

\section{$\mu_i\equiv\mu$ (no selection)}
\label{sec:noselection}

The special case $\mu_i\equiv\mu$ is particularly revealing, since the
effects of mutation and population regulation can be studied in the
absence of selection. The case $\bx_0=\bzero$ is trivial, so throughout
this section we suppose that $\bx_0\neq\bzero$, and write
$m_0=m(0)=\sum_i x_i(0)>0$.
We see immediately from~(\ref{dmdt}) that $m$
follows the familiar Verhulst model~\cite{Ver1838}:
$$
\dot m = \lambda m(1-m)-\mu m
= \lambda m \left(\rho -m \right),
$$
where $\rho=1-\mu/\lambda$.
It has the explicit solution
$$
m(t)=
\begin{cases}
\begin{aligned}
&{\ds \frac{\rho m_0}{m_0 + (\rho-m_0)e^{-\lambda\rho t}},}
&\text{if } \lambda\neq \mu,\\[1ex]
&{\ds \frac{m_0}{1+m_0 \mu t},}
&\text{if } \lambda=\mu,
\end{aligned}
\end{cases}
$$
where $m_0=\sum_i x_i(0)$.
It has the stable equilibrium $m^*=\rho$ when $\lambda>\mu$. Otherwise,
$0$ is the stable equilibrium. 
So~(\ref{PKP1}) becomes
\begin{equation}
\dot x_i
= 
\lambda (1-m(t))\sum_{k} x_k p_{i-k} -\mu x_i,
\label{PKP1b}
\end{equation}
with $m(t)$ available explicitly.
The trivial case where there is no mutation ($p_0=1$) has
$$
\dot x_i
= x_i \left(\lambda (1-m)-\mu \right)
= \lambda x_i \left(\rho-m\right),
$$
that is, $\dot x_i/x_i = \dot m/m$, from which it follows that 
$x_i(t)=x_i(0)m(t)/m(0)$.
So, if $\lambda> \mu$, $x_i(t)\to x_i(0)\rho/m_0$,
while if $\lambda\leq \mu$, $x_i(t)\to 0$ for all $i$.

Now assume that $p_0<1$. It is clear that if $\lambda\leq\mu$ then
$x_i(t)\to 0$ for all $i$, because $m(t)\to 0$. 
We now show that $x_i(t)\to 0$ for all $i$ even when $\lambda>\mu$. 
Under this condition, any non-zero equilibrium point $\bx^*$
of~\eqref{PKP1b} satisfies $\sum_{k} x_k^* p_{i-k}=x_i^*$, because
$m^*=\rho=1-\mu/\lambda$ implies that $\lambda(1-m^*)=\mu$. However,
this would require the equilibrium trait proportions $\bpi^*$ to satisfy
$P\bpi^*=\bpi^*$.
Unlike the finite-dimensional
replicator-mutator equation, the mutation operator~$P$ admits no
invariant probability distribution on $\Z$. 
Since $p_0<1$, the associated random walk is non-trivial; by the
standing irreducibility assumption, it is an irreducible random walk 
on~$\Z$, and hence is either transient or null recurrent.
Consequently there is no probability distribution~$\bpi$
satisfying $P\bpi=\bpi$, and hence (\ref{PKP1b}) possesses no
non-trivial equilibrium. To determine the long-term behaviour of
solutions, we employ a time-change argument, which at the same time
yields an explicit expression for the trait proportions $\bpi(t)$.

From~(\ref{pidot}) we see that $\bpi(t)$ satisfies $\dot\bpi = \lambda
(1-m)(P - \Id)\bpi$. So, changes in trait composition occur on a time
scale determined by the factor $\lambda(1-m(t))$, a quantity that can be
interpreted as the effective per-capita birth rate (which decreases as
the population approaches its ceiling). Since mutation occurs only at
birth, the natural clock governing changes in trait composition is
cumulative reproductive opportunity rather than chronological time.
Accordingly, we introduce the new time variable
$$
\tau(t)=\int_0^t \lambda(1-m(s))\,ds,
$$
which can be interpreted as an ``evolutionary clock'':
equal increments of $\tau$ correspond to equal opportunities for
mutation and hence exploration of the trait space.
Since $d\tau/dt=\lambda(1-m(t))$ and $m(t)<1$ for $t>0$, the map $t\mapsto\tau(t)$ is strictly increasing. With a slight abuse of notation, writing
$\bpi(\tau)$ for the reparametrized vector $\bpi(t(\tau))$, 
\begin{equation}
\frac{d\bpi}{d\tau} = (P - \Id)\bpi.
\label{dpibydtau}
\end{equation}
In $\tau$-time every individual experiences mutations at unit rate, 
and the trait distribution evolves according to~(\ref{dpibydtau}).
We can show that, in $\BS$, the operator $P-\Id$ is bounded.
$$
\|P\bx\|_1 = \sum_i |(P\bx)_i|
= \sum_i \left|\sum_{k} x_k p_{i-k}\right|
\leq  \sum_i \sum_{k} |x_k| p_{i-k}
=  \sum_{k} |x_k| \sum_i p_{i-k}
=  \sum_{k} |x_k| = \|\bx\|_1,
$$
and clearly $\|\Id \bx\|_1= \|\bx\|_1$. So,
$$
\|(P-\Id)\bx\|_1 = \|P\bx - x\|_1 
\leq \|P\bx\|_1 + \|\bx\|_1 \leq  2\|\bx\|_1,
$$
and therefore $\|P-\Id\|\leq 2$. It follows that
\begin{equation}
\bpi(\tau )
= \exp(\tau(P-\Id))\bpi(0)
=\sum_{n=0}^\infty e^{-\tau} \frac{\tau^n}{n!} P^n \bpi(0),
\label{mexp}
\end{equation}
where here $\exp(\cdot)$ is the operator exponential on $\BS$.
So, $\bpi(\tau)$ is precisely the law at time $\tau$ of a random walk
that makes transitions according to $\bp$ at points of a unit-rate
Poisson process.
Also, we have an explicit expression for~$\tau(t)$. 
Since $\dot\tau=\lambda(1-m)$ and $\dot m/m = \lambda(1-m)-\mu$, we
have that $\dot\tau = \mu+\dot m/m$. Integrating this gives
$$
\tau(t) =\mu t + \log\left( \frac{m(t)}{m_0} \right),
$$
which can be substituted into (\ref{mexp}) to get an explicit
expression for $\bpi(t)$. Notice that since $m(t)\to\rho>0$ when
$\lambda>\mu$, it follows that $\tau(t)=\mu t+O(1)$ as $t\to\infty$, and
hence $\tau(t)\to\infty$. We can now deduce from~(\ref{mexp}) that
$\pi_i(t)\to 0$, as follows.
Since our random walk is irreducible and not positive recurrent, 
we have, for each fixed $i$, that $p^{\,n*}_i \to 0$ as $n\to\infty$,
where $\bp^{\,n*}$ denotes the $n$-fold convolution of $\bp$.
Hence, by dominated convergence,
$$
(P^n\bpi(0))_i = \sum_k \pi_k(0)\,p^{\,n*}_{i-k} \to 0.
$$
Since \eqref{mexp} is a Poisson mixture of the vectors $P^n\bpi(0)$ and
$\tau(t)\to\infty$, it follows that $\pi_i(t)\to 0$ for every fixed $i$
(given $\varepsilon>0$, 
if we choose $n_\varepsilon$ 
such that such that
$(P^n\bpi(0))_i<\varepsilon$ for all $n\geq n_\varepsilon$, 
then $\pi_i(t)\leq
\Pr\{\operatorname{Poisson}(\tau(t))<n_\varepsilon\}+\varepsilon$, and the
probability on the right tends to $0$ as $t\to\infty$.) Note that
$\pi_i(t)\to 0$ even though $\|\bpi(t)\|_1=1$. Finally, since
$x_i(t)=m(t)\pi_i(t)$, and $m(t)\to \rho$, we deduce that $x_i(t)\to 0$
for every $i$. Thus, when $\mu_i\equiv\mu$ and $p_0<1$, we observe
coordinatewise disappearance: $x_i(t)\to 0$ for every fixed $i\in\Z$,
even though the aggregate population density converges to the positive
limit $\rho$ when $\lambda>\mu$. The trait distribution diffuses
indefinitely through $\Z$, with mass escaping to increasingly distant
trait values. This is in sharp contrast to cases considered later, where
$\mu_i\to\infty$; there the random walk ceases to wander freely because
mortality confines it.

The conclusion that $\pi_i(t)\to 0$ for each fixed trait $i$ does not
require any moment assumptions. To obtain quantitative asymptotics for
the rate of convergence, and for the spread of the trait distribution
through $\Z$, we now assume that $\bp$ has mean $\mw:=\sum_{k\in\Z}kp_k$
and finite second moment $\vw:=\sum_{k\in\Z}k^2p_k$. If
$\sigmasquaredw:=\sum_{k\in\Z}(k-\mw)^2p_k$ denotes the variance of a
single mutation displacement, then $\vw=\sigmasquaredw+\mw^2$.
Equation~\eqref{mexp} identifies the proportion vector with the law at
time $\tau$ of the continuous-time random walk on $\Z$ with generator
$P-\Id$ described above. More explicitly, let
$S_\tau=\sum_{j=1}^{N_\tau}Y_j$, where $N_\tau$ is Poisson with mean
$\tau$ and the independent displacements $Y_1,Y_2,\ldots$ have
distribution $\bp$. 
Since the mutation walk is irreducible on $\Z$, and ``Poissonization''
removes any periodicity of the corresponding discrete-time walk, a local
central limit theorem for the continuous-time lattice random walk with
finite second moment (compare Theorems~2.1.3 and~2.3.9
of~\cite{LawlerLimic2010}) gives
$$
\Pr\{S_\tau=i\}
=
\frac{1}{\sqrt{2\pi\vw\tau}}
\exp\!\left\{
-\frac{(i-\mw\tau)^2}{2\vw\tau}
\right\}
+
o\!\left(\tau^{-1/2}\right),
$$
uniformly in $i\in\Z$. Since the random walk has initial distribution
$\bpi(0)$, we have $\pi_i(\tau) =
\sum_{k\in\Z}\pi_k(0)\Pr\{S_\tau=i-k\}$. Convolution with the fixed
initial distribution $\bpi(0)$ does not alter the leading term. Indeed,
this follows by first restricting the convolution to a finite set of
initial states, for which fixed translations do not affect the leading
Gaussian approximation, and then using the arbitrarily small total mass
of the remaining tail, together with the uniform bound
$\sup_{i\in\Z}\Pr\{S_\tau=i\}=O(\tau^{-1/2})$. Returning to
chronological time, we consequently obtain
$$
\pi_i(t)
=
\frac{1}{\sqrt{2\pi\vw\tau(t)}}
\exp\!\left\{
-\frac{(i-\mw\tau(t))^2}{2\vw\tau(t)}
\right\}
+
o\!\left(\tau(t)^{-1/2}\right),
$$
uniformly in $i\in\Z$.

Here the variance parameter is $\vw$, rather than $\sigmasquaredw$,
because the number of mutation steps occurring by evolutionary time
$\tau$ is Poisson with mean $\tau$. Indeed, conditioning on $N_\tau$
gives
$$
\operatorname{Var}(S_\tau)
= \mathbb{E}(N_\tau)\operatorname{Var}(Y_1)
+ \operatorname{Var}(N_\tau)\bigl(\mathbb{E}Y_1\bigr)^2 
= \tau\sigmasquaredw+\tau\mw^2 = \vw\tau.
$$
Furthermore, recall that $\tau(t)\sim\mu t$. If, additionally, $\bp$ is
symmetric, then $\mw=0$ and $\vw=\sigmasquaredw$, implying that, for
every fixed trait value $i$,
$$
\pi_i(t)
\sim
\frac{1}{\sqrt{2\pi\sigmasquaredw\mu t}},
\qquad t\to\infty.
$$
Here we have used $\exp\!\left(-{i^2}/(2\sigmasquaredw\tau(t))\right)\to
1$. Provided that the initial trait distribution has finite variance,
independence of the initial trait and the subsequent mutation
displacements gives $\operatorname{Var}_{\pi(t)}(X) =
\operatorname{Var}_{\pi(0)}(X)+\vw\tau(t)$. It therefore grows linearly
in $t$, and the standard deviation is of order $\sqrt{t}$. For the
discrete Laplace mutation kernel~\eqref{laplace}, used in our numerical
experiments, we have $\mw=0$ and $\sigmasquaredw=\vw=2r/(1-r)^2$. Thus,
both the asymptotic decay of the proportion associated with each fixed
trait and the diffusive spread of the trait distribution are determined
explicitly by the mutation parameter $r$.

\begin{figure}[htbp]
\centering
\includegraphics[height=6cm,width=0.45\textwidth]{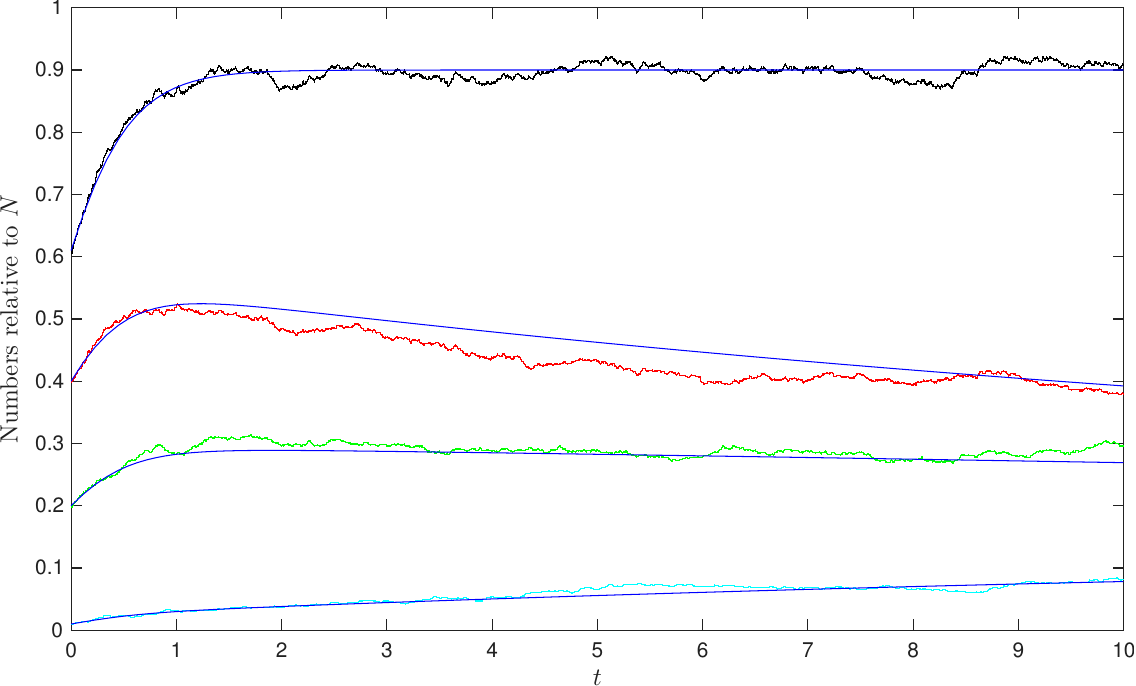}
\includegraphics[
  height=6.3cm,width=0.54\textwidth,
  trim=0cm 3mm 0cm 0cm,
  clip
]{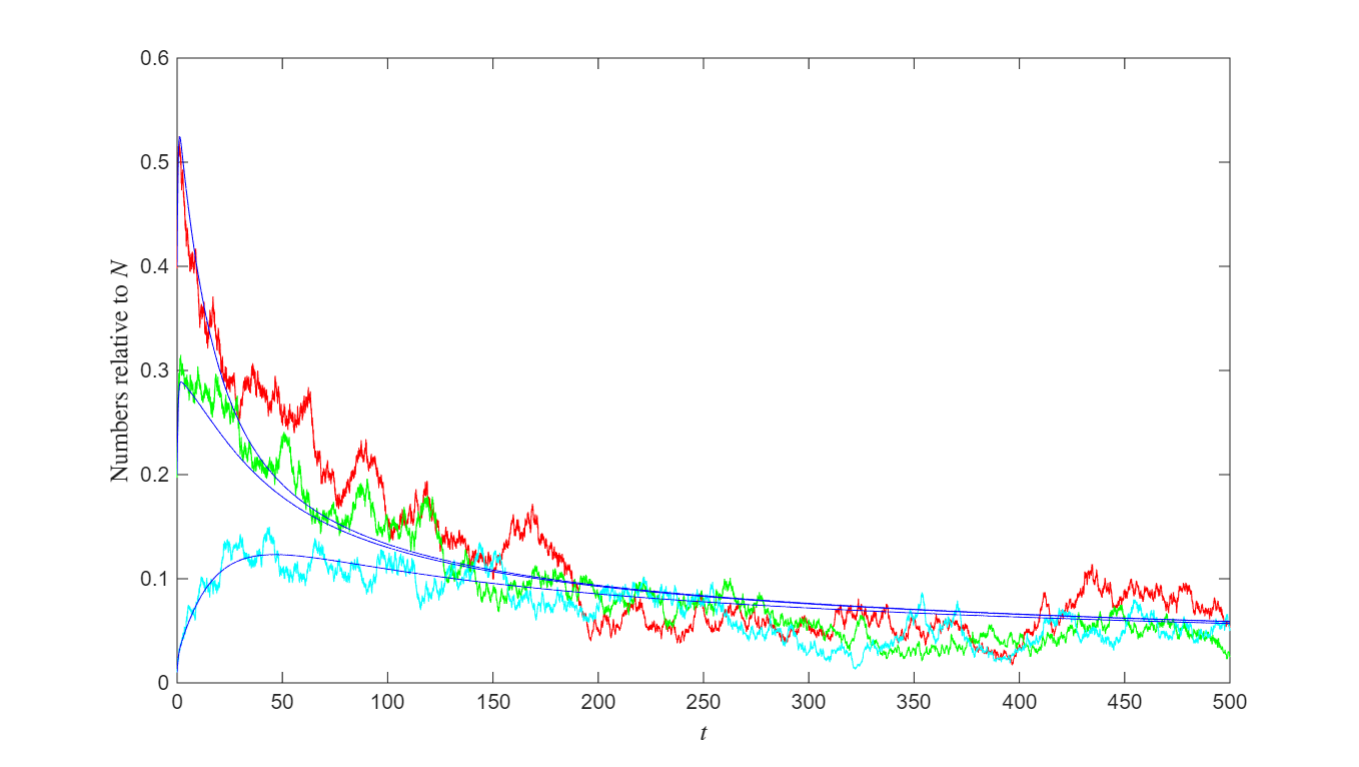}
\caption{
Simulated sample paths of individual trait densities, for traits $-1$
(red), $0$ (green), and~$1$ (cyan), as well as the aggregate density
(black) in the first plot. The corresponding deterministic trajectories
are in blue. The parameter values are $I=200$,
$K=20$, $N=1000$, $\lambda=3$, $\mu=0.3$, and $r=0.4$, with initial
numbers $n_{-1}(0)=400$, $n_{0}(0)=200$, $n_{1}(0)=10$.}
\label{fig:muconstant}
\end{figure}

\medskip
\noindent
{\bf Numerical experiments}.
We simulated the model with $\bp$ following the truncated discrete
Laplace distribution~(\ref{laplacet}) with $K=20$ and $r=0.4$, $N=1000$, 
and with parameters $\lambda=3$, $\mu=3/10$, so
that $\rho=9/10$. The initial numbers for traits~$-1$, $0$, and~$1$ were
$n_{-1}(0)=400$, $n_{0}(0)=200$, $n_{1}(0)=10$, with $0$ for the
remaining traits. The results are shown in
Figure~\ref{fig:muconstant}. The sample paths of trait densities for
traits~$-1$, $0$, and $1$ are shown, together with the corresponding
deterministic trajectories. The first plot includes the aggregate
density (across all traits); notice that even over this relatively short
time scale it settles down to a quasi equilibrium.
The second plot has the same settings, but covers a longer time
interval. The densities associated with the three displayed traits
decline towards $0$. Thus, the numerical results illustrate the
coexistence of demographic persistence, with $m(t)\to\rho=0.9$ in the
deterministic model, and coordinatewise disappearance, with $x_i(t)\to 0$
for every fixed trait~$i$.

\section{$p_0=1$ (no mutation)}
\label{sec:nomutation}

In this section we set aside the standing irreducibility assumption on
the mutation kernel and consider the degenerate case $p_0=1$.
Traits are inherited without change and
$(P\bx)_i=x_i$. The dynamics therefore decouple across traits except
through the total population density. System~\eqref{PKP1} reduces to
\begin{equation}
\dot x_i=\lambda x_i(\rho_i-m),\qquad i\in\Z,
\label{eq:nomutation}
\end{equation} 
where $\rho_i=1-\mu_i/\lambda$. In particular, each coordinate
hyperplane is invariant: if $x_i(0)=0$, then $x_i(t)=0$ for all
$t\geq0$. If $\lambda\leq\mu_i$, then the large-$N$ density of trait $i$
is non-increasing, and is strictly decreasing whenever it is positive.
If $\lambda>\mu_i$, it increases when the \emph{aggregate} density
$m(t)$ is below $\rho_i$ and decreases when it is above $\rho_i$.
Although each coordinate has logistic form, its growth or decline is
determined by $m(t)$, rather than by the density $x_i(t)$ of the trait
itself. Consequently, the traits interact only through competition, 
and the long-term behaviour is governed by differences in mortality.

Let $\bx^*$ be a non-zero equilibrium and put $m^*=\sum_i x_i^*$.
From~\eqref{eq:nomutation}, if $x_i^*>0$, then $m^*=\rho_i$. Consequently,
all traits represented at equilibrium must have the same mortality. For
each value $\mu$ attained by the mortality profile, let 
$J_\mu:=\{i\in\Z:\mu_i=\mu\}$ be the corresponding ``mortality class'', and
set $\rho_\mu:=1-\mu/\lambda$. A non-zero equilibrium supported in
$J_\mu$ can exist only when $\mu<\lambda$, in which case its total mass
must equal $\rho_\mu$. Conversely, every member of 
$$
E_\mu := \left\{ \bx\in E: {\ts\sum_{i\in J_\mu}x_i}=\rho_\mu,
\quad x_i=0\text{ for }i\notin J_\mu \right\}
$$
is an equilibrium. Thus, for each nonempty mortality class $J_\mu$ with
$\mu<\lambda$, $E_\mu$ is precisely the equilibrium set supported in
that class. 
The set also contains equilibria supported on proper subsets of $J_\mu$,
obtained by allowing one or more coordinates within the class to vanish.
Together with the extinction equilibrium $\bzero$,
these sets comprise all equilibria of~\eqref{eq:nomutation}.

There may be many such equilibrium sets, corresponding to the
different mortality classes with mortality below $\lambda$. Their
relevance to a particular trajectory depends strongly on its initial
support. Since traits are inherited without mutation, a trait absent
initially remains absent for all time, and an equilibrium involving that
trait cannot be approached from the given initial state.
Proposition~\ref{minmortality} below identifies the limiting equilibrium
when the minimum mortality among the traits initially present is
attained.

The competitive structure of the system can be seen by comparing the
densities of two traits. Let $i,j\in\Z$ be distinct and suppose that
$x_i(0)>0$ and $x_j(0)>0$. Equation~\eqref{eq:nomutation} implies that
both coordinates remain positive for all $t\geq0$. Hence
$$
\frac{\dot x_i}{x_i}=\lambda(\rho_i-m)
\quad\text{and}\quad
\frac{\dot x_j}{x_j}=\lambda(\rho_j-m).
$$
Subtracting gives
$$
\frac{d}{dt}\log\frac{x_i}{x_j}
=\lambda(\rho_i-\rho_j)
=-(\mu_i-\mu_j),
$$
and therefore
\begin{equation}
\frac{x_i(t)}{x_j(t)}
=
\frac{x_i(0)}{x_j(0)}
e^{\lambda(\rho_i-\rho_j)t}
=
\frac{x_i(0)}{x_j(0)}
e^{-(\mu_i-\mu_j)t}.
\label{ratio}
\end{equation}
Thus, if $\rho_i>\rho_j$, equivalently $\mu_i<\mu_j$, then
$x_i(t)/x_j(t)\to\infty$ as $t\to\infty$, whereas if
$\rho_i<\rho_j$, then $x_i(t)/x_j(t)\to 0$. If
$\rho_i=\rho_j$, the ratio $x_i(t)/x_j(t)$ remains constant.
Consequently, among traits represented initially, those with lower
mortality increase exponentially relative to those with higher
mortality, while the relative abundances of traits belonging to the
same mortality class are preserved.

In the case where $\mu_i$ achieves a minimum 
among traits represented in the initial support, we have the following
result. It says that (i) the total mass concentrates on the mortality class with
minimal mortality among traits initially present, (ii) within that class
the initial proportions are preserved, and (iii) the limit point
is determined by the initial trait density.

\begin{proposition}
\label{minmortality}
Suppose that there is a minimum mortality 
among traits represented in the initial support:
$\mu^*=\min\{\mu_i: x_i(0)>0\}$ and suppose that $\lambda>\mu^*$.
Let $J^*=\{i: x_i(0)>0,\, \mu_i=\mu^*\}$ be the corresponding mortality
class. Then, for every $i\in \Z$, $x_i(t) \to x_i^*$, where
$$
x_i^*= 
\begin{cases}
\begin{aligned}
&{\ds \rho^* \frac{x_i(0)}{\sum_{j\in J^*} x_j(0)}}
           &\text{if } i\in J^*\\[1ex]
&{\ds 0}   &\text{if } i\notin J^*,
\end{aligned}
\end{cases}
$$
and $\rho^*=1-\mu^*/\lambda$.
\end{proposition}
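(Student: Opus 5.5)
The plan is to solve \eqref{eq:nomutation} explicitly, coordinate by coordinate, in terms of a single scalar function, and then to show that this scalar function converges.

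By Theorem~\ref{staysinE} the mild solution exists for all time and stays in $E$, so $t\mapsto m(t)=\sum_i x_i(t)\in[0,1]$ is continuous. By Remark~(2) following Theorem~\ref{thm:dmdt}, each $x_i$ is $C^1$ and satisfies $\dot x_i=x_i\bigl(\lambda(1-m(t))-\mu_i\bigr)$. This is a scalar linear ODE with continuous coefficient, so
$$
x_i(t)=x_i(0)\,e^{-(\mu_i-\mu^*)t}\,B(t),
\qquad
B(t):=\exp\Bigl(\int_0^t\bigl(\lambda(1-m(s))-\mu^*\bigr)\,ds\Bigr).
$$
This formula also holds trivially when $x_i(0)=0$, and it needs no boundedness of $\bmu$.

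Summing over $i$ gives $m(t)=B(t)S(t)$, where
$$
S(t):=\sum_i x_i(0)\,e^{-(\mu_i-\mu^*)t}.
$$
For $t\ge0$ only indices with $x_i(0)>0$ contribute, and all of these have $\mu_i\ge\mu^*$. So each term is non-increasing in $t$ and dominated by $x_i(0)$, with $\sum_i x_i(0)\le1$. By dominated convergence,
$$
S(t)\downarrow s^*:=\sum_{j\in J^*}x_j(0)>0,
$$
because terms with $\mu_i>\mu^*$ tend to $0$ and terms in $J^*$ are constant. Also $S$ is continuous, and $S(t)\ge s^*$.

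The key step is to identify the limit of $B$. From its definition, $B$ is $C^1$ and satisfies
$$
\dot B=B\bigl(\lambda(1-BS)-\mu^*\bigr)=\lambda B(\rho^*-BS).
$$
This is a logistic equation with time-varying coefficient. Since $B>0$, I would pass to $u:=1/B$, which satisfies the linear equation
$$
\dot u=-\lambda\rho^*u+\lambda S(t),
\qquad u(0)=1.
$$
Hence
$$
u(t)=e^{-\lambda\rho^*t}+\lambda\int_0^t e^{-\lambda\rho^*(t-s)}S(s)\,ds .
$$
Because $\lambda\rho^*=\lambda-\mu^*>0$ and $S(s)\to s^*$, a standard convolution (Ces\`aro-type) argument gives $u(t)\to s^*/\rho^*$. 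To do this, split the integral at a large time $T$ beyond which $|S-s^*|<\varepsilon$, and use $S\le1$ on $[0,T]$. Therefore $B(t)\to\rho^*/s^*$.

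Finally, substitute into the coordinate formula. For $i\in J^*$, $x_i(t)=x_i(0)B(t)\to\rho^*x_i(0)/s^*$. For $i$ with $x_i(0)>0$ but $\mu_i>\mu^*$, $x_i(t)\to x_i(0)\cdot0\cdot\rho^*/s^*=0$. For $x_i(0)=0$, $x_i\equiv0$. This is exactly the claimed $x_i^*$, and it is consistent with \eqref{ratio}.

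The only delicate point is the justification at the start. The componentwise explicit formula must be legitimate for the mild solution with possibly unbounded $\bmu$, and the sum defining $m$ must be interchanged with the exponential factors. The first is handled by Remark~(2) together with continuity of $m$ (from $\bx\in C([0,\infty);\BS)$). The second only requires nonnegativity and dominated convergence, since $B(t)$ is a common factor. Everything else is elementary.
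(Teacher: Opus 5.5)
Your argument is correct, but it follows a different route from the paper's proof of this proposition.

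The paper works in stages:
\begin{enumerate}
\item It uses the ratio formula \eqref{ratio}, together with $x_j(t)\leq 1$ for a fixed $j\in J^*$, to show that $x_i(t)\to0$ for $i\notin J^*$.
\item It applies dominated convergence to get $r(t):=\sum_{i\notin J^*}x_i(t)\to0$, and observes that proportions within $J^*$ are frozen.
\item It sums the equations over $J^*$ to obtain $\dot m_{J^*}=\lambda m_{J^*}(\rho^*-m_{J^*}-r(t))$. It then sandwiches this between two Verhulst equations to conclude that $m_{J^*}(t)\to\rho^*$.
\end{enumerate}

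You instead factor every coordinate through one scalar $B(t)$ and linearize the resulting non-autonomous logistic equation exactly via $u=1/B$. This is precisely the device the paper reserves for Proposition~\ref{notattained}. Your $S$ is its $a$, your $u$ is $e^{\mu^* t}Y$, and $m=S/u$ reproduces \eqref{eq:massrepresentation2}. In the attained case your version is simpler than the paper's use of it, because $S(t)\downarrow s^*>0$. An elementary convolution limit therefore replaces the log-derivative and l'H\^{o}pital analysis.

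Your route buys three things:
\begin{itemize}
\item an explicit formula for $m(t)$, and hence rates if wanted;
\item an argument that only sums the non-negative values $x_i(t)$, rather than differentiating $\sum_{i\in J^*}x_i$ termwise (a step the paper justifies only briefly by ``absolute summability'');
\item no appeal to comparison theorems.
\end{itemize}
The paper's route has its own advantages. Its Verhulst comparison needs only $r(t)\to0$ rather than exact solvability of the reduced equation. It also puts the ratio formula, and hence the winner-takes-all interpretation, at the centre of the proof.
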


Proposition~\ref{minmortality} shows that, in the absence of mutation,
selection acts as a ``winner-takes-all'' mechanism across mortality
classes: only the class with the lowest mortality (with members present
initially) survives, while the relative proportions within that class
remain unchanged.

\medskip
\noindent
\emph{Remark}.
The ratio formula shows that, among traits represented initially,
those with larger mortality are eliminated exponentially relative to
those with smaller mortality. This is consistent with the linearization
at an equilibrium $\bx^*\in E_\mu$. Since $m(\bx^*)=\rho_\mu$, the
Fr\'echet derivative is given by
$$
(DF(\bx^*)\bh)_i
=
\lambda(\rho_i-\rho_\mu)h_i
-\lambda x_i^*\sum_k h_k.
$$
Perturbations supported in $J_\mu$ that preserve total mass are
annihilated by the linearization and therefore have zero growth rate.
These perturbations correspond to redistributions of mass within the
equilibrium set $E_\mu$. The direction $\bx^*$, which changes the total
mass while preserving the equilibrium trait proportions, has eigenvalue
$-\lambda\rho_\mu$.
For a trait $i\notin J_\mu$, the corresponding coordinate of the
linearized equation satisfies
$\dot h_i=\lambda(\rho_i-\rho_\mu)h_i$. Thus, a newly introduced trait
has positive invasion growth rate precisely when its mortality is lower
than that of the resident mortality class. Consequently, $E_\mu$ can
be locally attracting only if $\rho_i<\rho_\mu$ for every
$i\notin J_\mu$, equivalently if every trait outside $J_\mu$ has
strictly larger mortality.

\medskip
In the case where $\mu_i$ does not achieve a minimum, and
$\mu^*=\inf\{\mu_i: x_i(0)>0\}$, there is no corresponding mortality
class $J^*$. Nevertheless, if $\mu_i>\mu^*$ for some trait $i$, then, by
the definition of infimum, there exists a trait $j$ with
$\mu^*<\mu_j<\mu_i$. The ratio formula~(\ref{ratio}) therefore implies that
$x_i(t)/x_j(t)\to 0$ as $t\to\infty$. Thus, every trait whose mortality
exceeds $\mu^*$ becomes negligible relative to traits with mortality
closer to $\mu^*$. This suggests a qualitatively different asymptotic
regime from that identified in Proposition~\ref{minmortality}. Since the
mortality infimum is not attained, there is no equilibrium set
associated with $\mu^*$. Instead, one expects the population mass to
become concentrated on traits whose mortality approaches $\mu^*$,
without any single trait or mortality class capturing a positive
limiting fraction of the population. We might therefore expect that,
whenever $\lambda>\mu^*$, $m(t)\to \rho^*:=1-{\mu^*}/{\lambda}$, while
$x_i(t)\to 0$ for every fixed trait $i$. In such a scenario the
aggregate population density converges, but the
trait densities do not converge to non-trivial equilibria.

\begin{proposition}
\label{notattained}
Suppose that $p_0=1$, $\bx(0)\neq\bzero$, and
$\mu^*:=\inf\{\mu_i:x_i(0)>0\}$
is finite but is not attained on the initial support. If
$\lambda>\mu^*$, then $m(t)\to\rho^*:=1-\mu^*/\lambda$.
Furthermore, for every $i\in\Z$,
$\pi_i(t)=x_i(t)/m(t)\to0$, and hence $x_i(t)\to0$.
\end{proposition}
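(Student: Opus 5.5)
The plan is to exploit the decoupled form \eqref{eq:nomutation}. Since $p_0=1$, each coordinate is given explicitly by
$$
x_i(t)=x_i(0)\exp\Bigl(\lambda\rho_i t-\lambda\int_0^t m(s)\,ds\Bigr).
$$
This comes from the componentwise mild formula \eqref{mildsolution} together with Remark~(2) after Theorem~\ref{thm:dmdt}. Two consequences follow. Traits outside the initial support $S_0:=\{i:x_i(0)>0\}$ stay at $0$, and for $i,j\in S_0$ the ratio formula \eqref{ratio} holds. By Theorem~\ref{staysinE}, $m(t)\in[0,1]$, and $m(t)>0$ because any $x_j$ with $j\in S_0$ stays positive.

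\textbf{Step 1: $\pi_i(t)\to0$.} For $i\notin S_0$ this is trivial. For $i\in S_0$ we have $\mu_i>\mu^*$, since the infimum is not attained. Choose $j\in S_0$ with $\mu^*<\mu_j<\mu_i$. Then
$$
\pi_i(t)=\frac{x_i(t)}{m(t)}\le \frac{x_i(t)}{x_j(t)}=\frac{x_i(0)}{x_j(0)}e^{-(\mu_i-\mu_j)t}\to0.
$$
Since $m\le1$, this also gives $x_i(t)\to0$. The same argument gives something stronger, which Step 2 needs. For any $\delta>0$, let $B_\delta:=\{i\in S_0:\mu_i\ge\mu^*+\lambda\delta\}$ and fix $j_\delta\in S_0$ with $\mu_{j_\delta}<\mu^*+\lambda\delta/2$. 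Each $i\in B_\delta$ then satisfies $x_i/x_{j_\delta}\le (x_i(0)/x_{j_\delta}(0))e^{-\lambda\delta t/2}$. Summing over $B_\delta$ (the sum $\sum_i x_i(0)$ is finite) gives
$$
\sum_{i\in B_\delta}x_i(t)\le \frac{m_0}{x_{j_\delta}(0)}e^{-\lambda\delta t/2}\,x_{j_\delta}(t)=o(m(t)).
$$
So asymptotically all the mass lies in $A_\delta:=S_0\setminus B_\delta$, where $\rho^*-\delta<\rho_i<\rho^*$.

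\textbf{Step 2: $m(t)\to\rho^*$.} The main obstacle is that \eqref{dmdt} was established only for bounded mortality (Theorem~\ref{thm:dmdt}), and here $\mu_i$ may be unbounded on $S_0$. I would avoid differentiating $m$ and instead work with $m_A(t):=\sum_{i\in A_\delta}x_i(t)$. On $A_\delta$ the mortality is bounded by $\mu^*+\lambda\delta$, so the differentiated series converges uniformly on compact time intervals and termwise differentiation is justified:
$$
\dot m_A=\lambda\sum_{i\in A_\delta}x_i(\rho_i-m).
$$
This lies between $\lambda m_A(\rho^*-\delta-m)$ and $\lambda m_A(\rho^*-m)$.

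By Step 1, $m=m_A(1+o(1))$. Hence, for large $t$,
$$
\lambda m_A\bigl(\rho^*-\delta-m_A(1+o(1))\bigr)\le\dot m_A\le\lambda m_A\bigl(\rho^*-m_A\bigr).
$$
A standard comparison with the Verhulst equation then gives
$$
\rho^*-\delta-o(1)\le\liminf m_A\le\limsup m_A\le\rho^*.
$$
For the lower bound, note that $m_A>0$ and that, whenever $m_A<\rho^*-2\delta$ at large times, $\dot m_A\ge c\,m_A$ with $c>0$; so $m_A$ cannot remain below that level, and once above it cannot fall below it again. The same bounds transfer to $m$, and letting $\delta\downarrow0$ yields $m(t)\to\rho^*$. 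This completes the proof, the $x_i\to0$ claim having already followed from Step 1.
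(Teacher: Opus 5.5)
Your proof is correct, but it takes a different route from the paper's.

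\textbf{The paper's route.} It writes $x_i(t)=x_i(0)e^{-\mu_i t}B(t)$ with $B(t)=\exp\bigl(\int_0^t\lambda(1-m(s))\,ds\bigr)$. It observes that $Y=1/B$ satisfies the linear equation $Y'=-\lambda Y+\lambda Z$. This is a Bernoulli-type linearization that depends on the regulation factor having the specific form $1-m$. It yields the closed form $m(t)=e^{\beta t}a(t)/\bigl(1+\lambda\int_0^te^{\beta s}a(s)\,ds\bigr)$, where $a(t)=\sum_ix_i(0)e^{-(\mu_i-\mu^*)t}$. The paper then shows $a'(t)/a(t)\to0$, which needs its own justification of termwise differentiation, and concludes $m\to\rho^*$ by l'H\^opital. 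Only afterwards does it get $\pi_i\to0$, from $\pi_i(t)=x_i(0)e^{-(\mu_i-\mu^*)t}/a(t)$ and a lower bound on $a$.

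\textbf{Your route.} You obtain $\pi_i\to0$ first and directly, from the ratio formula \eqref{ratio} with one comparison trait, without knowing the limit of $m$. You then prove $m\to\rho^*$ by a soft comparison argument: traits with $\mu_i\ge\mu^*+\lambda\delta$ are exponentially negligible; the mass $m_A$ of the near-optimal band satisfies a differential sandwich; you compare with Verhulst and let $\delta\downarrow0$. Restricting to the band, where mortality is bounded, neatly sidesteps the fact that Theorem~\ref{thm:dmdt} does not cover unbounded $\mu$. In effect this is a perturbation of the paper's own proof of Proposition~\ref{minmortality}.

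\textbf{What each buys.} The paper gains an exact formula for $m(t)$, from which rates of convergence could in principle be extracted. Your argument is more elementary and more robust: it never uses explicit solvability, so it would adapt readily to a general strictly decreasing density-regulation factor in place of $1-m$, where the $Y=1/B$ trick fails.

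\textbf{Two presentational fixes.}
\begin{itemize}
\item The inequality $\rho^*-\delta-o(1)\le\liminf m_A$ is not meaningful as written. Your barrier argument, run for $t$ large enough that $m-m_A<\delta/2$, gives $\liminf m_A\ge\rho^*-2\delta$, which suffices.
\item Uniform convergence of the differentiated series over $A_\delta$ needs a summable majorant such as $x_i(t)\le x_i(0)e^{\lambda t}$ on compact time intervals. Boundedness of $\mu_i$ on $A_\delta$ alone is not enough.
\end{itemize}
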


\medskip
\noindent
{\bf Numerical experiments}.
We simulated the model with $p_0=1$. For our first group of experiments,
we set $I=50$, $N=9999$, $\lambda=3$,
$\mu_{-1}=\mu_0=\mu_1=\mu_A:=1$, and $\mu_i=\mu_B:=2$ otherwise.
Thus, traits $-1$, $0$, and $1$ form the mortality class corresponding
to the lower mortality $\mu_A$. In an obvious notation, let
$\rho_A:=1-\mu_A/\lambda=2/3$ and
$\rho_B:=1-\mu_B/\lambda=1/3$.

\begin{enumerate}
\item[(i)]
We set $n_i(0)=111$ for $i\in\{-30,\ldots,30\}$, and $n_i(0)=0$
otherwise, so that $\sum_i n_i(0)=111\times61=6771<N$.
The results are shown in Figure~\ref{fig:FiguresE}. The minimum
mortality among the traits represented initially is $\mu_A$, and it is
attained precisely by traits $-1$, $0$, and $1$.
Proposition~\ref{minmortality} therefore predicts that the deterministic
densities of these three traits each converge to
$\rho_A/3=2/9$, while all other trait densities converge to $0$.

In the stochastic simulation, the traits outside this
minimum-mortality class disappeared quickly, at around $t=12$. This is
not shown directly, but is reflected in the recovery of the aggregate
density after its initial decline. The remaining three traits then
entered a long-lived quasi-stationary regime. Trait~$1$ disappeared at
around $t=1465$, followed by trait~$-1$ at around $t=10674$, after which
trait~$0$ was the only remaining trait.

\item[(ii)]
Next, we set
$n_{-1}(0)=n_0(0)=n_1(0)=2N/9=2222$, and $n_i(0)=0$ otherwise.
Thus, $x_i(0)=2/9$ for $i\in\{-1,0,1\}$ and
$m(0)=2/3=\rho_A$, so the deterministic initial state belongs to the
equilibrium set $E_{\mu_A}$. The results are shown in
Figure~\ref{fig:FiguresF}(a).

As expected, the deterministic densities remain constant:
$x_i(t)=2/9$ for $i\in\{-1,0,1\}$ and $m(t)=2/3$ for all $t\geq0$.
The stochastic densities fluctuate around these equilibrium values over
the interval displayed. Since traits are inherited without mutation,
all traits absent initially remain absent.

\item[(iii)]
Finally, we set
$n_{-1}(0)=n_0(0)=n_1(0)=N/9=1111$, and $n_i(0)=0$ otherwise.
Thus, $x_i(0)=1/9$ for $i\in\{-1,0,1\}$ and
$m(0)=1/3=\rho_B<\rho_A$. The initially represented traits all have
mortality $\mu_A$, so the deterministic densities increase from $1/9$
towards $2/9$, and the trajectory approaches an equilibrium in
$E_{\mu_A}$. The results are shown in Figure~\ref{fig:FiguresF}(b).
The stochastic densities likewise move towards a quasi-stationary regime
centred near the corresponding equilibrium values.
\end{enumerate}

\begin{figure}[htbp]
\centering
\includegraphics[width=0.8\textwidth]{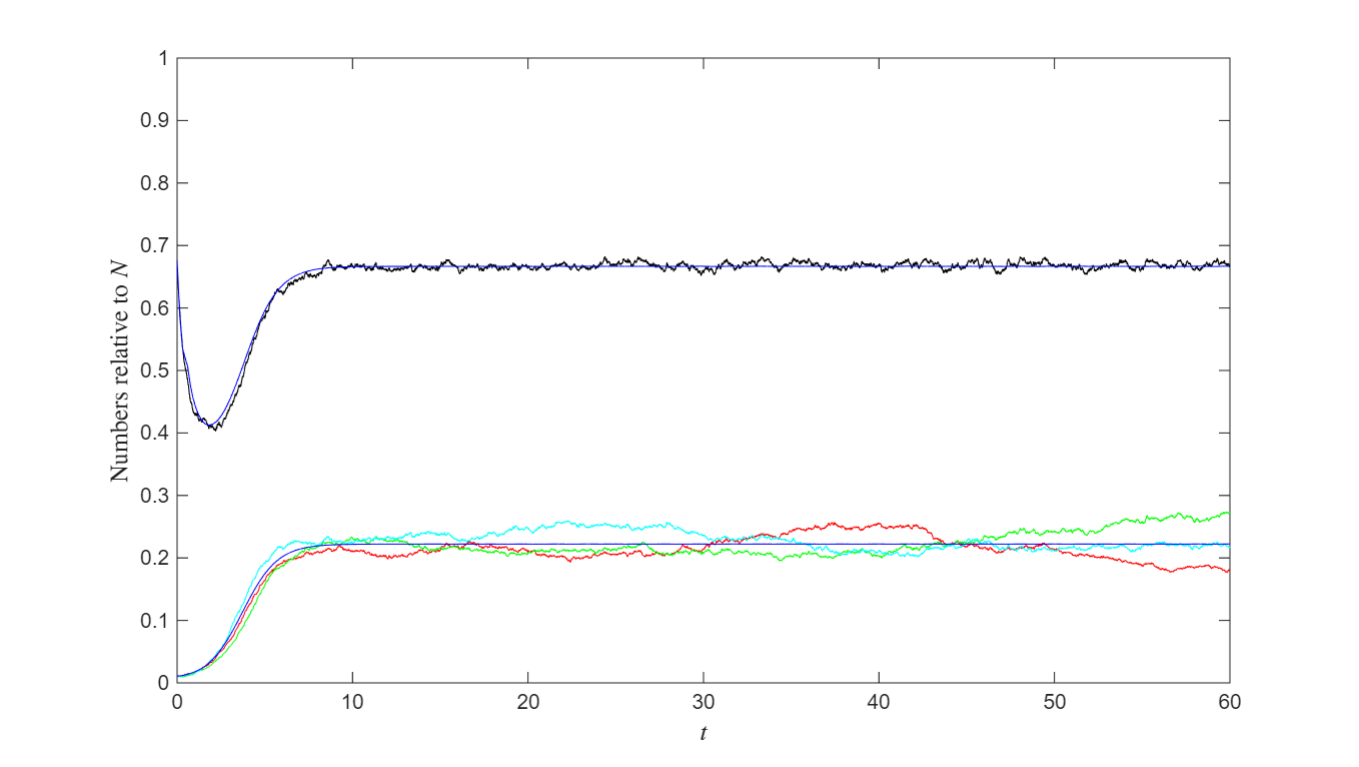}
\caption{Simulated sample paths of the densities of traits $-1$ (red),
$0$ (green), and $1$ (cyan), together with the aggregate density
(black). The corresponding deterministic trajectories are shown in
blue. The parameter values are given in the text. Traits outside the
minimum-mortality class disappear quickly, after which a long-lived
quasi-stationary regime is observed for traits $-1$, $0$, and $1$.
Proposition~\ref{minmortality} predicts that their deterministic
densities each converge to $2/9$.}
\label{fig:FiguresE}
\end{figure}

\begin{figure}[htbp]
\centering
\includegraphics[width=0.49\textwidth]{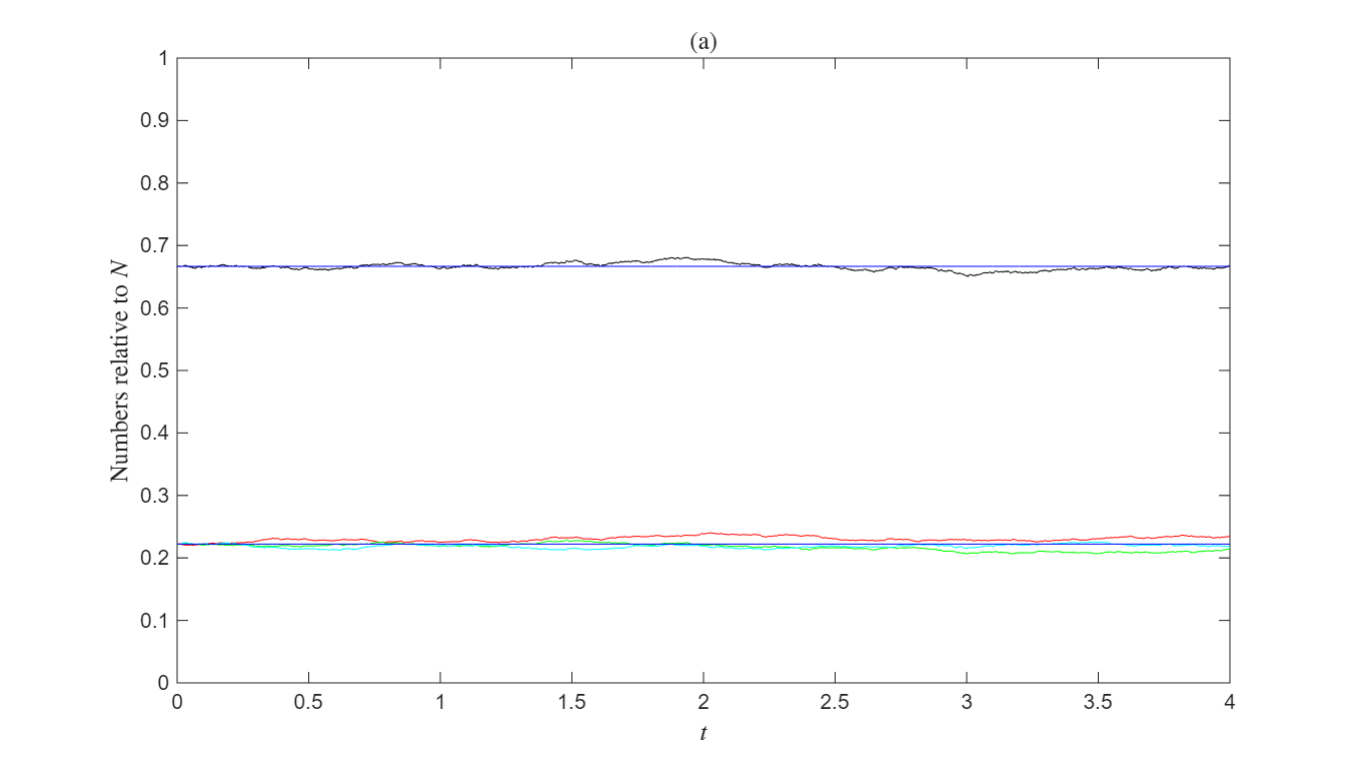}
\includegraphics[width=0.49\textwidth]{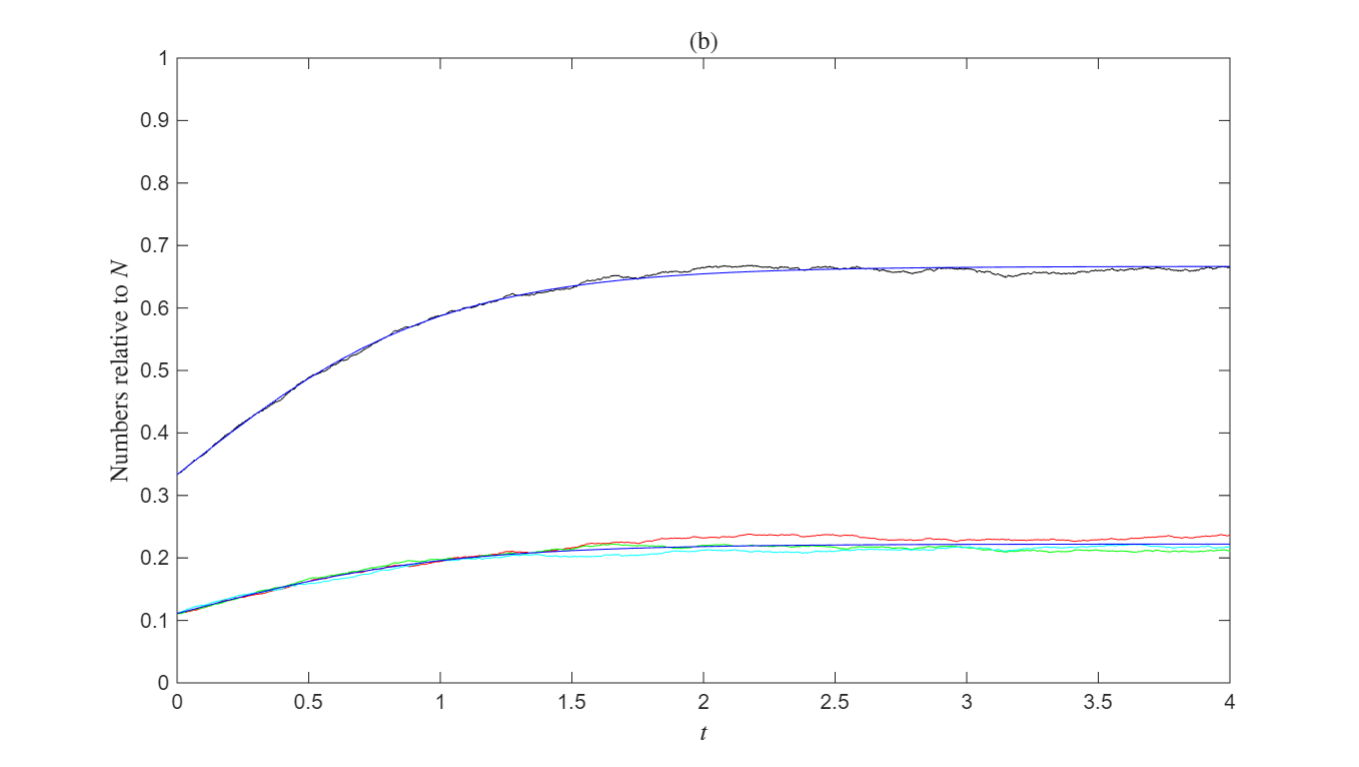}
\caption{Simulated sample paths of the densities of traits $-1$ (red),
$0$ (green), and $1$ (cyan), together with the aggregate density
(black). The corresponding deterministic trajectories are shown in
blue. The parameter values are given in the text.
In panel~(a), the deterministic trajectory remains at an equilibrium
in $E_{\mu_A}$, while the stochastic densities fluctuate around the
corresponding equilibrium values. In panel~(b), the deterministic and
stochastic densities move from lower initial values towards those
associated with $E_{\mu_A}$.}
\label{fig:FiguresF}
\end{figure}

For our second group of experiments, we set $I=50$, $N=2000$,
$\lambda=4$, $\mu_0=\mu_A:=1$,
$\mu_{-1}=\mu_1=\mu_B:=2$, and $\mu_i=\mu_C:=3$ otherwise.
There are therefore three mortality classes, with corresponding
positive equilibrium masses
$\rho_A:=1-\mu_A/\lambda=3/4$,
$\rho_B:=1-\mu_B/\lambda=1/2$, and
$\rho_C:=1-\mu_C/\lambda=1/4$.

\begin{enumerate}
\item[(iv)]
We set $n_{-2}(0)=780$, $n_{-1}(0)=600$, $n_1(0)=360$, and
$n_2(0)=240$, with $n_i(0)=0$ otherwise. In particular,
$n_0(0)=0$. Thus,
$x_{-2}(0)=0.39$, $x_{-1}(0)=0.30$,
$x_1(0)=0.18$, and $x_2(0)=0.12$.
The results are shown in Figure~\ref{fig:FiguresG}(a).

Since traits are inherited without mutation, trait~$0$, which is absent
initially, remains absent despite having the lowest mortality overall.
Among the traits represented initially, traits $-1$ and $1$ have the
minimum mortality $\mu_B$. Proposition~\ref{minmortality} therefore
predicts that traits $-2$ and $2$ disappear, while the deterministic
trajectory approaches an equilibrium in $E_{\mu_B}$. More precisely,
the limiting densities are
$$
x_{-1}^*
=
\rho_B\frac{x_{-1}(0)}{x_{-1}(0)+x_1(0)}
=
\frac{5}{16}
=
0.3125,
\qquad
x_1^*
=
\rho_B\frac{x_1(0)}{x_{-1}(0)+x_1(0)}
=
\frac{3}{16}
=
0.1875.
$$
In the stochastic simulation, traits $-2$ and $2$ disappeared quickly,
after which traits $-1$ and $1$ entered a quasi-stationary regime near
these deterministic equilibrium values.

\item[(v)]
Finally, we altered only the initial density of trait~$0$, setting
$x_0(0)=1/200$, corresponding to $n_0(0)=10$, while retaining the
other initial densities from experiment~(iv). The results are shown in
Figure~\ref{fig:FiguresG}(b).

The outcome is entirely different. Trait~$0$, which has the smallest
mortality, is now represented initially. Proposition~\ref{minmortality}
therefore predicts convergence of the deterministic trajectory to the
single-trait equilibrium in $E_{\mu_A}$, with
$x_0^*=\rho_A=3/4$ and $x_i^*=0$ for $i\neq0$.
In the stochastic realization, trait~$1$ disappeared at
$t=11.2559$, followed by trait~$-1$ at $t=11.8544$, leaving trait~$0$
as the only remaining trait. Its density subsequently entered a
quasi-stationary regime near the deterministic equilibrium value
$\rho_A=3/4$.
\end{enumerate}

\begin{figure}[htbp]
\centering
\includegraphics[width=0.49\textwidth]{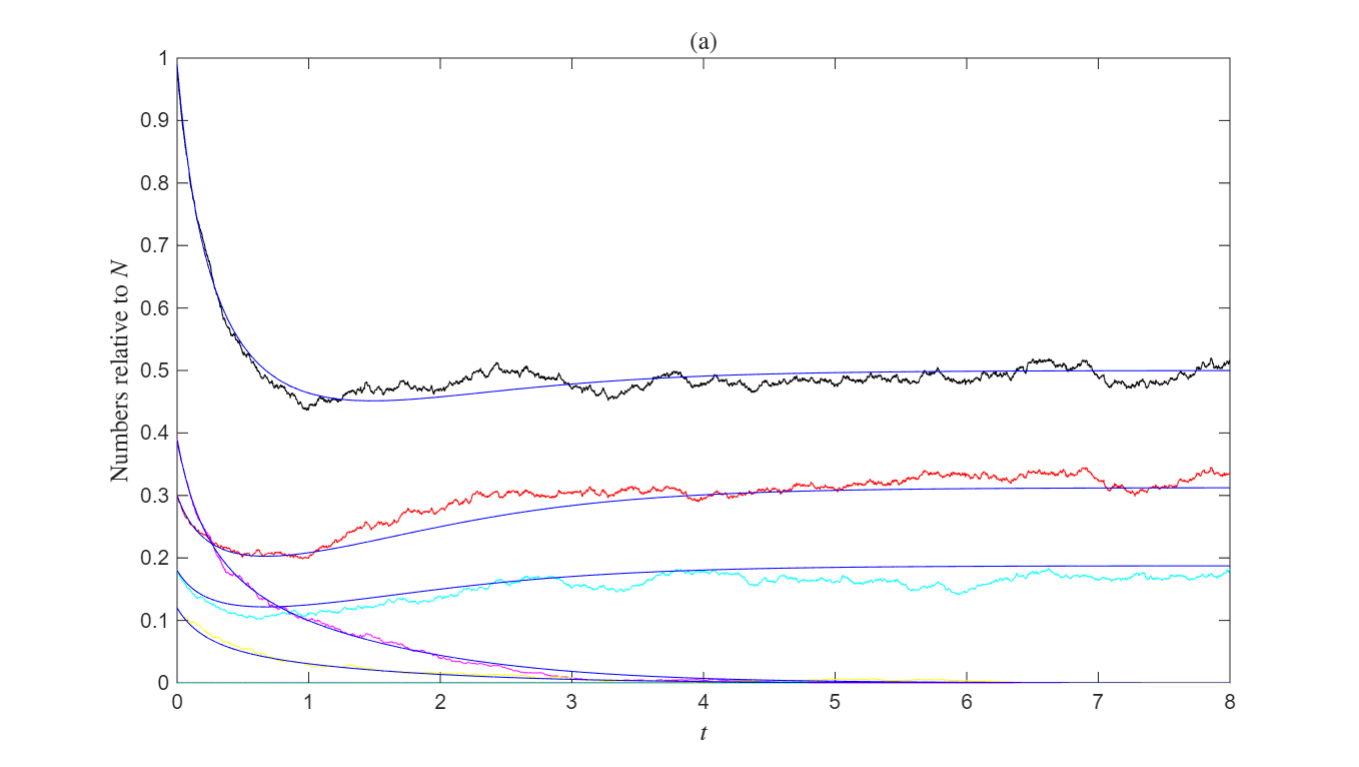}
\includegraphics[width=0.49\textwidth]{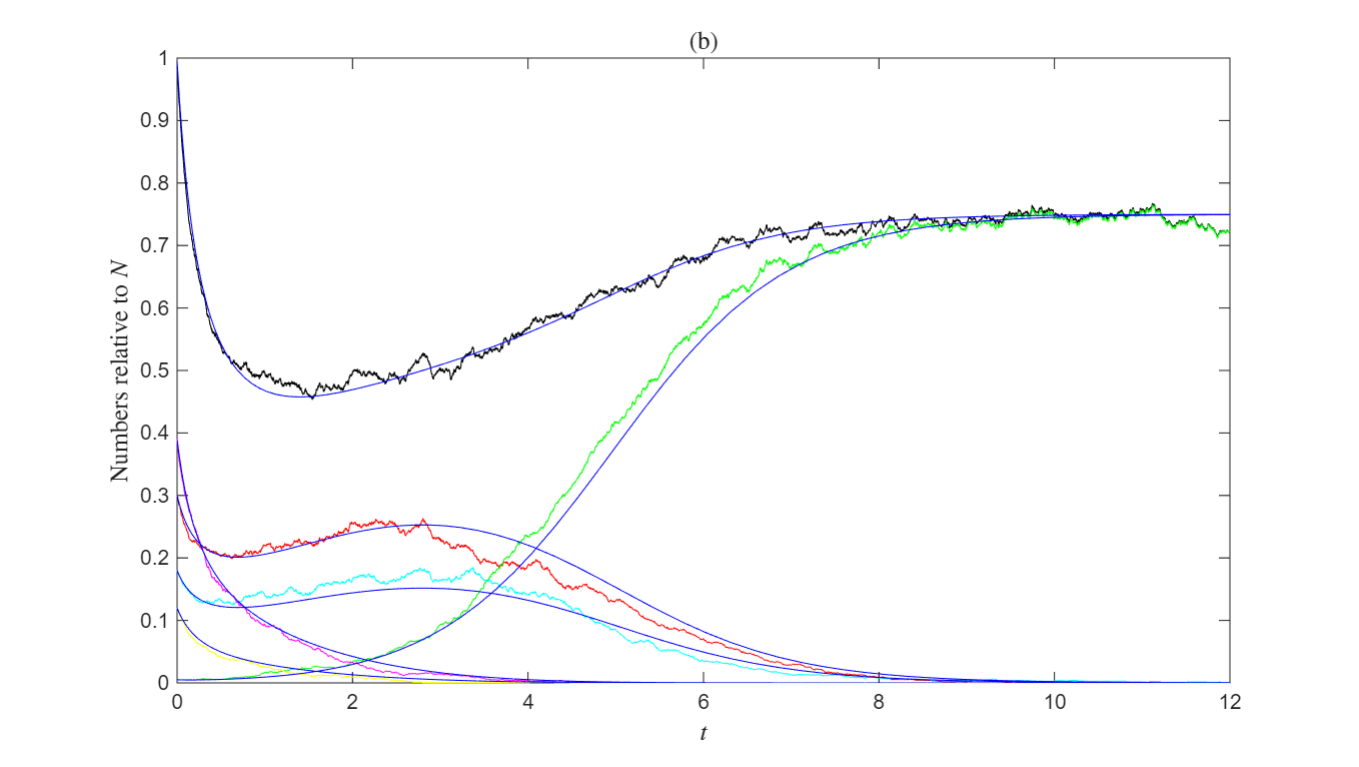}
\caption{Simulated sample paths of the densities of traits $-2$
(magenta), $-1$ (red), $0$ (green), $1$ (cyan), and $2$ (yellow),
together with the aggregate density (black). The corresponding
deterministic trajectories are shown in blue. The parameter values are
given in the text.
In panel~(a), trait~$0$ is absent initially and therefore remains
absent. The deterministic trajectory approaches an equilibrium in
$E_{\mu_B}$ supported on traits $-1$ and $1$, while the stochastic
densities enter a quasi-stationary regime near the corresponding
equilibrium values.
In panel~(b), a small initial population of trait~$0$ is present.
Because trait~$0$ has the lowest mortality, the deterministic trajectory
approaches the single-trait equilibrium in $E_{\mu_A}$, while in the
stochastic realization all other traits disappear.}
\label{fig:FiguresG}
\end{figure}

\FloatBarrier

\section{The general case}
\label{sec:generalcase}

Guided by what we have learned already about the long-term behaviour 
of our model in some special cases, let us consider the
general case where $(\mu_i)$ are not all the same and $p_0<1$. 
Our system of ODEs~(\ref{PKP1}) can be written
$$
\dot \bx(t) = \lambda (1-m(t)) P\bx(t) - D\bx(t),
$$
where $(D\bx)_i=\mu_i x_i$, 
$(P\bx)_i=\sum_k x_k p_{i-k}$, and $m(t)=\sum_i x_i(t)$.

Consider first the equilibrium $\bx=\bzero$. Its formal linearization
is governed by the operator $\lambda P-D$, with domain $\D(D)$.
Although the stability of the extinction equilibrium is naturally
related to the spectral properties of this operator, we shall not
analyse it directly. Instead, we introduce the bounded positive
operator $L=D^{-1}P$, whose spectral properties determine the existence
and form of non-zero equilibria.

A non-zero equilibrium $\bx^*$ satisfies $\lambda(1-m^*)P\bx^*=D\bx^*$,
with $m^*:=m(\bx^*)=\sum_i x_i^*$. Since $P\bx^*\in\BS$, the equilibrium
equation shows that $D\bx^*\in\BS$, and hence $\bx^*\in\D(D)$.
Notice also that $m^*<1$. Indeed, if $m^*=1$, then the equilibrium
equation gives $D\bx^*=0$, and hence $\bx^*=0$, since $\mu_i>0$ for
every $i\in\Z$.
Suppose now that $\muinf:=\inf_i\mu_i>0$, and define
$L:=D^{-1}P$. Applying $D^{-1}$ to the equilibrium equation gives
\begin{equation}
L\bx^*
=
\frac{1}{\lambda(1-m^*)}\bx^*,
\label{eq:equilibriumeigenvalue}
\end{equation}
where
\begin{equation}
(L\bx)_i
=
\frac{1}{\mu_i}\sum_kx_kp_{i-k},
\qquad i\in\Z.
\label{L}
\end{equation}

Recalling that $K$ is the positive cone in $\BS$, every non-zero
equilibrium is an eigenvector of $L$ belonging to
$K\setminus\{\bzero\}$, with positive eigenvalue $1/\{\lambda(1-m^*)\}$.
Conversely, suppose that $\bphi\in K\setminus\{\bzero\}$ is an
eigenvector of $L$ with eigenvalue $r>0$. If $\lambda r>1$, define
$$
\bx^*
=
\left(1-\frac{1}{\lambda r}\right)
\frac{\bphi}{\sum_i\phi_i}.
$$
Then $m^*=1-1/(\lambda r)$, and hence
$\lambda(1-m^*)=1/r$. Since $L\bphi=r\bphi$, we have
$P\bphi=rD\bphi$, and therefore
$\lambda(1-m^*)P\bx^*=D\bx^*$. Thus, $\bx^*$ is a non-zero
equilibrium. Its trait proportions are given by
$\pi_i^*=\phi_i/\sum_j\phi_j$.

It is worth contrasting $L$ with the mutation operator $P$ appearing in
Section~\ref{sec:noselection}. There, the underlying irreducible random
walk on $\Z$ was not positive recurrent, and so $P$ admitted no
invariant probability distribution. Consequently, mutation alone could
not sustain a stationary trait distribution. The operator $L=D^{-1}P$,
however, incorporates the effects of selection through mortality factors
$1/\mu_i$. As we shall see, when $\mu_i\to\infty$ these factors suppress 
the tails of
the trait space strongly enough for $L$ to be a compact operator. Combined
with irreducibility of the mutation kernel, this will lead to a
distinguished positive eigenvector.

Our strategy is to apply the {\em Kre\u{\i}n--Rutman theorem\/} to $L$.
It plays a role analogous to that of the Perron--Frobenius theorem for
matrices. Roughly speaking, it asserts that a compact positive operator
possesses a positive principal eigenvalue equal to its spectral radius
$r(L)$, together with a strictly positive eigenvector. We shall see that
this eigenpair determines the equilibrium trait distribution, and yields
a natural threshold condition for the existence of a non-zero
equilibrium, which under the assumptions below will be strictly
positive.

The operator $L$ is clearly linear and positive. It is also bounded.
Indeed, for $\bx\in\BS$, the triangle inequality and the positivity of
the mutation kernel give
$$
\|L\bx\|_1 \leq \frac{1}{\muinf}\sum_i\sum_k|x_k|p_{i-k}
= \frac{1}{\muinf}\|\bx\|_1.
$$
Thus $\|L\|\leq1/\muinf$.
To apply the Krein--Rutman theorem, we also require $L$ to be compact.
Under confining mortality, the factors $1/\mu_i$ suppress the tails of
$P\bx$ sufficiently strongly to give this property.

\begin{lemma}
\label{lemma:compact}
Suppose that $\muinf>0$ and that
$\mu_i\to\infty$ as $|i|\to\infty$. Then the operator $L$ given
by~\eqref{L} is compact on $\BS$.
\end{lemma}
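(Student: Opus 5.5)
We will show that $L$ is the operator-norm limit of finite-rank operators. Since the compact operators form a closed subspace of the bounded operators on $\BS$, this suffices. For $n\geq1$ let $\chi_n$ be the coordinate projection onto $\{-n,\ldots,n\}$, so that $(\chi_n\bx)_i=x_i$ for $|i|\leq n$ and $(\chi_n\bx)_i=0$ otherwise. Put $L_n:=\chi_nL$. Each $L_n$ is bounded, and its range lies in the $(2n+1)$-dimensional space of sequences supported on $\{-n,\ldots,n\}$. Hence $L_n$ has finite rank and is compact.

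The key estimate controls the tail. For $\bx\in\BS$, the same triangle-inequality calculation used above to show $\|L\|\leq1/\muinf$ gives
$$
\|(L-L_n)\bx\|_1=\sum_{|i|>n}\frac{1}{\mu_i}\Bigl|\sum_k x_kp_{i-k}\Bigr|
\leq \Bigl(\sup_{|i|>n}\frac{1}{\mu_i}\Bigr)\sum_i\sum_k|x_k|p_{i-k}
=\Bigl(\sup_{|i|>n}\frac{1}{\mu_i}\Bigr)\|\bx\|_1 .
$$
Here we use $\sum_ip_{i-k}=1$ and Tonelli's theorem to exchange the sums. It follows that $\|L-L_n\|\leq\varepsilon_n:=\sup_{|i|>n}\mu_i^{-1}$.

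Since $\mu_i\to\infty$ as $|i|\to\infty$, we have $\varepsilon_n\to0$, so $L_n\to L$ in operator norm and $L$ is compact. The hypothesis $\muinf>0$ enters only to make each factor $1/\mu_i$ finite and $L$ bounded. The confining assumption is exactly what makes $\varepsilon_n$ vanish.

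I do not expect a genuine obstacle. The only point needing care is that the tail bound must be uniform over the unit ball, not merely pointwise in $\bx$. This holds because the sup over $|i|>n$ of $1/\mu_i$ factors out before we sum, and the remaining double sum equals $\|\bx\|_1$ by stochasticity of the kernel.

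As an alternative, one could use the characterization of relatively compact subsets of $\ell_1$ as those that are bounded with uniformly small tails. The same estimate shows that $L$ maps the unit ball to such a set. The norm-limit argument is cleaner, however, and is the one I would use.
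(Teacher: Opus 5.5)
Your proof is correct and is essentially the paper's argument. The paper writes $L=MP$ with $(M\bx)_i=x_i/\mu_i$, shows $\|M-M_n\|=\sup_{|i|>n}\mu_i^{-1}\to0$ for the finite-rank truncations $M_n$, and then uses that composing the compact $M$ with the bounded $P$ gives a compact operator; your $L_n=\chi_nL$ is exactly $M_nP$, so you run the same tail estimate directly on $L$ instead of passing through $M$.
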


The classical Krein--Rutman theorem (Theorem~19.2 of \cite{Dei85})
yields a positive eigenvector when $r(L)>0$, but does not by itself
provide all the conclusions needed below. Moreover, the usual strongly
positive version is unavailable because the positive cone of $\BS$ has
empty interior. We therefore use the ideal-irreducible version of Chang,
Wang and Wu~\cite{CWW20}.
We first observe that the positive cone $K$ is {\em total\/},
meaning that $\overline{K-K}=\BS$. In fact, $K-K=\BS$, since every
$\bx\in\BS$ admits the decomposition $\bx=\bx^+-\bx^-$, where
$x_i^+=\max\{x_i,0\}$ and $x_i^-=\max\{-x_i,0\}$. Both $\bx^+$ and
$\bx^-$ belong to $K$.

The classical theorem guarantees existence of a positive eigenvector
once positivity of $r(L)$ is known. The strong version of the
Kre\u{\i}n--Rutman theorem, Theorem~1.2 of Chang, Wang and
Wu~\cite{CWW20}, applies to compact strongly positive operators when the
positive cone has non-empty interior. Here, however,
$\operatorname{int}(K)=\varnothing$, so this version cannot be applied.
Indeed, if $\bx\in K$ and $\varepsilon>0$, then $x_i\to 0$ as
$|i|\to\infty$, so there exists $j\in\Z$ such that $x_j<\varepsilon/2$.
The vector $\by:=\bx-(x_j+\varepsilon/2)\be_j$ does not belong to $K$,
while $\|\by-\bx\|_1<\varepsilon$. Hence no element of~$K$ is an
interior point. We therefore appeal instead to the more
delicate form of the theorem,
Theorem~1.5 of~\cite{CWW20}. For this we first need to observe that
$\BS$ is a Banach {\em lattice\/} (it is equipped with the
ordering $\bx\leq \by$ if and only if $x_i\leq y_i$ for all $i\in \Z$),
and we need to introduce the notions of {\em quasi-interior\/} and {\em
ideal-irreducibility\/}, both of which have natural biological
interpretations in the present context.

Whilst the positive cone $K$ has empty interior, its
{\em quasi-interior\/} $\qit(K)$ is non-empty. Recall that
$\bx\in K$ is a quasi-interior point if the principal ideal generated
by $\bx$ is dense in $\BS$. In the present setting,
$$
\bx\in\qit(K)
\quad\Longleftrightarrow\quad
x_i>0\text{ for every }i\in\Z.
$$
Indeed, if $x_j=0$ for some $j$, then every vector in the principal
ideal generated by $\bx$ has $j$-th coordinate equal to~$0$, so that ideal
cannot be dense in $\BS$. Conversely, if $x_i>0$ for every~$i$, then
the principal ideal generated by $\bx$ contains every finitely
supported sequence, and is therefore dense in $\BS$.
Thus, in the present setting, the conclusion
$\bx\in\qit(K)$ of Theorem~1.5 of~\cite{CWW20} means that every trait
is represented with strictly positive density. Accordingly, we shall
call an equilibrium $\bx^*$ strictly positive if $x_i^*>0$ for every
$i\in\Z$, equivalently if $\bx^*\in\qit(K)$.

A positive operator on a Banach lattice is called {\em
ideal-irreducible\/} if it leaves no non-trivial closed ideal invariant.
The closed ideals in $\BS$ are precisely the coordinate ideals
$$
 I_A = \{\bx\in\BS:x_i=0\text{ whenever }i\notin A\},
\qquad A\subseteq\Z.
$$
Indeed, if $I$ is a closed ideal and $A:=\{i\in\Z:x_i\neq0\text{ for
some }\bx\in I\}$, then $\be_i\in I$ for every $i\in A$. Thus $I$
contains every finitely supported sequence with support in $A$, and
hence, by closedness, $I_A\subseteq I$. The reverse inclusion follows
from the definition of $A$, so $I=I_A$. Thus, ideal-irreducibility means
that there is no non-empty proper set of traits whose associated
coordinate ideal is invariant under the action of the operator.

\begin{lemma}
\label{lemma:ideal}
If the mutation kernel $\bp$ is irreducible, then $L$ is
ideal-irreducible.
\end{lemma}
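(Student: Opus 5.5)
We argue directly from the characterisation of closed ideals given just above: every closed ideal of $\BS$ is a coordinate ideal $I_A$ for some $A\subseteq\Z$. It therefore suffices to show that if $L I_A\subseteq I_A$, then $A=\varnothing$ or $A=\Z$.

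Fix a non-empty $A$ with $L I_A\subseteq I_A$. For $k\in A$ we have $\be_k\in I_A$. By~\eqref{L}, $(L\be_k)_i=p_{i-k}/\mu_i$. Since $\mu_i\in(0,\infty)$ for every $i$, the support of $L\be_k$ is exactly $\{i:p_{i-k}>0\}=k+\operatorname{supp}(\bp)$. Invariance therefore forces $k+\ell\in A$ whenever $k\in A$ and $p_\ell>0$. In words, $A$ is closed under one step of the random walk with jump distribution $\bp$.

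Iterating, $A$ contains every trait reachable from a point of $A$ by finitely many steps of this walk. The walk is irreducible on $\Z$ by the standing assumption, so from any $k\in A$ every $j\in\Z$ is reachable, namely there is an $n$ with $p^{\,n*}_{j-k}>0$. Hence $A=\Z$, and the only invariant closed ideals are $\{\bzero\}=I_\varnothing$ and $\BS=I_\Z$.

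Nothing here is delicate. The one point to state carefully is that multiplying by $D^{-1}$ cannot destroy positivity of any coordinate, because each $\mu_i$ is positive and finite. Consequently $L$ and $P$ have the same ``support graph'', and ideal-irreducibility of $L$ reduces to irreducibility of $\bp$. Equivalently, one could note that $(L^n\be_k)_j\geq p^{\,n*}_{j-k}\prod(\text{positive factors})$, which is positive for a suitable $n$. The one-step closure argument avoids this bookkeeping.
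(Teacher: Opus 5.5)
Your proof is correct and follows the paper's own argument: invariance of $I_A$ applied to $\be_k$, the observation that $L\be_k$ is supported exactly on $k+\operatorname{supp}(\bp)$ because every $\mu_i$ is positive and finite, closure of $A$ under mutation steps, and irreducibility of $\bp$ forcing $A=\Z$. Your closing aside about $L^n\be_k$ is loosely phrased, since that vector is a sum over paths with differing positive weights, but your main argument does not depend on it.
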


Ideal irreducibility means that no non-trivial subpopulation can evolve
independently of the remainder of the trait space. Under repeated
application of $L$, every nonempty collection of traits is eventually
coupled through mutation to the rest of the trait space, so no proper
closed ideal is invariant.

The operator $L$ is positive, and Lemmas~\ref{lemma:compact} and
\ref{lemma:ideal} verify its compactness and
ideal-irreducibility. Theorem~1.5 of Chang, Wang and
Wu~\cite{CWW20} therefore gives the following result.

\begin{theorem}
\label{CWW}
Suppose that $\muinf>0$, that $\mu_i\to\infty$ as
$|i|\to\infty$, and that~$\bp$ is irreducible. Then:
\begin{enumerate}
\item[(i)] $r(L)>0$;
\item[(ii)] $r(L)$ is an algebraically simple eigenvalue of $L$ with
an eigenvector $\bphi\in\qit(K)$;
\item[(iii)] $L$ has no real eigenvalues other than $r(L)$ associated
with eigenvectors in $K\setminus\{\bzero\}$.
\end{enumerate}
\end{theorem}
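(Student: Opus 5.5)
The plan is to verify the hypotheses of Theorem~1.5 of Chang, Wang and Wu~\cite{CWW20} for $L$ on the Banach lattice $\BS$, and then read off (i)--(iii). The hypotheses are:
\begin{itemize}
\item $L$ is positive;
\item $L$ is compact;
\item $L$ is ideal-irreducible;
\item the cone $K$ is total.
\end{itemize}
Positivity is immediate from~\eqref{L}, since $p_l\geq0$ and $\mu_i>0$. Compactness is Lemma~\ref{lemma:compact}, which uses $\muinf>0$ and $\mu_i\to\infty$. Ideal-irreducibility is Lemma~\ref{lemma:ideal}, which uses irreducibility of $\bp$. Totality of $K$ was observed above. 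The theorem then delivers (ii) and (iii) once $r(L)>0$ is known. If the cited version already gives $r(L)>0$ for ideal-irreducible compact operators on a lattice of dimension greater than one, this step is automatic. Otherwise I would prove (i) directly, as follows.

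For (i), I would exhibit a nonzero $\bx\in K$ and $c>0$ with $L^n\bx\geq c\,\bx$ for some $n\geq1$. Monotonicity of the order and positivity then give $L^{kn}\bx\geq c^k\bx$. Hence $\|L^{kn}\|\geq c^k$, and by Gelfand's formula $r(L)\geq c^{1/n}>0$.

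To construct $\bx$, take $\bx=\be_0$. By irreducibility of the mutation walk there is a path from $0$ back to $0$ of some length $n\geq1$ with positive kernel probabilities at each step. A path $0\to i_1\to0$ suffices: choose $l$ with $p_l>0$ and, if needed, a return path. Expanding $(L^n\be_0)_0$ as a sum over paths shows it is at least the product of the factors $p_{i_{j+1}-i_j}/\mu_{i_{j+1}}$ along that path, which is strictly positive. All terms in the expansion are nonnegative, so $L^n\be_0\geq c\,\be_0$ with $c:=(L^n\be_0)_0>0$. If $p_0>0$ one can simply take $n=1$ and $c=p_0/\mu_0$.

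The only real obstacle is bookkeeping: matching the exact formulation of Theorem~1.5 (Banach lattice, compact, ideal-irreducible, positive) to our setting, and translating its conclusion of a quasi-interior eigenvector into $\bphi\in\qit(K)$. The latter is done via the characterization $\bx\in\qit(K)$ if and only if $x_i>0$ for all $i$, established above.

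Should algebraic simplicity or the uniqueness in (iii) not be stated explicitly there, I would argue (iii) separately. Suppose $L\bpsi=s\bpsi$ with $\bpsi\in K\setminus\{\bzero\}$. Then pair with a strictly positive eigenfunctional $\bphi^*\in\ell_\infty$ of the adjoint $L^*$ for $r(L)$. Such a functional exists because $L^*$ has spectral radius $r(L)$ and, by irreducibility, a strictly positive eigenvector, obtained by Krein--Rutman applied to the compact operator $L^*$. This gives $s\langle\bphi^*,\bpsi\rangle=r(L)\langle\bphi^*,\bpsi\rangle$ with $\langle\bphi^*,\bpsi\rangle>0$, so $s=r(L)$.
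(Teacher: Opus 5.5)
Your proposal follows the paper's route exactly: positivity of $L$, compactness (Lemma~\ref{lemma:compact}), ideal-irreducibility (Lemma~\ref{lemma:ideal}) and totality of $K$ are fed into Theorem~1.5 of~\cite{CWW20}, which the paper uses to obtain (i)--(iii) at once, including $r(L)>0$ on the infinite-dimensional lattice $\BS$. Your fallback arguments are sound but not needed. For (i), the cycle through $0$ combined with Gelfand's formula gives $r(L)>0$. For (iii), you pair with a nonnegative eigenfunctional of the compact operator $L^*$; it is strictly positive because $\phi^*_k=0$ forces $\phi^*_i=0$ whenever $p_{i-k}>0$, and irreducibility propagates this to all of $\Z$.
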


Thus, the principal eigenspace of $L$ is one-dimensional and is
generated by an eigenvector with strictly positive entries.
Now for the main result of this section.

\begin{theorem}
\label{critical}
Suppose that $\muinf>0$, that $\mu_i\to\infty$ as
$|i|\to\infty$, and that $\bp$ is irreducible. Then:
\begin{enumerate}
\item[(i)] if $\lambda r(L)\leq1$, no non-zero equilibrium exists;
\item[(ii)] if $\lambda r(L)>1$, there exists a unique non-zero
equilibrium, and it is strictly positive.
\end{enumerate}
\end{theorem}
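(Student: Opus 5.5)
The plan is to combine the eigenvalue reformulation~\eqref{eq:equilibriumeigenvalue} with the conclusions of Theorem~\ref{CWW}. As shown above, $\bx^*$ is a non-zero equilibrium exactly when $\bx^*\in K\setminus\{\bzero\}$, $m^*<1$ and $L\bx^*=\bx^*/\{\lambda(1-m^*)\}$. So every non-zero equilibrium is a positive eigenvector of $L$ with positive real eigenvalue $s:=1/\{\lambda(1-m^*)\}$. Theorem~\ref{CWW}(iii) then forces $s=r(L)$. Both parts of the theorem follow from this identification.

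For part~(i), suppose a non-zero equilibrium exists. Then $r(L)=1/\{\lambda(1-m^*)\}$, so $\lambda r(L)=1/(1-m^*)$. Because $\bx^*\neq\bzero$ has non-negative entries, $m^*>0$. Together with $m^*<1$ (shown above), this gives $\lambda r(L)>1$. Contrapositively, if $\lambda r(L)\le 1$, no non-zero equilibrium exists.

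For part~(ii), existence comes from the converse construction already given in the text. Theorem~\ref{CWW}(i),(ii) supply $r(L)>0$ and an eigenvector $\bphi\in\qit(K)$. Setting $\bx^*=(1-1/(\lambda r(L)))\bphi/\sum_i\phi_i$ gives a non-zero equilibrium, since $\lambda r(L)>1$ makes the prefactor positive. Its entries are all strictly positive, so $\bx^*\in\qit(K)$ and the equilibrium is strictly positive.

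For uniqueness, take any non-zero equilibrium $\by^*$. By the argument above, $\by^*$ lies in the eigenspace $\ker(L-r(L)\Id)$. Algebraic simplicity in Theorem~\ref{CWW}(ii) implies geometric simplicity, so $\by^*=c\bphi$ for some scalar $c$, and $c>0$ because $\by^*\in K\setminus\{\bzero\}$. The equilibrium relation then fixes the total mass: $\lambda(1-m(\by^*))=1/r(L)$, so $m(\by^*)=1-1/(\lambda r(L))$. This determines $c$, and hence $\by^*=\bx^*$.

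The only point needing care is applying Theorem~\ref{CWW}(iii). An equilibrium's eigenvalue is real and positive and its eigenvector lies in $K\setminus\{\bzero\}$, so (iii) applies directly. I would also check that $\bphi$ can be taken in $K$ and summable, which holds since $\bphi\in\qit(K)\subset\BS$. Beyond that, the argument is bookkeeping with $m^*$.
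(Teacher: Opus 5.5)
Your proposal is correct and follows the paper's proof essentially step for step. You identify every non-zero equilibrium as an eigenvector of $L$ in $K\setminus\{\bzero\}$, use Theorem~\ref{CWW}(iii) to force its eigenvalue to be $r(L)$ (giving (i) via $0<m^*<1$), build the equilibrium from the strictly positive eigenvector $\bphi$, and obtain uniqueness from the one-dimensional eigenspace together with the mass constraint $m(\by^*)=1-1/\{\lambda r(L)\}$.
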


Thus $\lambda r(L)=1$ is the equilibrium threshold: a non-zero
equilibrium exists precisely when $\lambda r(L)>1$, and in that case it
is unique and strictly positive.

Although Theorem~\ref{critical} establishes existence and uniqueness of
a strictly positive equilibrium, it does not address its stability.
Numerical calculations suggest that, whenever $\lambda r(L)>1$, this
equilibrium is locally exponentially stable. In particular, for the
finite-dimensional truncations discussed in
Section~\ref{sec:discussion}, the eigenvalue of the linearization with
largest real part is negative and appears to remain bounded away from
the imaginary axis in the left half-plane as the truncation size
increases.

\medskip
\noindent
{\bf Numerical experiments}.
As with earlier experiments, calculations are necessarily performed on
finite-dimensional truncations of $L$. Since our primary objective here
is to estimate the threshold parameter $r(L)$, we do not pursue a
detailed analysis of the associated eigenvectors. Instead, we shall
content ourselves with the following result, which establishes
convergence of the truncated operators in operator norm and convergence
of their spectral radii to $r(L)$. This provides the principal
justification for the numerical investigations which follow.

\begin{proposition}
\label{truncation}
Suppose that $\muinf>0$, that $\mu_i\to\infty$ as
$|i|\to\infty$, and that the mutation kernel~$\bp$ is irreducible.
Let $Q_I$ be the coordinate projection
$$
(Q_I\bx)_i=
\begin{cases}
x_i, & |i|\leq I,\\
0, & |i|>I,
\end{cases}
$$
and let $L_I:=Q_ILQ_I$. Then $\|L_I-L\|\to 0$ as $I\to\infty$.
Furthermore, $r(L_I)\to r(L)$.
\end{proposition}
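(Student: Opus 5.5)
The plan is to prove the two assertions separately: first the operator-norm convergence by a direct column-sum estimate, then the spectral-radius convergence by combining positivity (monotonicity of $r$) with norm-continuity of isolated eigenvalues of finite multiplicity.

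\textbf{Norm convergence.} On $\BS$ the operator norm of an operator with matrix $(a_{ik})$ equals $\sup_k\sum_i|a_{ik}|$, the supremum of the column $\ell_1$-sums. Here $L$ has entries $L_{ik}=p_{i-k}/\mu_i\geq0$. I would write
$$
L-L_I=(\Id-Q_I)L+Q_IL(\Id-Q_I)
$$
and bound each term.
\begin{itemize}
\item The first term has column sums $\sum_{|i|>I}p_{i-k}/\mu_i\leq 1/\min_{|i|>I}\mu_i$. This tends to $0$ because the mortality profile is confining.
\item The second term has norm at most $\sup_{|k|>I}\sum_i p_{i-k}/\mu_i$, and this is the step needing care. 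Fix $M$ and split the sum at $|i|\leq M$ and $|i|>M$. The outer part is at most $1/\min_{|i|>M}\mu_i$. The inner part is at most $\muinf^{-1}\sum_{|\ell|\geq |k|-M}p_\ell$, which is the tail of a probability distribution and so vanishes as $|k|\to\infty$.
\end{itemize}
Given $\varepsilon>0$, I would choose $M$ so that the outer part is below $\varepsilon/2$, then $I$ so large that the kernel tail beyond $I-M$ is below $\muinf\varepsilon/2$. Hence $\|L_I-L\|\to0$. This only uses $\muinf>0$, $\mu_i\to\infty$ and $\sum_\ell p_\ell=1$; no moment condition is needed.

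\textbf{Upper bound on the spectral radius.} Entrywise $0\leq L_I\leq L_{I+1}\leq L$, since $Q_I$ simply zeroes rows and columns of a nonnegative matrix. For positive operators on a Banach lattice, $0\leq A\leq B$ gives $\|A^n\|\leq\|B^n\|$, and hence $r(A)\leq r(B)$ by Gelfand's formula. Thus $r(L_I)$ is nondecreasing in $I$ and bounded above by $r(L)$. Its limit $\ell$ therefore exists and satisfies $\ell\leq r(L)$.

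\textbf{Lower bound on the spectral radius.} This is the main point. By Theorem~\ref{CWW}, $r(L)>0$ is an eigenvalue of the compact operator $L$ (Lemma~\ref{lemma:compact}). It is therefore an isolated point of $\sigma(L)$ with finite algebraic multiplicity, indeed multiplicity one. I would take a small circle $\Gamma$ around $r(L)$ that meets no other spectrum of $L$. Because $\|L_I-L\|\to0$, the resolvent $(z-L_I)^{-1}$ exists for all $z\in\Gamma$ once $I$ is large, and it converges uniformly on $\Gamma$ to $(z-L)^{-1}$. The Riesz projections
$$
\frac{1}{2\pi i}\oint_\Gamma (z-L_I)^{-1}\,dz
$$
then converge in norm to the Riesz projection of $L$, which is nonzero. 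Hence they are nonzero for large $I$, and $\sigma(L_I)$ meets the disc bounded by $\Gamma$; this is the standard Newburgh/Kato upper-semicontinuity argument. Consequently $L_I$ has spectral points arbitrarily close to $r(L)$, so $\liminf_I r(L_I)\geq r(L)$.

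Combining this with the monotone upper bound gives $r(L_I)\to r(L)$. Alternatively, positivity of $L_I$ shows that its spectral radius is itself an eigenvalue (finite-dimensional Perron--Frobenius), and the eigenvalue near $r(L)$ can be identified with it.

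The main obstacle is this lower-semicontinuity step. The spectral radius is not continuous in norm in general, so the argument must genuinely use that $r(L)$ is an isolated eigenvalue of finite multiplicity, supplied by compactness and Theorem~\ref{CWW}, rather than norm convergence alone.
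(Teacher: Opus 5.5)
Your proposal is correct and follows essentially the paper's route: the same decomposition $L-L_I=(\Id-Q_I)L+Q_IL(\Id-Q_I)$ with column-sum bounds, the domination $0\leq L_I\leq L$ with Gelfand's formula for the upper bound, and norm-stability of the isolated eigenvalue $r(L)$ for the lower bound. The differences are minor. You split the column sum in the trait index $i$, whereas the paper splits in the displacement $j=i-k$; this is a mirror-image choice. You also justify the lower bound directly via convergence and non-vanishing of the Riesz projections, which needs only that $r(L)$ is an isolated point of $\sigma(L)$. The paper instead cites Kato's continuity of separated spectral parts and uses algebraic simplicity to extract a single eigenvalue $\eta_I\to r(L)$.
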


\noindent
\emph{Remark}.
It is also natural to expect convergence of suitably
normalized Perron--Frobenius eigenvectors of $L_I$ to the principal
eigenvector of $L$. Such a result would further justify the use of
finite-dimensional truncations when approximating equilibrium trait
distributions. Questions of this type belong to the theory of compact
and collectively-compact operator approximation (see, for example,
Theorem~4.21 of Anselone~\cite{Anselone71}), but we will not pursue them here.

\medskip
The numerical experiments illustrate two related contrasts.
Figures~\ref{fig:FiguresA} and~\ref{fig:FiguresB} show how
demographic behaviour under bounded mortality depends on the birth rate,
while Figures~\ref{fig:FiguresB} and~\ref{fig:FiguresC} contrast
continuing evolutionary drift under bounded mortality with localization
under confining mortality. In the latter case, Theorem~\ref{critical}
gives a unique non-zero equilibrium whenever $\lambda r(L)>1$.

For simplicity, we use the notation $\lambda_{c,I}$ for the threshold
computed from the numerical matrix, which uses both trait-space
truncation at $I$ and mutation-displacement truncation at $K=20$. For
the finite-dimensional matrices used in our computations, these
thresholds were approximately $\lambda_{c,I}=0.10002$ for the
experiments illustrated in Figures~\ref{fig:FiguresA}
and~\ref{fig:FiguresB}, and $\lambda_{c,I}=1.033$ for the experiment
illustrated in Figure~\ref{fig:FiguresC}.
In the confining case of Figure~\ref{fig:FiguresC},
Proposition~\ref{truncation} shows that the spectral radii of the exact
finite sections $Q_I LQ_I$ converge to $r(L)$. Since $r(L)>0$, it
follows that
$$
\lambda_{c,I}=\frac{1}{r(L_I)} \to \frac{1}{r(L)}.
$$
The numerical matrices additionally use the truncated mutation
kernel~\eqref{laplacet}; as noted earlier, for $K=20$ and $r=0.4$,
the omitted mutation probability is less than $10^{-8}$, so this
additional approximation is expected to be negligible on the scale
reported here.
Thus the computed value $\lambda_{c,I}=1.033$ provides
a numerical approximation to the equilibrium threshold $1/r(L)$ of the
infinite-dimensional model. The parameter value $\lambda=10$ lies well
above this threshold, consistently with the persistent localized
behaviour observed in the simulations and with the existence of a
strictly positive equilibrium trait distribution.

The interpretation is different for the bounded directional profiles in
Figures~\ref{fig:FiguresA} and~\ref{fig:FiguresB}.
Proposition~\ref{truncation} does not apply here, and the
finite-dimensional value $\lambda_{c,I}$ should not be interpreted as
an approximation to a threshold for the existence of a localized
equilibrium in the infinite-dimensional model. Here
$\mu_i\to\mu_\infty=0.1$ as $i\to+\infty$, and the computed value
$\lambda_{c,I}=0.10002$ is numerically close to the natural candidate
demographic threshold $\mu_\infty$. Indeed, since the mortality
profile is bounded, Theorem~\ref{thm:dmdt} applies. Moreover,
$\mu_i\geq\mu_\infty$ for every $i$, and hence
$$
\dot m(t) = \lambda m(t)(1-m(t))-\sum_i\mu_i x_i(t) 
  \leq \lambda m(t)(1-m(t))-\mu_\infty m(t).
$$

Comparison with the corresponding scalar equation shows that
$m(t)\to 0$ whenever $\lambda\leq\mu_\infty$. More precisely, it gives
an exponential upper bound when $\lambda<\mu_\infty$ and, when
$\lambda=\mu_\infty$, the algebraic upper bound
$$
m(t)\leq \frac{m(0)}{1+\lambda m(0)t}.
$$
When $\lambda>\mu_\infty$, the simulations
suggest that demographic persistence may instead be achieved through
continual movement towards traits whose mortality approaches
$\mu_\infty$, without convergence to a stationary trait distribution.
Thus, in the bounded case, $\lambda_{c,I}$ is best regarded as a
finite-truncation diagnostic consistent with the limiting mortality
level, rather than as a rigorously justified infinite-dimensional
equilibrium threshold.

Returning to the confining regime, we tested the prediction that the
equilibrium trait distribution is determined by the principal
eigenvector of $L$. We computed the principal eigenvector of $L_I$ and
normalized it to have coordinate sum $1$, thereby obtaining the
predicted equilibrium trait proportions. Although
Proposition~\ref{truncation} concerns only the convergence of spectral
radii, the numerical results reported below are consistent with the
expectation that suitably normalized principal eigenvectors of the
truncated operators approximate the principal eigenvector of $L$.

We then compared this deterministic prediction with the average trait
distribution obtained from $100$ stochastic simulations. None of the
simulated populations was extinct at time $t=200$. For each realization,
the trait proportions were calculated as $n_i(t)/\sum_j n_j(t)$, and
these proportions were then averaged across the $100$ realizations. The
results are shown in Figure~\ref{fig:bargraph}.

\begin{figure}[htbp]
\centering
\includegraphics[width=0.9\textwidth]{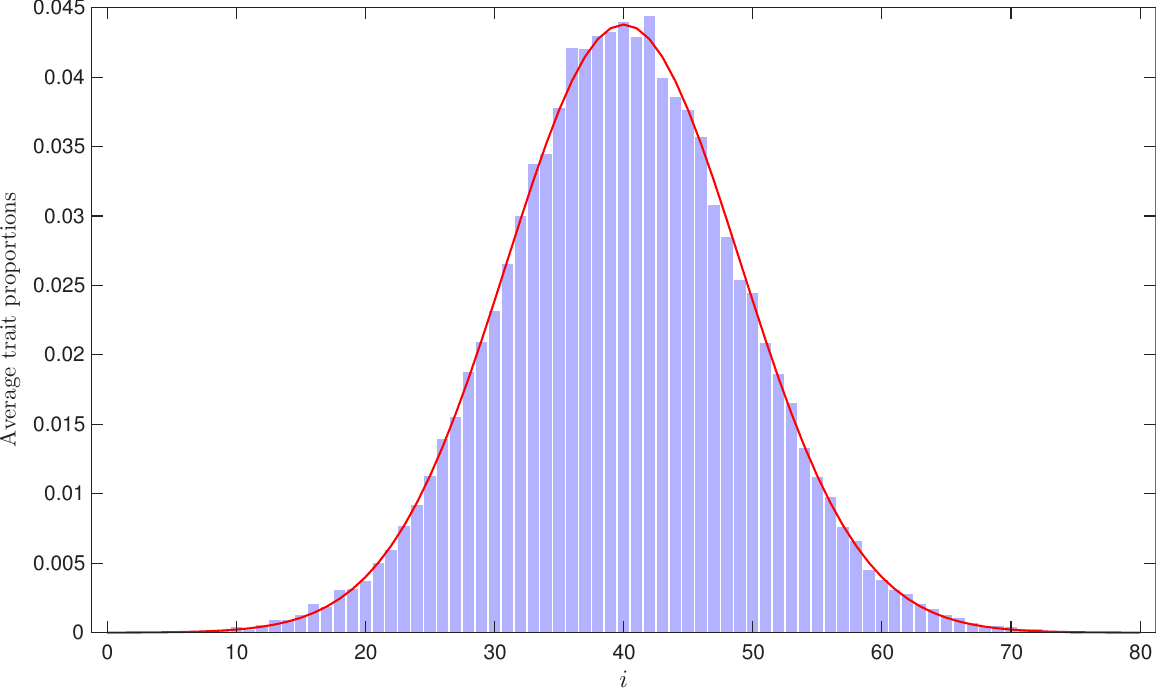}
\caption{
Average trait proportions from $100$ stochastic simulations at
time $t=200$ (bars), compared with the equilibrium profile predicted by
the finite-dimensional approximation of the principal eigenvector of $L$
(red curve).
The parameter values are $I=200$, $K=20$, $N=1000$, $\lambda=10$, $r=0.4$,
$\mu_{\min}=1$, $i^*=40$, and $f=10$. The close agreement between the
two profiles provides numerical support for the theory developed in
Section~\ref{sec:generalcase}.
}
\label{fig:bargraph}
\end{figure}

The close agreement between the stochastic and deterministic profiles
provides strong numerical evidence that the principal eigenvector of $L$
determines the equilibrium trait distribution arising from the combined
effects of mutation, selection and population regulation.

\begin{figure}[H]
\centering
\includegraphics[width=0.49\textwidth]{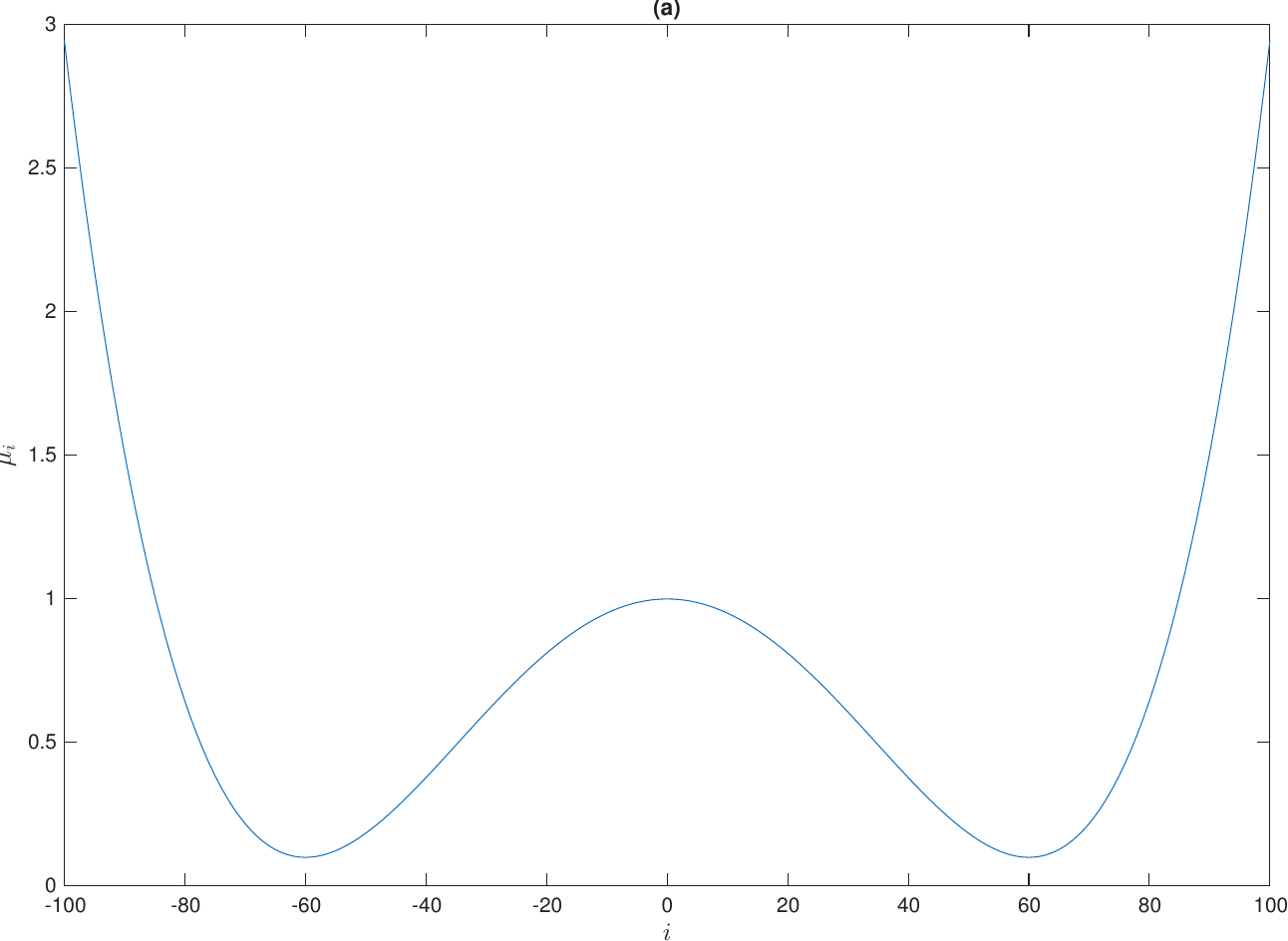}
\includegraphics[width=0.49\textwidth]{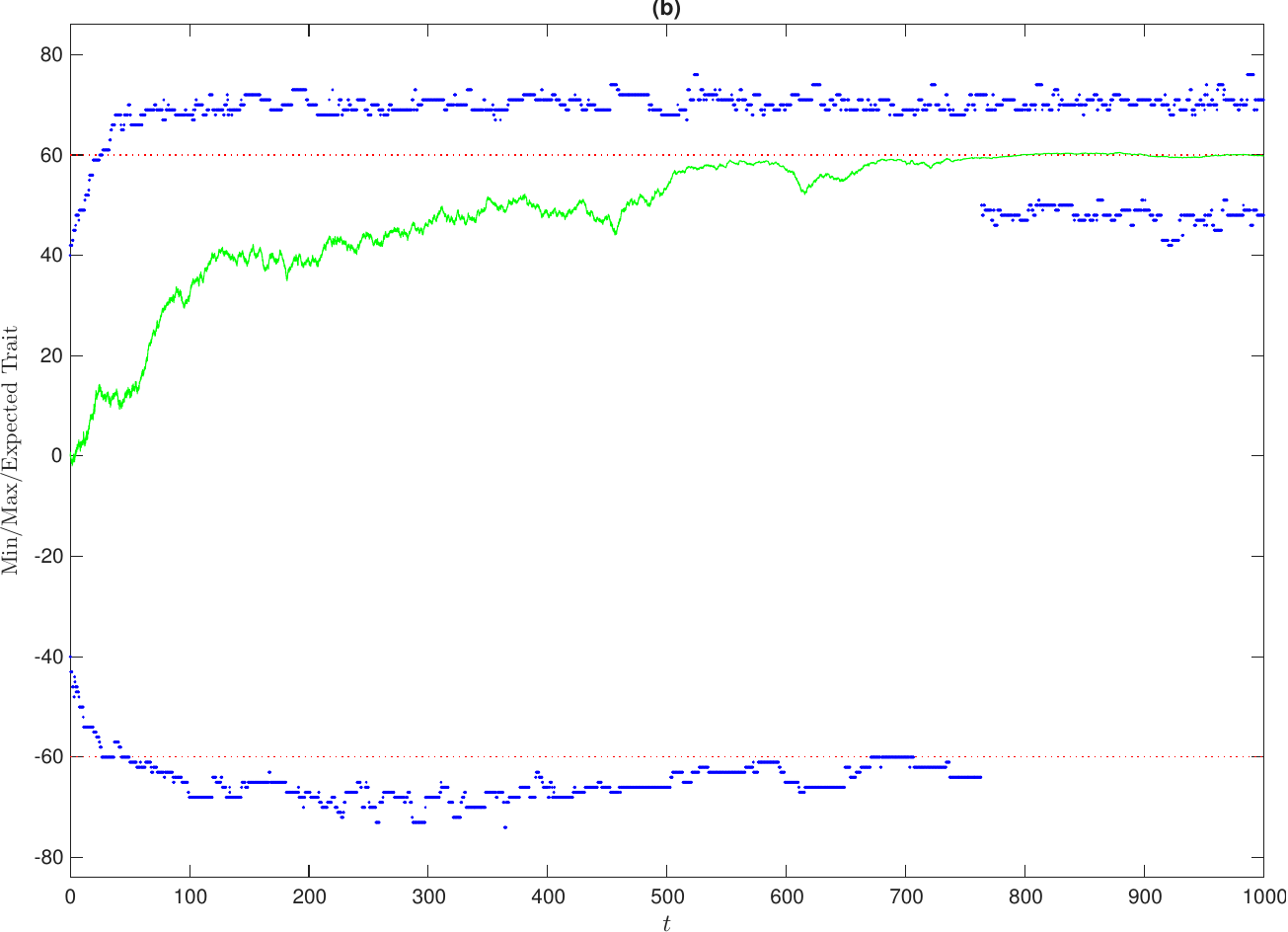}
\includegraphics[width=0.49\textwidth]{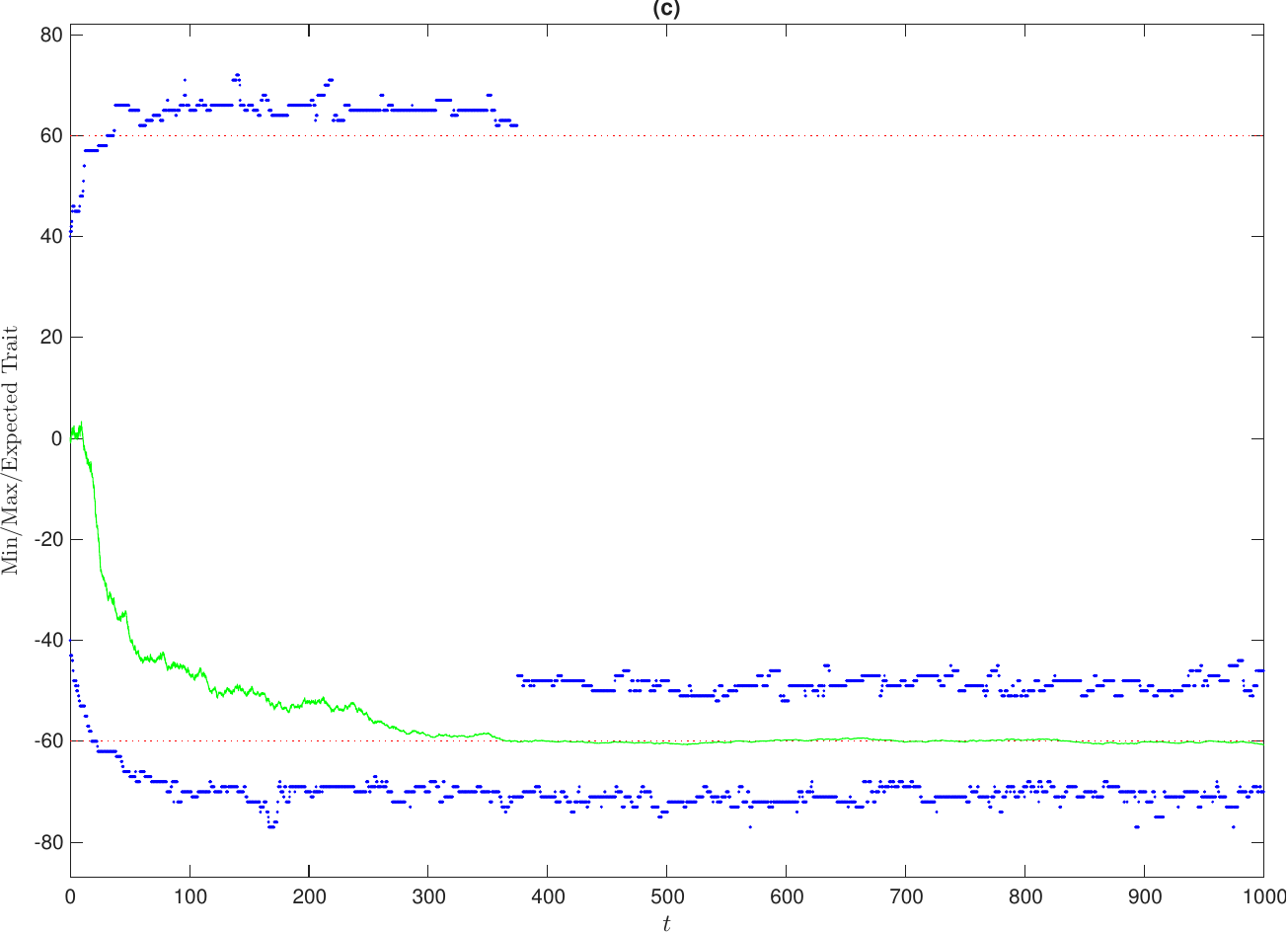}
\includegraphics[width=0.49\textwidth]{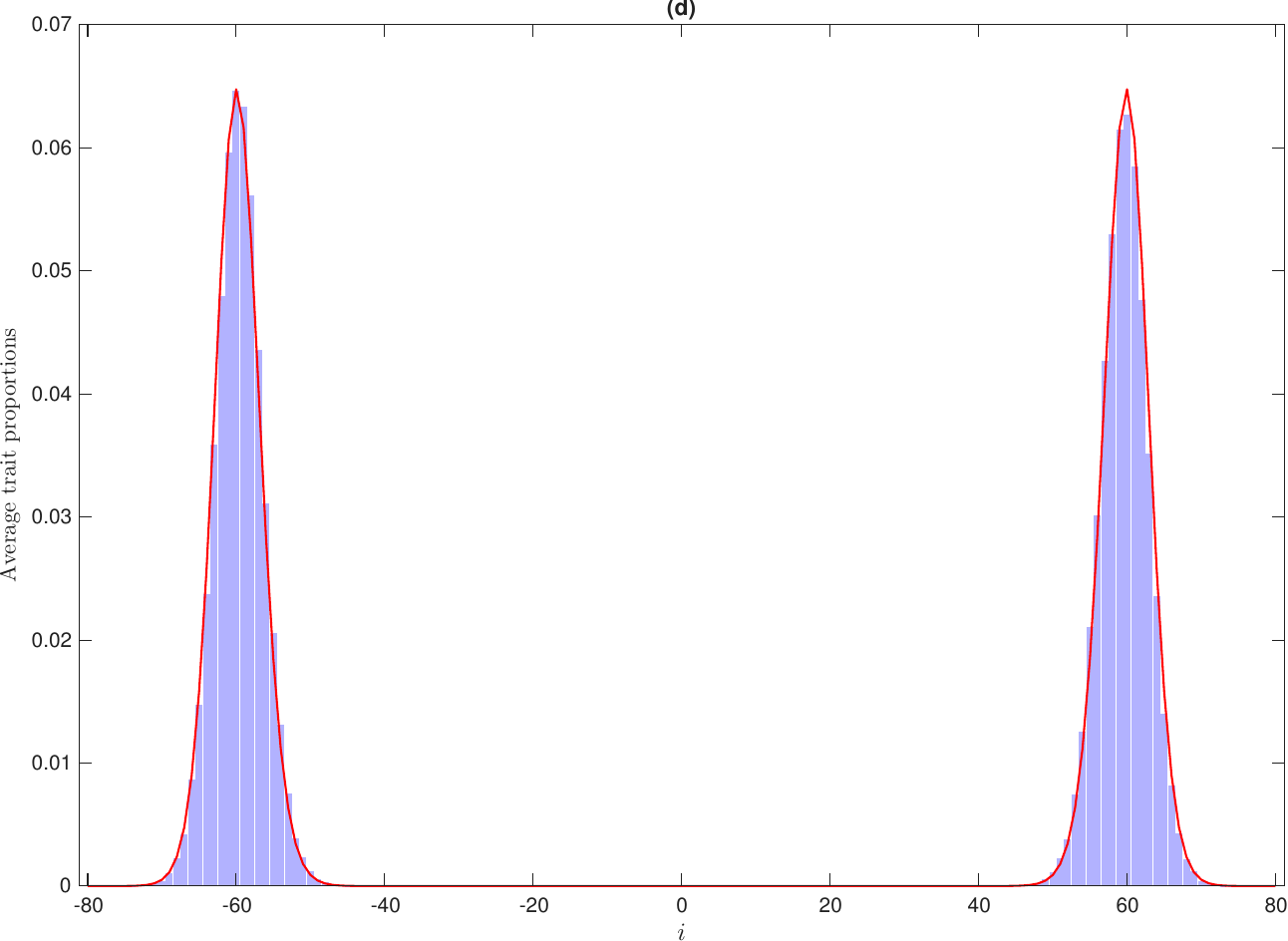}
\caption{
Simulation and deterministic approximation for disruptive selection:
$I=200$, $K=20$, $N=1000$, $\lambda=10$, $r=0.4$, $\mu_{\min}=0.1$, $f=10$, and $i^*=60$. (a)~The
mortality profile. (b)~and (c)~Representative stochastic realizations
showing the expected trait value (green) together with the minimum and
maximum traits present (blue). The red dotted lines indicate the optimal
traits $\pm i^*$. Abrupt changes in the occupied trait range occur when
the last individual in an extreme lineage disappears. Despite these
fluctuations, the population remains concentrated near the two
favourable trait values $\pm i^*$. (d)~Average trait proportions from
$200$ stochastic simulations at time $t=200$ (bars), compared with the
equilibrium profile predicted by the finite-dimensional approximation of
the principal eigenvector of~$L$ (red curve).
}
\label{fig:FiguresD}
\end{figure}

We conclude with a disruptive-selection example. The mortality profile
shown in Figure~\ref{fig:FiguresD}(a) has two equally favourable traits
at $\pm i^*$ and may be viewed as a simple model of a population
exploiting two distinct resources, such as seed-eating birds adapted to
either small or large seeds~\cite{Smith1993}. In such a setting,
intermediate phenotypes are disadvantaged relative to specialists,
producing two fitness peaks. 
Theorem~\ref{critical} predicts the existence of a unique non-zero
equilibrium whenever $\lambda r(L)>1$, and this equilibrium is strictly
positive. For the disruptive mortality profile considered here, the
computed equilibrium trait distribution is bimodal, reflecting the two
mortality minima.
Representative stochastic realizations are shown in
Figures~\ref{fig:FiguresD}(b) and~\ref{fig:FiguresD}(c). Despite the
symmetric mortality profile, sample paths are generally asymmetric. The
minimum and maximum occupied traits exhibit abrupt jumps when the final
lineage in a sparsely populated tail becomes extinct, producing a
discontinuous shift in the observed trait range. Such events reflect
demographic stochasticity, rather than changes in the selective regime.
Even so, the population remains concentrated near the favourable trait
values $\pm i^*$, consistent with confinement by disruptive selection to
the two fitness peaks.

To examine the resulting trait distribution,
Figure~\ref{fig:FiguresD}(d) compares the average trait proportions
obtained from $200$ stochastic simulations with the equilibrium profile
predicted by the principal eigenvector of the finite-dimensional
approximation of $L$. None of the simulated populations was extinct at
time $t=200$. For each realization, the trait proportions were
calculated as $n_i(t)/\sum_j n_j(t)$, and these proportions were then
averaged across the $200$ realizations. The agreement is again seen to
be good. Thus, even in the presence of disruptive selection and a
bimodal equilibrium distribution, the agreement provides further
numerical support for the prediction that the principal eigenvector of
$L$ describes the mutation--selection balance.

These experiments illustrate how confining mortality can produce a
localized mutation--selection balance, in contrast to the continuing
trait-space drift observed under bounded mortality.

\section{Discussion}
\label{sec:discussion}

The present work has examined a trait-structured population model
combining mutation, selection and density-dependent regulation on a
countably infinite trait space. The analysis reveals a qualitative
distinction that is absent from finite-dimensional mutation--selection
systems and appears to be one of the central features of the model.
Whether a localized mutation--selection equilibrium exists depends
crucially on the behaviour of the mortality profile in its tails.
When the mortality rates satisfy $\mu_i\to\infty$ as $|i|\to\infty$,
selection increasingly penalizes extreme traits. This confining effect
has a natural mathematical manifestation: the operator $L=D^{-1}P$
becomes compact on $\BS$. The resulting compactness permits the
application of Kre\u{\i}n--Rutman theory and leads to a remarkably
simple description of the equilibrium structure. The spectral radius
$r(L)$ determines a threshold $\lambda r(L)=1$ separating regimes in
which non-zero equilibria are absent from those in which a unique
non-zero equilibrium exists. In the latter regime, the equilibrium is
strictly positive, and its trait distribution is determined by the
principal eigenvector of $L$. Thus, mutation, selection and regulation
combine to produce a localized mutation--selection balance whenever
$\lambda r(L)>1$.

The situation is fundamentally different when the mortality profile is
bounded. Here mutation is able to transport mass indefinitely through
the trait space without encountering increasingly severe selective
pressure. The special case $\mu_i\equiv\mu$ illustrates this phenomenon
in a particularly transparent manner. Following a suitable time change,
the trait proportions evolve exactly as the law of a continuous-time
random walk on $\Z$. Since the corresponding random walk is not
positive recurrent, no invariant probability distribution exists, and
every fixed trait proportion converges to $0$. Thus, a population may
persist at positive density while its trait distribution continues to
drift through the trait space. From a biological perspective, population
persistence and evolutionary localization are therefore distinct
phenomena.

The bounded-mortality regime remains the most significant unresolved
aspect of the model. Although the analysis establishes the absence of
equilibrium trait distributions in certain cases, a complete
description of the asymptotic behaviour is still lacking. Numerical
experiments suggest that, under some directional-selection profiles, the
trait distribution exhibits persistent evolutionary drift rather than
convergence to a stationary state. In particular, for the arctan
mortality profile, the mean trait value appears to satisfy
$\sum_i i\,\pi_i(t)\sim\gamma\log t$ for some positive
constant~$\gamma$; refer to Figure~\ref{fig:wave}.

\begin{figure}[H]
\centering
\includegraphics[width=0.8\textwidth]{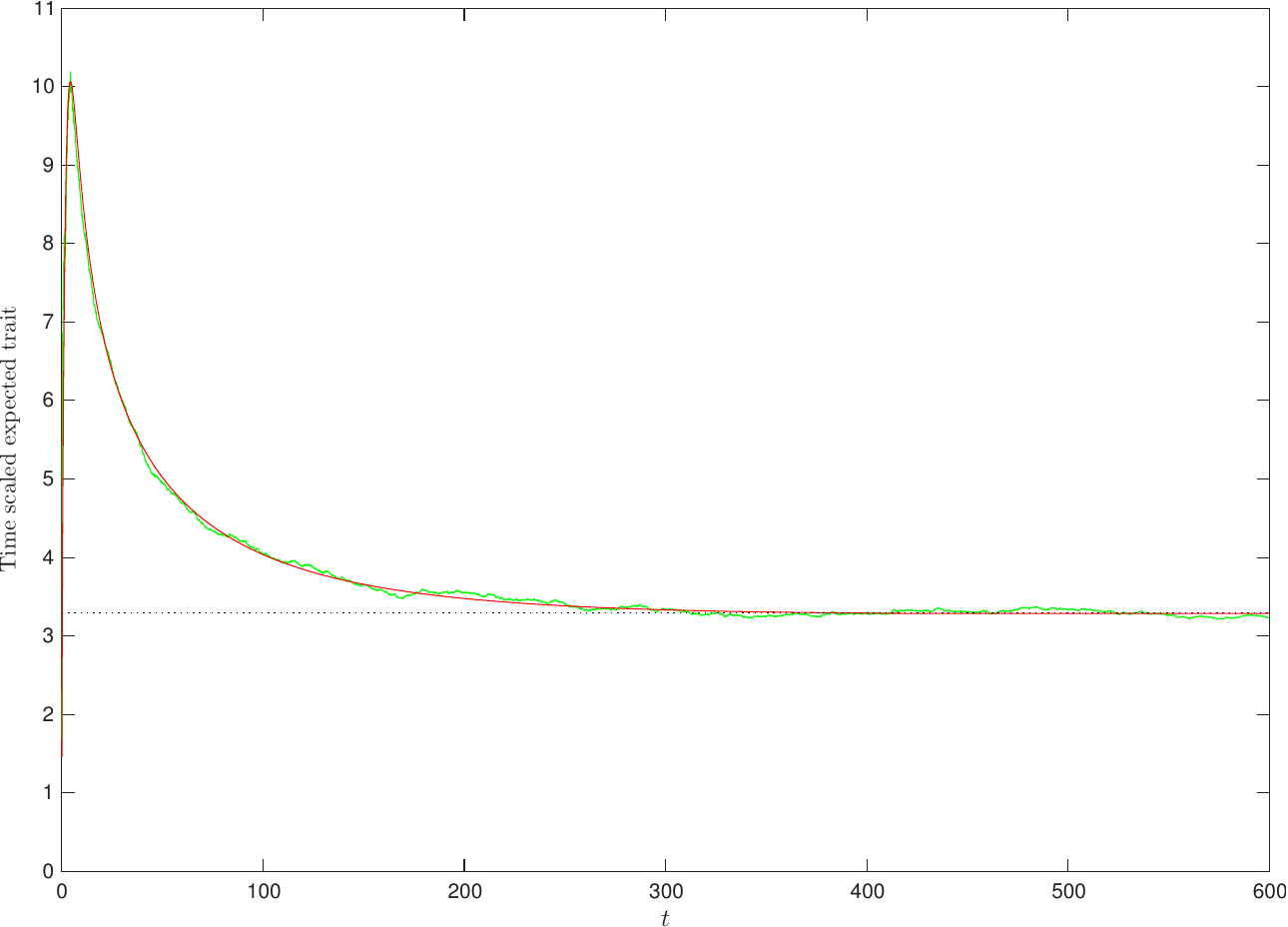}
\caption{Simulation and deterministic approximation for the arctan
mortality profile: $I=200$, $K=20$, $N=1000$, $\lambda=10$,
$r=0.4$, $\mu_{\infty}=0.1$, $c=0.9$, and $e=10$.
The expected trait value
$\sum_i i n_i(t)/\sum_k n_k(t)$ (green) and its deterministic
approximation $\sum_i i\pi_i(t)$ (red) are divided by $\log t$ for
$t>1$. The numerical results suggest that
${\sum_i i\pi_i(t)}/{\log t}\to\gamma\approx3.291$.}
\label{fig:wave}
\end{figure}

Although we do not have a rigorous derivation of this asymptotic law,
the observation is qualitatively consistent with the weakening of
selection at large trait values. For the arctan profile, recall that
$\mu_i-\mu_\infty\sim(c-\mu_\infty)e/(\pi i)$ as $i\to\infty$, so the
selective advantage associated with increasing trait values declines
only slowly as the population moves through the trait space. Selection
therefore continues to favour larger traits, but with progressively
diminishing strength. The numerical evidence suggests that adaptation
does not cease completely, but that its rate slows sufficiently to
produce approximately logarithmic growth of the mean trait.
It is noteworthy that this behaviour appears to depend on the tail of
the mortality profile. The logarithmic growth observed for the arctan
profile is not evident for the logistic profile, whose approach to its
limiting mortality is exponentially fast. This suggests that bounded
mortality profiles may themselves fall into distinct asymptotic
regimes, determined by the rate at which
$\mu_i-\mu_\infty$ decays. Whether logarithmic growth is genuine, and
more generally how the long-term behaviour depends on the tail of the
mortality profile, remain interesting open problems.

A second open question concerns stability of the strictly positive
equilibrium in the confining regime. Theorem~\ref{critical} establishes
existence and uniqueness whenever $\lambda r(L)>1$, but does not
address asymptotic stability. Numerical evidence nevertheless suggests
that the equilibrium is locally exponentially stable.
Because $D$ is unbounded, the linearization should be interpreted as an
operator with domain $\D(D)$ rather than as a bounded Fr\'echet
derivative on all of $\BS$. At the equilibrium $\bx^*$, the formal
linearized operator is
$$
A_*\bh
=
\lambda(1-m^*)P\bh-D\bh
-\lambda m(\bh)P\bx^*,
\qquad
\bh\in\D(D),
$$
where $m(\bh)=\sum_kh_k$. Since
$\lambda(1-m^*)=1/r(L)$, this may be written as the rank-one
perturbation
$$
A_*
=
A-\lambda(P\bx^*)\boldsymbol{1}^{\top},
\qquad
A=\frac{1}{r(L)}P-D,
$$
where $\boldsymbol{1}^{\top}\bh=\sum_kh_k$. Moreover,
$A\bphi=\bzero$, where $\bphi$ is the principal eigenvector of $L$.
Thus, $A$ has a neutral eigenvalue at the origin, and the rank-one term
acts non-trivially on the corresponding eigenspace. The numerical
evidence described below suggests that the resulting linearized
operator has a negative spectral bound.

For the stabilizing-selection example considered above, we computed the
eigenvalue with largest real part for finite-dimensional truncations of
the linearized operator. For truncation sizes $I=400$, $500$, $1000$,
and $2000$, this eigenvalue was approximately $-0.0642656807$, with
variation occurring only in the twelfth decimal place. Thus, the
computed spectral bounds are negative and remain separated from the
imaginary axis over the truncation sizes considered. Moreover, there is
no indication that the rightmost eigenvalue moves towards the imaginary
axis as the truncation size increases. The numerical evidence therefore
suggests that this behaviour is not merely a finite-dimensional
artifact, but reflects a genuine property of the underlying
infinite-dimensional system.

Taken together, these calculations provide consistent numerical evidence
that the linearized operator at the strictly positive equilibrium has a
negative spectral bound that is separated from the imaginary axis. Under
an appropriate linearized stability principle, this would imply local
exponential stability of the equilibrium. A rigorous proof would require
a more detailed analysis of the spectrum and of the semigroup generated
by the linearized operator, and is left for future work.

More broadly, the results indicate that density regulation alone cannot
be relied upon to maintain a localized trait distribution on an
unbounded trait space. Population regulation may ensure persistence at
the demographic level, but additional selective confinement is needed
to prevent continual evolutionary dispersal. In the confining regime
studied here, compactness of $D^{-1}P$ provides the mathematical
mechanism that yields a principal eigenvector and, whenever
$\lambda r(L)>1$, a localized equilibrium distribution. When this
compactness mechanism is absent, as in the bounded-mortality examples
considered above, mutation may instead drive persistent movement
through the trait space.

\section{AI usage}

The author used Microsoft Copilot (GPT-based AI) as a tool to assist
with phrasing, exposition, and checking intermediate calculations. The
author takes full responsibility for all content, including all
mathematical results.

\section{Appendix}
\label{appendix}

Here we collect the longer proofs of results in the order that they
were presented above.

\begin{proof}[Proof of Theorem~\ref{Thm:mildsolution}]
Define the operator $D$ on $\BS$ by $(D\bx)_i=\mu_i x_i$,
$i\in\Z$, with domain
$$
\D(D) :=\left\{\bx\in\BS: {\ts\sum_{i\in\Z}}\,\mu_i|x_i|<\infty\right\},
$$
and let $G(\bx)=\lambda(1-m(\bx))P\bx$. Then we may write~(\ref{PKP1}) as
$\dot{\bx}+D\bx=G(\bx)$, which is a perturbed linear system
(Pazy~\cite{Paz83}, Chapter 6). We will apply Theorem~1.4 of that
chapter (with $A=D$ and $f(t,\bx)=G(\bx)$, in Pazy's notation). First,
$-D$ generates a strongly continuous semigroup on $\BS$, with
domain specified above, namely
$(T(t),\, t\geq 0)$ given by $(T(t)\bx)_i=e^{-\mu_i t} x_i$, $i\in \Z$.
Indeed, $T(t)$ is a contraction, it satisfies the semigroup property,
and
$$
\|T(t)\bx-\bx\|_1 =\sum_{i\in\mathbb Z} |e^{-\mu_i t}-1|\,|x_i| \to 0 
\qquad\text{as }t\downarrow 0,
$$
by dominated convergence. Next, $P$ is bounded: for $\bx\in \BS$,
$$
\norm{P\bx}_1
=\sum_i\Bigl|\sum_k x_k p_{i-k}\Bigr|
\leq \sum_i\sum_k |x_k| p_{i-k}
=\sum_k |x_k|\sum_i p_{i-k}
=\norm{\bx}_1.
$$
We may now prove that $G$ is locally Lipschitz: for all 
$\bx,\by\in \BS$,
$$
G(\bx)-G(\by) = \lambda \left( (1-m(\bx)) P(\bx-\by) +
(m(\by)-m(\bx))P\by \right),
$$ 
and so, since 
$|m(\bx)-m(\by)|\leq \norm{\bx-\by}_1$, we have 
$$
\norm{G(\bx)-G(\by)}_1 
\leq \lambda (1 + \norm{\bx}_1 + \norm{\by}_1)\norm{\bx-\by}_1.
$$
Therefore, $\norm{G(\bx)-G(\by)}_1 \leq \lambda (1 +
2R)\norm{\bx-\by}_1$ on the ball $\{\bx\in \BS: \norm{\bx}_1\leq
R\}$. 
Thus, all the hypotheses of Pazy's~Theorem~1.4 are satisfied.
Consequently, the initial value problem has a unique maximal mild
solution $\bx\in C([0,t_{\max});\BS)$ satisfying the
variation-of-constants formula 
\begin{equation}
\bx(t) = T(t)\bx_0+\int_0^tT(t-s)G(\bx(s))\,ds, \qquad 0\leq t<t_{\max}.
\label{vofc}
\end{equation}
Using the explicit form of $T(t)$, this yields equation~\eqref{mildsolution}
componentwise.
\end{proof}

\begin{proof}[Proof of Theorem~\ref{staysinE}]
Suppose that $\bx_0\in E$. We first introduce an auxiliary equation in
which the effective birth coefficient is truncated below at zero.
Specifically, we let $G_+(\bx):=\lambda(1-m(\bx))^+P\bx$, where
$r^+:=\max\{r,0\}$, 
noting that $G_+$ maps the positive cone 
$$
K:=\ell_1^+(\Z) = \{\bx\in\BS:x_i\geq 0\text{ for all }i\in\Z\}
$$
into itself, and consider $\dot{\bx}+D\bx=G_+(\bx)$, where $D$ is
defined earlier in the proof of Theorem~\ref{Thm:mildsolution}.
We shall show that the mild
solution to this auxiliary equation has total mass at most~$1$. It will
then follow that the truncation is inactive, so that the auxiliary
solution is also our original mild solution~\eqref{mildsolution}.

Since $r\mapsto r^+$ is Lipschitz and $P$ is bounded on
$\BS$, $G_+$ is locally Lipschitz:
for $\bx,\by$ in the ball 
$B_R:=\{\bz\in\BS:\|\bz\|_1\leq  R\}$, we have 
\begin{align*} 
\|G_+(\bx)-G_+(\by)\|_1 &\leq  \lambda (1-m(\bx))^+ \|P(\bx-\by)\|_1 \\ 
&\qquad\qquad +\lambda\left|(1-m(\bx))^+-(1-m(\by))^+\right| \|P\by\|_1
\\ &\leq  \lambda(1+2R)\|\bx-\by\|_1.
\end{align*}
The argument used in the proof of
Theorem~\ref{Thm:mildsolution} therefore gives a unique maximal mild
solution $\by\in C([0,t_{\max}^+);\BS)$ satisfying
\begin{equation}
y_i(t)=e^{-\mu_i t}x_i(0)
+\int_0^t e^{-\mu_i(t-s)}
\lambda(1-m(\by(s)))^+(P\by(s))_i\,ds.
\label{auxiliarymildsolution}
\end{equation}
The semigroup $(T(t))_{t\geq 0}$ generated by $-D$ maps $K$ into itself,
as does $G_+$. 
Hence the usual Picard iterates belong to $K$.
Since $K$ is
closed in $\BS$, their limit also belongs to $K$. Thus
$\by(t)\geq 0$ for $0\leq t<t_{\max}^+$.

For arbitrary $0\leq r\leq t<t_{\max}^+$ 
formula~\eqref{auxiliarymildsolution} gives
$$
y_i(t)=e^{-\mu_i(t-r)}y_i(r)
+\int_r^t e^{-\mu_i(t-s)}
\lambda(1-m(\by(s)))^+(P\by(s))_i\,ds.
$$
All terms are non-negative, so Tonelli's theorem permits summation over
$i$. Since $e^{-\mu_i(t-r)}\leq 1$ and $P\by(s)\geq 0$, we have
$\sum_i e^{-\mu_i(t-r)}y_i(r)\leq m(\by(r))$ and
$$
\sum_i e^{-\mu_i(t-s)}(P\by(s))_i
\leq\sum_i(P\by(s))_i=m(\by(s)).
$$
Consequently,
\begin{equation}
m(\by(t))
\leq m(\by(r))
+\int_r^t\lambda(1-m(\by(s)))^+m(\by(s))\,ds,
\qquad 0\leq r\leq t<t_{\max}^+.
\label{shiftedmassinequality}
\end{equation}
We will show that $m(\by(t))\leq1$ throughout the interval of existence.
Suppose the contrary, and choose $t<t_{\max}^+$ such that
$m(\by(t))>1$. Since $m(\by(0))=m(\bx_0)\leq 1$ and
$t\mapsto m(\by(t))$ is continuous, the set
$\{s\in[0,t]:m(\by(s))=1\}$ is non-empty. Let
$$
\sigma:=\sup\{s\in[0,t]:m(\by(s))=1\}.
$$
Then $m(\by(\sigma))=1$ and $m(\by(s))>1$ for $\sigma<s\leq t$; if
$m(y(s))<1$ for some $s\in(\sigma,t)$, then continuity, together
with $m(y(t))>1$, would imply the existence of some $u\in(s,t)$ such
that $m(y(u))=1$, contradicting the definition of $\sigma$.

Hence $(1-m(\by(s)))^+=0$ on $(\sigma,t]$.
Applying \eqref{shiftedmassinequality} with $r=\sigma$ gives
$$
m(\by(t))
\leq m(\by(\sigma))
+\int_\sigma^t
\lambda(1-m(\by(s)))^+m(\by(s))\,ds
=1,
$$
contrary to the choice of $t$. Therefore
$m(\by(t))\leq 1$ for $0\leq t<t_{\max}^+$.
Since $\by(t)\geq 0$, it follows that $\|\by(t)\|_1=m(\by(t))\leq 1$,
$0\leq t<t_{\max}^+$. Because $G_+$ is locally Lipschitz on
$\BS$, the standard continuation criterion for semilinear
evolution equations (Theorem~1.4 of Pazy~\cite{Paz83}, Chapter 6) implies 
that, if $t_{\max}^+<\infty$, then
necessarily $\|\by(t)\|_1 \to \infty$ as $t\uparrow t_{\max}^+$.
Since the trajectory remains in the unit ball of
$\BS$, this is impossible. Hence $t_{\max}^+=\infty$.
Moreover, $m(\by(t))\leq 1$ implies
$(1-m(\by(t)))^+=1-m(\by(t))$ for all $t\geq 0$. Thus
\eqref{auxiliarymildsolution} is precisely the mild equation for the
original system. Hence $\by$ is a global mild solution of
\eqref{PKP1} with initial value $\bx_0$.
By uniqueness in Theorem~\ref{Thm:mildsolution}, $\by$ agrees with
$\bx$ on their common interval of existence. Since $\by$ is global,
maximality of $\bx$ implies that $t_{\max}=\infty$ and
$\bx(t)=\by(t)$ for all $t\geq 0$. Consequently,
$\bx(t)\geq 0$ and $m(\bx(t))\leq 1$ for all $t\geq 0$, and hence
$\bx(t)\in E$ for all $t\geq 0$.
\end{proof}

\noindent
\emph{Remark}.
The auxiliary equation is needed to avoid a circular argument. For
non-negative $\bx$, the original nonlinearity $G(\bx)$ is non-negative
only when $m(\bx)\leq 1$, so invariance of the positive cone cannot be
used to prove the mass bound before that bound is already known. By
contrast, $G_+$ maps the positive cone into itself without any prior
restriction on $m(\bx)$. Positivity of the auxiliary solution can
therefore be established first; the bound $m(\by(t))\leq 1$ then follows
and shows that the truncation is never active.

\medskip
\begin{proof}[Proof of Theorem~\ref{BL}]
Barbour and Luczak index their countable type space by $\mathbb Z_+$,
whereas ours is indexed by $\mathbb Z$. We may enumerate $\mathbb Z$
so that only finitely many types with bounded $|i|$ occur before any
given level. Their framework is therefore applicable after relabelling,
and we retain our indexing by $\mathbb Z$ throughout.
In their notation, the possible jump vectors are
$\J=\{\be_i,-\be_i:i\in\mathbb Z\}$.
For $\bxi\in\ell_1^+(\mathbb Z)$, define
\[
 \alpha_{\be_i}(\bxi)
 =
 \lambda(1-m(\bxi))^+(P\bxi)_i,
 \qquad
 \alpha_{-\be_i}(\bxi)=\mu_i\xi_i,
 \qquad i\in\mathbb Z,
\]
where $r^+=\max\{r,0\}$. If $\bn\in S$, then
$$
 N\alpha_{\be_i}(N^{-1}\bn)
 = \lambda\left(1-\frac{\|\bn\|}{N}\right) \sum_k n_kp_{i-k},
\qquad
 N\alpha_{-\be_i}(N^{-1}\bn) = \mu_i n_i.
$$
These are precisely the transition rates in~\eqref{TheModel}. The
positive part in the definition of $\alpha_{\be_i}$ is inactive on the
state space of the process, but ensures that the rate functions are
non-negative on the whole positive cone.

We first check conditions~(1.2) and~(1.3) of~\cite{BL12b}. Every jump
vector has $\ell_1$ norm equal to $1$, so condition~(1.2) holds with $J_*=1$.
Every coordinate of a jump vector is at least $-1$, and the rate of the
jump $-\be_i$ vanishes when $\xi_i=0$. If $\bxi\geq0$ has finite
support, then
\[
 \sum_{J\in\J}\alpha_J(\bxi)
 =
 \lambda(1-m(\bxi))^+m(\bxi)
 +
 \sum_i\mu_i\xi_i
 <\infty.
\]
Thus condition~(1.3) also holds. Notice that boundedness of the
mortality profile is not required here, since the final sum is finite
whenever $\bxi$ has finite support.

We next verify Assumption~2.1 of~\cite{BL12b}. Define
\[
 \nu(i):=(1+|i|)^q,\qquad
 S_r(\bxi):=\sum_i\nu(i)^r\xi_i,\qquad
 M_r:=\sum_\ell\nu(\ell)^rp_\ell.
\]
The function $\nu$ is at least $1$, and only finitely many traits satisfy
$\nu(i)\leq K$ for any fixed $K<\infty$. The assumed mutation moment
implies that $M_r<\infty$ for $0\leq r\leq11$.
Since $1+|k+\ell| \leq (1+|k|)(1+|\ell|)$,
we have $\nu(k+\ell)\leq\nu(k)\nu(\ell)$. Consequently,
\begin{equation}
 \sum_i\nu(i)^r(P\bxi)_i
 \leq
 M_rS_r(\bxi),
 \qquad 0\leq r\leq11.
 \label{app:weightedPbound}
\end{equation}

We take $r_{\max}^{(1)}=11$ and $r_{\max}^{(2)}=5$. For
$J=\be_i$ and $J=-\be_i$, respectively,
\[
 J^\top\nu_r=\nu(i)^r
 \qquad\hbox{and}\qquad
 J^\top\nu_r=-\nu(i)^r.
\]
It follows from~\eqref{app:weightedPbound} that the birth contribution
to the sum in condition~(2.5) of~\cite{BL12b} is finite for
$0\leq r\leq11$. The death contribution is finite because the state
has finite support. Thus condition~(2.5) holds.

The drift quantities defined in~(2.4) of~\cite{BL12b} satisfy
$$
 U_r(\bxi)
 =
 \lambda(1-m(\bxi))^+
 \sum_i\nu(i)^r(P\bxi)_i
 -
 \sum_i\mu_i\nu(i)^r\xi_i
 \leq
 \lambda M_rS_r(\bxi),
 \qquad 0\leq r\leq11,
$$
because the mortality contribution is non-positive. These estimates
give the inequalities required in~(2.6) of~\cite{BL12b}.
The corresponding quadratic-variation quantities satisfy
\[
 V_r(\bxi)
 =
 \lambda(1-m(\bxi))^+
 \sum_i\nu(i)^{2r}(P\bxi)_i
 +
 \sum_i\mu_i\nu(i)^{2r}\xi_i.
\]
Since $\mu_i\leq C_{\!\bmu}\nu(i)$, we obtain
$V_0(\bxi)\leq(\lambda+C_{\!\bmu})S_1(\bxi)$
and, for $1\leq r\leq5$,
\[
 V_r(\bxi)
 \leq
 \{\lambda M_{2r}+C_{\!\bmu}\}S_{2r+1}(\bxi).
\]
Thus~(2.7) of~\cite{BL12b} holds with $p(r)=2r+1$. Since
$p(r)\leq11$ for $1\leq r\leq5$, Assumption~2.1 is satisfied.

The infinitesimal drift associated with the rate functions has the
decomposition
\[
 \F_0(\bxi)=A\bxi+\F(\bxi),
 \qquad
 A=-D,
 \qquad
 \F(\bxi)=\lambda(1-m(\bxi))^+P\bxi.
\]
The matrix $A$ is diagonal, with $A_{ij}=-\mu_i\delta_{ij}$. Its
off-diagonal entries are therefore non-negative, and each of its
columns contains only one non-zero entry. Hence condition~(3.1)
of~\cite{BL12b} holds.

To avoid confusion with our mortality profile, denote the weight in
condition~(3.2) of~\cite{BL12b} by~$\omega$. We choose
$\omega(i)=1$ for every $i$. Then $A^\top\omega\leq0$, so
condition~(3.2) holds with $w=0$. The corresponding weighted norm is
the ordinary $\ell_1$ norm. The semigroup constructed from $A$ is
$(T(t)\bxi)_i=e^{-\mu_i t}\xi_i$,
which is the strongly continuous contraction semigroup already
introduced in the proof of Theorem~\ref{Thm:mildsolution}.

We next verify condition~(4.1) of~\cite{BL12b}. If
$\norm{\bxi}_1,\norm{\boldsymbol{\eta}}_1\leq R$, then the estimate used
in the proof of Theorem~\ref{staysinE} gives
\[
 \norm{\F(\bxi)-\F(\boldsymbol{\eta})}_1
 \leq
 \lambda(1+2R)\norm{\bxi-\boldsymbol{\eta}}_1.
\]
there.
Since $\norm{P\bxi}_1\leq\norm{\bxi}_1$ and $|m(\bxi)|\leq\norm{\bxi}_1$,
\[
 \norm{\mathcal F(\bxi)}_1
 \leq
 \lambda(1+\norm{\bxi}_1)\norm{\bxi}_1.
\]
Thus $\F$ maps $\ell_1(\mathbb Z)$ into itself and is locally Lipschitz

With these choices, equation~(4.2) of~\cite{BL12b} is
\[
 \bxi(t)
 =
 T(t)\bxi(0)
 +
 \int_0^t
 T(t-s)\lambda(1-m(\bxi(s)))^+P\bxi(s)\,ds.
\]
This is the auxiliary mild equation used in the proof of
Theorem~\ref{staysinE}. When $\bxi(0)\in E$, that theorem gives
$\bxi(t)\in E$ for every $t\geq0$, so the positive part is inactive.
The solution of equation~(4.2) of~\cite{BL12b} is therefore the mild
solution of~\eqref{PKP1}.

It remains to check Assumption~4.2 of~\cite{BL12b}. 
Since $\omega(i)=1$ and $\nu(i)\geq1$, the first part of
Assumption~4.2 holds with $r_\omega=1$, because
\[
 \sup_i\frac{\omega(i)}{\nu(i)}\leq 1.
\]
Moreover, $\omega(i)/\nu(i)\to0$ as $|i|\to\infty$, so we may take
$\widetilde r_\omega=1$.
For the second part, define
$\zeta(i):=\nu(i)^4=(1+|i|)^{4q}$. Then,
\[
 \sum_i\zeta(i)^{-1/2}
 =
 \sum_i(1+|i|)^{-2q}
 <\infty.
\]
Moreover, using the notation of~(2.23) of~\cite{BL12b},
$d(\be_i,\zeta) = d(-\be_i,\zeta) = \nu(i)^4$.
It follows that
$$
 \sum_{J\in\J}\alpha_J(\bxi)d(J,\zeta)
 \leq
 \lambda M_4S_4(\bxi)+C_{\!\bmu} S_5(\bxi)
 \leq
 (\lambda M_4+C_{\!\bmu})S_5(\bxi).
$$
Thus condition~(2.25) of~\cite{BL12b} holds with
$r(\zeta)=5$ and $b(\zeta)=1$. Finally,
\[
 \sup_i
 \frac{\omega(i)(|A_{ii}|+1)}{\sqrt{\zeta(i)}}
 =
 \sup_i\frac{\mu_i+1}{\nu(i)^2}
 \leq
 C_{\!\bmu}+1.
\]
This is condition~(4.12) of~\cite{BL12b}, and Assumption~4.2 is
verified.
The exponent defined in~(4.13) of~\cite{BL12b} is therefore
\[
 \rho(\zeta,\omega)
 =
 \max\{r(\zeta),p(r(\zeta)),\widetilde r_\omega\}
 =
 \max\{5,11,1\}
 =
 11.
\]
The deterministic initial condition in Theorem~\ref{BL} is therefore
$S_{11}(\bx_0)<\infty$. Moreover, since
$\bX(0)=N^{-1}\bn(0)$, the uniform bound
$S_{11}(\bX(0))\leq C_*$ is equivalent to
\[
 \sum_i\nu(i)^{11}n_i(0)\leq NC_*,
\]
which is the corresponding stochastic initial moment condition in
Theorem~4.7 of~\cite{BL12b}. Finally, the norm associated with
$\omega(i)=1$ is the ordinary $\ell_1$ norm. The conclusion of that
theorem therefore gives constants $K_1$, $K_2$, and $K_3$ with the
properties stated in Theorem~\ref{BL}.
\end{proof}

\begin{proof}[Proof of Theorem~\ref{thm:dmdt}]
Let $M=\sup_i\mu_i<\infty$. Then $D$ is a bounded linear operator on
$\BS$, since $\|D\bx\|_1\leq M\|\bx\|_1$. We have already shown that $P$
is bounded on $\BS$ and that $G(\bx)=\lambda(1-m(\bx))P\bx$ is locally
Lipschitz. Hence $F=G-D$ is locally Lipschitz from $\BS$ into itself.
Since $D$ is bounded, the semigroup generated by $-D$ is uniformly
continuous and differentiable in the operator norm.
Differentiation of the variation-of-constants
formula~\eqref{vofc} shows that the mild solution specified in
Theorem~\ref{Thm:mildsolution} is a classical solution: $\bx\in
C^1([0,\infty);\BS)$ and $\dot \bx(t)=F(\bx(t))$.

The map $m:\BS\to\R$, given by $m(\bx)=\sum_i x_i$, is a bounded linear
functional, since $|m(\bx)|\leq\sum_i|x_i|=\|\bx\|_1$.
Consequently,
$\dot m(t)=m(\dot \bx(t))=m(F(\bx(t)))$. Since $\bx(t)\in E$ by
Theorem~\ref{staysinE}, Tonelli's theorem gives
$$
\sum_i(P\bx(t))_i=\sum_i\sum_k x_k(t)p_{i-k}
=\sum_k x_k(t)\sum_i p_{i-k}=m(t).
$$
Therefore, \eqref{dmdt} follows on applying the linear functional
$m$ to both sides of
$\dot \bx(t)=\lambda(1-m(t))P\bx(t)-D\bx(t)$.
Finally, $m(0)=\sum_i x_i(0)=m_0$.
\end{proof}

\begin{proof}[Proof of Proposition~\ref{minmortality}]
Fix $j\in J^*$. If $x_i(0)=0$, then the invariance of the coordinate
hyperplanes implies that $x_i(t)=0$ for all $t\geq0$. If $x_i(0)>0$
and $i\notin J^*$, then $\mu_i>\mu^*$, and the ratio
formula~\eqref{ratio} gives
$$
x_i(t) = x_j(t)\frac{x_i(0)}{x_j(0)} e^{-(\mu_i-\mu^*)t}.
$$
Since $x_j(t)\leq m(t)\leq1$, it follows that
$$
x_i(t) \leq \frac{x_i(0)}{x_j(0)} e^{-(\mu_i-\mu^*)t},
$$
and hence $x_i(t)\to 0$. Thus $x_i(t)\to 0$ for every $i\notin J^*$.
Moreover, $0\leq x_i(t)\leq x_i(0)/x_j(0)$ for every
$i\notin J^*$, where the upper bound is $0$ if $x_i(0)=0$, and
$$
\sum_{i\notin J^*}x_i(0)/x_j(0)\leq m(0)/x_j(0)<\infty.
$$
Dominated convergence therefore gives $\sum_{i\notin J^*}x_i(t)\to 0$.

If $i,j\in J^*$, then $\mu_i=\mu_j=\mu^*$, so the ratio formula
shows that $x_i(t)/x_j(t)=x_i(0)/x_j(0)$. Consequently, the relative
proportions within $J^*$ are preserved:
$$
\frac{x_i(t)}{\sum_{k\in J^*}x_k(t)}
=
\frac{x_i(0)}{\sum_{k\in J^*}x_k(0)},
\qquad i\in J^*.
$$

It remains to determine the limiting total mass in $J^*$. Define
$m_{J^*}(t):=\sum_{j\in J^*}x_j(t)$ and
$r(t):=\sum_{i\notin J^*}x_i(t)$. Then
$m(t)=m_{J^*}(t)+r(t)$ and, by the preceding argument, $r(t)\to 0$.
For $i\in J^*$, equation~\eqref{eq:nomutation} gives $\dot
x_i(t)=\lambda x_i(t)(\rho^*-m(t))$. Summing over $i\in J^*$ (which is
justified by absolute summability) therefore gives $\dot
m_{J^*}(t)=\lambda m_{J^*}(t) (\rho^*-m_{J^*}(t)-r(t))$.
Let $\varepsilon\in(0,\rho^*)$. Since $r(t)\to 0$, there exists
$t_0\geq0$ such that $0\leq r(t)<\varepsilon$ for all $t\geq t_0$.
Therefore,
$$
\lambda m_{J^*}(t)\bigl(\rho^*-\varepsilon-m_{J^*}(t)\bigr)
\leq\dot m_{J^*}(t)
\leq\lambda m_{J^*}(t)\bigl(\rho^*-m_{J^*}(t)\bigr),
\qquad t\geq t_0.
$$
Comparison with the corresponding Verhulst equations, with the same
initial value at $t_0$, yields
$\liminf_{t\to\infty}m_{J^*}(t)\geq\rho^*-\varepsilon$ and
$\limsup_{t\to\infty}m_{J^*}(t)\leq\rho^*$. Since $\varepsilon>0$
may be chosen arbitrarily small, it follows that
$m_{J^*}(t)\to\rho^*$.
Hence, for each $i\in J^*$,
$$
x_i(t)
=m_{J^*}(t)\frac{x_i(t)}{\sum_{j\in J^*}x_j(t)}
\longrightarrow
\rho^*\frac{x_i(0)}{\sum_{j\in J^*}x_j(0)}.
$$
Together with $x_i(t)\to 0$ for $i\notin J^*$, this proves that
$x_i(t)\to x_i^*$ for every $i\in\Z$, with $\bx^*$ as stated.
\end{proof}

\begin{proof}[Proof of Proposition~\ref{notattained}]
For each $i$ such that $x_i(0)>0$, equation~\eqref{eq:nomutation},
viewed as a scalar linear equation, implies that $x_i(t)>0$ for all
$t\geq0$. Dividing by $x_i(t)$ and integrating from $0$ to $t$ gives
$$
x_i(t)=x_i(0)e^{-\mu_i t}B(t),
\qquad
B(t):=\exp\left(\int_0^t\lambda(1-m(s))\,ds\right).
$$
The same formula holds when $x_i(0)=0$, since in that case
$x_i(t)=0$ for all $t\geq0$. Let
$Z(t):=\sum_i x_i(0)e^{-\mu_i t}$, so that $m(t)=B(t)Z(t)$.
Since $B(t)>0$, set $Y(t):=B(t)^{-1}$. From
$B^{\myprime}(t)=\lambda(1-m(t))B(t)$ and $m(t)Y(t)=Z(t)$, we obtain
$Y^{\myprime}(t)=-\lambda Y(t)+\lambda Z(t)$, with $Y(0)=1$. Hence
$$
Y(t)=e^{-\lambda t}
\left(1+\lambda\int_0^t e^{\lambda s}Z(s)\,ds\right),
$$
and therefore
\begin{equation}
m(t)
=
\frac{e^{\lambda t}Z(t)}
{1+\lambda\int_0^t e^{\lambda s}Z(s)\,ds}.
\label{eq:massrepresentation}
\end{equation}
Let $\delta_i:=\mu_i-\mu^*$ and
$\beta:=\lambda-\mu^*>0$, and write
$Z(t)=e^{-\mu^*t}a(t)$, where
$a(t):=\sum_i x_i(0)e^{-\delta_i t}$. Equation
\eqref{eq:massrepresentation} then becomes
\begin{equation}
m(t)
=
\frac{e^{\beta t}a(t)}
{1+\lambda\int_0^t e^{\beta s}a(s)\,ds}.
\label{eq:massrepresentation2}
\end{equation}
We first show that $-a^{\myprime}(t)/a(t)\to 0$. For every $\varepsilon>0$, let
$c_\varepsilon:=\sum_{\{i:\delta_i\leq\varepsilon/2\}}x_i(0)$.
Since $\mu^*$ is the infimum of the mortality rates on the initial
support, $c_\varepsilon>0$, and hence
$a(t)\geq c_\varepsilon e^{-\varepsilon t/2}$.
We next justify differentiating the series defining $a(t)$. Fix
$t_0>0$. For $t\geq t_0$,
$x_i(0)e^{-\delta_i t}\leq x_i(0)$, while
$$
\delta_i x_i(0)e^{-\delta_i t}
\leq \frac{x_i(0)}{e t_0},
$$
because $\sup_{u\geq0}ue^{-ut_0}=1/(e t_0)$. Since
$\sum_i x_i(0)<\infty$, both the series defining $a$ and the series of
its derivatives converge uniformly on $[t_0,\infty)$. Termwise
differentiation is therefore justified, and
$$
-a^{\myprime}(t)=\sum_i\delta_i x_i(0)e^{-\delta_i t},
\qquad t>0.
$$
Split this sum according to whether $\delta_i\leq\varepsilon$ or
$\delta_i>\varepsilon$. The first part is at most $\varepsilon a(t)$.
For all sufficiently large $t$, the function
$u\mapsto ue^{-ut}$ is decreasing on $[\varepsilon,\infty)$.
Consequently, if $\delta_i>\varepsilon$, then
$\delta_i e^{-\delta_i t}\leq\varepsilon e^{-\varepsilon t}$, and
hence
$-a^{\myprime}(t) \leq \varepsilon a(t)+\varepsilon m(0)e^{-\varepsilon t}$.
Using $a(t)\geq c_\varepsilon e^{-\varepsilon t/2}$, we obtain
$$
0\leq-\frac{a^{\myprime}(t)}{a(t)}
\leq
\varepsilon+
\frac{\varepsilon m(0)}{c_\varepsilon}
e^{-\varepsilon t/2}
$$
for all sufficiently large $t$. Letting first $t\to\infty$ and then
$\varepsilon\downarrow0$ shows that $-a^{\myprime}(t)/a(t)\to 0$.

Now set $f(t):=e^{\beta t}a(t)$. Then
$f^{\myprime}(t)/f(t)=\beta+a^{\myprime}(t)/a(t)\to\beta$. Since $\beta>0$, we have
$f^{\myprime}(t)/f(t)\geq\beta/2$ for all sufficiently large $t$. It follows
that $f(t)\to\infty$. Moreover, $f^{\myprime}(t)>0$ eventually, so
l'H{\^o}pital's rule gives
$$
\frac{\int_0^t f(s)\,ds}{f(t)} \to \frac{1}{\beta}.
$$
Substitution into \eqref{eq:massrepresentation2} yields
$m(t)\to {\beta}/{\lambda} = 1-{\mu^*}/{\lambda} = \rho^*$.
Finally,
$\pi_i(t)=x_i(0)e^{-\delta_i t}/a(t)$. If $x_i(0)=0$, then
$\pi_i(t)=0$ for all $t\geq0$. Otherwise, $\delta_i>0$ because the
infimum is not attained on the initial support. Choose
$\varepsilon\in(0,\delta_i)$ and let
$d_\varepsilon:=\sum_{\{j:\delta_j\leq\varepsilon\}}x_j(0)$.
Again, the definition of $\mu^*$ implies that $d_\varepsilon>0$, and
$a(t)\geq d_\varepsilon e^{-\varepsilon t}$. Therefore,
$$
0\leq\pi_i(t)
\leq
\frac{x_i(0)}{d_\varepsilon}
e^{-(\delta_i-\varepsilon)t}
\to 0.
$$
Thus, $\pi_i(t)\to 0$ for every $i\in\Z$. Since
$x_i(t)=m(t)\pi_i(t)$ and $m(t)\to\rho^*$, it follows that
$x_i(t)\to 0$ for every $i\in\Z$.
\end{proof}

\begin{proof}[Proof of Lemma~\ref{lemma:compact}]
Define $M$ on $\BS$ by $(M\bx)_i=x_i/\mu_i$, $i\in\Z$. Since
$\mu_i\to\infty$ as $|i|\to\infty$, $M$ is the operator-norm limit
of the finite-rank operators $M_n$ defined by
$$
(M_n\bx)_i=
\begin{cases}
x_i/\mu_i, & |i|\leq n,\\
0, & |i|>n.
\end{cases}
$$
Indeed,
$$
\|(M-M_n)\bx\|_1
\leq
\left(\sup_{|i|>n}\frac1{\mu_i}\right)\|\bx\|_1,
$$
so that $\|M-M_n\|\leq\sup_{|i|>n}\mu_i^{-1}$. Conversely, for
each $j$ with $|j|>n$,
$\|(M-M_n)\be_j\|_1=1/\mu_j$. Hence
$$
\|M-M_n\|
=
\sup_{|i|>n}\frac1{\mu_i}
\to 0.
$$
Since each $M_n$ has finite-dimensional range, it is compact, and
therefore $M$ is compact. Finally, $L=MP$, where $P$ is bounded on
$\BS$, so $L$ is compact.
\end{proof}

\begin{proof}[Proof of Lemma~\ref{lemma:ideal}]
Suppose that the coordinate ideal $I_A$ is invariant under $L$, and
that $A\neq\emptyset$. Choose $j\in A$. Then $\be_j\in I_A$, and hence
$L\be_j\in I_A$. Since
$(L\be_j)_i={p_{i-j}}/{\mu_i}$, $i\in\Z$,
and $\mu_i>0$, it follows that $p_{i-j}>0$ implies $i\in A$.
Thus, $A$ contains every trait that can be reached from $j$ in one
mutation step.
Applying the same argument successively shows that $A$ contains every
trait that can be reached from $j$ by a finite sequence of mutation
steps. Since $\bp$ is irreducible, every trait in~$\Z$ can be reached
from~$j$ in this way. Hence $A=\Z$. Therefore, the only coordinate
ideals invariant under $L$ are $I_\emptyset=\{\bzero\}$ and
$I_\Z=\BS$, and so $L$ is ideal-irreducible.
\end{proof}

\begin{proof}[Proof of Theorem~\ref{critical}]
Let $\bx^*$ be a non-zero equilibrium, and set
$m^*=\sum_i x_i^*$. Then
$$
L\bx^*
=
\frac{1}{\lambda(1-m^*)}\bx^*.
$$
Since $\bx^*\in K\setminus\{\bzero\}$,
Theorem~\ref{CWW}(iii) implies that
$$
\frac{1}{\lambda(1-m^*)}=r(L).
$$
Hence $m^*=1-1/\{\lambda r(L)\}$. Since $m^*>0$, it follows that
$\lambda r(L)>1$. This proves~(i).

Now suppose that $\lambda r(L)>1$. Let
$\bphi\in\qit(K)$ be the eigenvector corresponding to $r(L)$ whose
existence is guaranteed by Theorem~\ref{CWW}(ii). As observed above,
$\bphi$ gives rise to the equilibrium
$$
\bx^*
=
\left(1-\frac{1}{\lambda r(L)}\right)
\frac{\bphi}{\sum_i\phi_i}.
$$
Its total mass is $1-1/\{\lambda r(L)\}\in(0,1)$, so
$\bx^*\in E$. Since $\bphi\in\qit(K)$, this equilibrium is strictly
positive.

To prove uniqueness, let $\by^*$ be any non-zero equilibrium. Then
$$
L\by^*
=
\frac{1}{\lambda(1-m(\by^*))}\by^*.
$$
Since $\by^*\in K\setminus\{\bzero\}$,
Theorem~\ref{CWW}(iii) gives
$1/\{\lambda(1-m(\by^*))\}=r(L)$, and hence
$m(\by^*)=1-1/\{\lambda r(L)\}$. Moreover, $\by^*$ is an eigenvector
of $L$ corresponding to $r(L)$. Since $r(L)$ is algebraically simple,
its eigenspace is one-dimensional, so $\by^*$ is a scalar multiple of
$\bphi$. The condition
$m(\by^*)=1-1/\{\lambda r(L)\}$ uniquely determines this scalar.
Therefore $\by^*=\bx^*$.
\end{proof}

\begin{proof}[Proof of Proposition~\ref{truncation}]
By Lemma~\ref{lemma:compact}, $L$ is compact on $\BS$. Since
$$
L-L_I
=
(\Id-Q_I)L
+
Q_IL(\Id-Q_I),
$$
and $\|Q_I\|=1$, it is enough to show that both
$\|(\Id-Q_I)L\|$ and
$\|L(\Id-Q_I)\|$ tend to $0$.

Recall that $L=MP$, where $(M\bx)_i=x_i/\mu_i$. For $\bx\in\BS$,
$$
\|(\Id-Q_I)M\bx\|_1
\leq
\left(\sup_{|i|>I}\frac1{\mu_i}\right)\|\bx\|_1.
$$
Consequently,
$\|(\Id-Q_I)L\|
\leq(\sup_{|i|>I}\mu_i^{-1})\|P\|\to 0$.

We next estimate the second term. Since
$L(\Id-Q_I)$ is positive, the column-sum formula
for positive operators on $\ell_1$ gives
$$
\|L(\Id-Q_I)\|
=
\sup_{|k|>I}\|L\be_k\|_1
=
\sup_{|k|>I}\sum_{j\in\Z}\frac{p_j}{\mu_{j+k}}.
$$
Let $\varepsilon>0$, and
choose $J$ sufficiently large that
$\sum_{|j|>J}p_j<\muinf\varepsilon/2$. Since
$\mu_{j+k}\geq\muinf$, the contribution from $|j|>J$ is, uniformly in
$k$, bounded above by
$$
\sum_{|j|>J}\frac{p_j}{\mu_{j+k}}
\leq
\frac{1}{\muinf}\sum_{|j|>J}p_j
<
\frac{\varepsilon}{2}.
$$
If $|k|>I$ and $|j|\leq J$, then
$|j+k|>I-J$, and hence the contribution from $|j|\leq J$ is
at most
$$
\left(\sup_{|i|>I-J}\frac1{\mu_i}\right)
\sum_{|j|\leq J}p_j
\leq
\sup_{|i|>I-J}\frac1{\mu_i}.
$$
This is less than $\varepsilon/2$ for all sufficiently large
$I$, because $\mu_i\to\infty$ as $|i|\to\infty$.
Therefore
$\|L(\Id-Q_I)\|\to 0$, and combining the two
estimates yields $\|L_I-L\|\to 0$.

By Theorem~\ref{CWW}(i)--(ii), $r(L)>0$ is an algebraically simple
eigenvalue of $L$. Since $L$ is compact, $r(L)$ is an isolated point of
$\sigma(L)$. Choose a positively oriented circle $\Gamma$ contained in
the resolvent set of $L$ and enclosing $r(L)$ but no other point of
$\sigma(L)$. Since $L_I\to L$ in operator norm, the stability theorem
for separated parts of the spectrum implies that, for all sufficiently
large $I$, $\Gamma$ also separates a part of $\sigma(L_I)$ whose total
algebraic multiplicity is equal to that of $r(L)$, namely one. The
continuity of a finite system of eigenvalues then implies that this part
of $\sigma(L_I)$ consists of a single eigenvalue $\eta_I$ satisfying
$\eta_I\to r(L)$. See Theorem~3.16 and Sections~3.4--3.5 of Chapter~IV
of~\cite{Kato95}. Since $|\eta_I|\leq r(L_I)$, it follows that
$$
\liminf_{I\to\infty}r(L_I) \geq \lim_{I\to\infty}|\eta_I| = r(L).
$$
For the reverse inequality, observe that $0\leq L_I\leq L$. Indeed, if
$\bx\geq0$, then $Q_I\bx\leq\bx$, and positivity of $L$ gives
$LQ_I\bx\leq L\bx$. Since $Q_I\by\leq\by$ for every $\by\geq0$, we
therefore have
$$
L_I\bx = Q_ILQ_I\bx \leq LQ_I\bx \leq L\bx.
$$
Since both $L_I$ and $L$ are positive, it follows inductively that
$0\leq L_I^n\leq L^n$, and hence $\|L_I^n\|\leq\|L^n\|$ for every
$n\geq1$. The spectral-radius formula therefore gives $r(L_I)\leq r(L)$.
Combining the two inequalities yields $r(L_I)\to r(L)$.
\end{proof}


\begin{thebibliography}{99}

\bibitem{Anselone71}
P.~M. Anselone.
\newblock {\em Collectively Compact Operator Approximation Theory and
  Applications to Integral Equations}.
\newblock Prentice-Hall, Englewood Cliffs, NJ, 1971.

\bibitem{BL12b}
A.D. Barbour and M.J. Luczak.
\newblock A law of large numbers approximation for {M}arkov population
  processes with countably many types.
\newblock {\em Probab. Theory Related Fields}, 153:727--757, 2012.

\bibitem{Burger2003}
Reinhard B{\"u}rger.
\newblock {\em The Mathematical Theory of Selection, Recombination, and
  Mutation}.
\newblock Wiley, Chichester, 2000.

\bibitem{ChampagnatFerriereMeleard}
Nicolas Champagnat, R\'egis Ferri\`ere, and Sylvie M\'el\'eard.
\newblock Unifying evolutionary dynamics: from individual stochastic processes
  to macroscopic models.
\newblock {\em Theoretical Population Biology}, 69:297--321, 2006.

\bibitem{CWW20}
Kung-Ching Chang, Xuefeng Wang, and Xie Wu.
\newblock On the spectral theory of positive operators and {PDE} applications.
\newblock {\em Discrete Contin. Dyn. Syst.}, 40(6):3171--3200, 2020.

\bibitem{Christiansen}
Freddy~B. Christiansen.
\newblock Density-dependent selection.
\newblock In Rama~S. Singh and Marcy~K. Uyenoyama, editors, {\em The Evolution
  of Population Biology}, pages 139--155. Cambridge University Press,
  Cambridge, 2004.

\bibitem{CrowKimura1970}
J.F. Crow and M.~Kimura.
\newblock {\em An Introduction to Population Genetics Theory}.
\newblock Harper and Row, New York, 1970.

\bibitem{Dei85}
Klaus Deimling.
\newblock {\em Nonlinear Functional Analysis}, volume 113 of {\em Graduate
  Texts in Mathematics}.
\newblock Springer-Verlag, Berlin, Heidelberg, 1985.

\bibitem{Diekmann2021}
O.~Diekmann, J.~A.~P. Heesterbeek, and J.~A.~J. Metz.
\newblock {\em The Legacy of Kermack and McKendrick}.
\newblock SIAM, Philadelphia, 2021.

\bibitem{Ewens2004}
Warren~J. Ewens.
\newblock {\em Mathematical Population Genetics I: Theoretical Introduction},
  volume~27 of {\em Interdisciplinary Applied Mathematics}.
\newblock Springer, 2nd edition, 2004.

\bibitem{Fat99}
H.O. Fattorini.
\newblock {\em Infinite Dimensional Differential Equations in Banach Spaces}.
\newblock North-Holland, 1999.

\bibitem{IK06}
S.~Inusah and T.J. Kozubowski.
\newblock A discrete analogue of the {L}aplace distribution.
\newblock {\em J. Statist. Plann. Inference}, 136:1090--1102, 2006.

\bibitem{Kato95}
Tosio Kato.
\newblock {\em Perturbation Theory for Linear Operators}.
\newblock Classics in Mathematics. Springer-Verlag, Berlin, 2nd edition, 1995.

\bibitem{Kur70}
T.G. Kurtz.
\newblock Solutions of ordinary differential equations as limits of pure jump
  {M}arkov processes.
\newblock {\em J. Appl. Probab.}, 7:49--58, 1970.

\bibitem{Kur71}
T.G. Kurtz.
\newblock Limit theorems for sequences of jump {M}arkov processes approximating
  ordinary differential processes.
\newblock {\em J. Appl. Probab.}, 8:344--356, 1971.

\bibitem{Lande2009}
Russell Lande, Steinar Engen, and Bernt-Erik Saether.
\newblock An evolutionary maximum principle for density-dependent population
  dynamics in a fluctuating environment.
\newblock {\em Philosophical Transactions of the Royal Society B},
  364:1511--1518, 2009.

\bibitem{LawlerLimic2010}
Gregory~F. Lawler and Vlada Limic.
\newblock {\em Random Walk: A Modern Introduction}, volume 123 of {\em
  Cambridge Studies in Advanced Mathematics}.
\newblock Cambridge University Press, Cambridge, 2010.

\bibitem{PN2002}
K.M. Page and M.A. Nowak.
\newblock Unifying evolutionary dynamics.
\newblock {\em J. Theor. Biol.}, 219:93--98, 2002.

\bibitem{Parsons2010}
Todd~L. Parsons, Christopher Quince, and Joshua~B. Plotkin.
\newblock Some consequences of demographic stochasticity in population
  genetics.
\newblock {\em Genetics}, 185:1345--1354, 2010.

\bibitem{Paz83}
Amnon Pazy.
\newblock {\em Semigroups of Linear Operators and Applications to Partial
  Differential Equations}, volume~44 of {\em Applied Mathematical Sciences}.
\newblock Springer-Verlag, New York, 1983.

\bibitem{Pol24}
P.K. Pollett.
\newblock An {SIS} epidemic model with individual variation.
\newblock {\em Math. Biosci. Eng.}, 21:5446--5455, 2024.

\bibitem{Smith1993}
T.B. Smith.
\newblock Disruptive selection and the genetic basis of bill size polymorphism
  in the {A}frican finch {Pyrenestes}.
\newblock {\em Nature}, 363:618--620, 1993.

\bibitem{TravisEtAl2023}
Joseph Travis, Ronald~D. Bassar, Tim Coulson, David Reznick, and Matthew Walsh.
\newblock Density-dependent selection.
\newblock {\em Annual Review of Ecology, Evolution, and Systematics},
  54:85--105, 2023.

\bibitem{Ver1838}
P.F. Verhulst.
\newblock Notice sur la loi que la population suit dans son accroissement.
\newblock {\em Corr. Math. et Phys.}, X:113--121, 1838.

\bibitem{Webb1985}
Glenn~F. Webb.
\newblock {\em Theory of Nonlinear Age-Dependent Population Dynamics}.
\newblock Marcel Dekker, New York, 1985.

\bibitem{WD71}
G.H. Weiss and M.~Dishon.
\newblock On the asymptotic behaviour of the stochastic and deterministic
  models of an epidemic.
\newblock {\em Math. Biosci.}, 11:261--265, 1971.

\end{thebibliography}

\end{document}